\def\dOi{10(3:23)2014}
\subjclass{I.2.4, F.4.3, F.2.2}
\newcommand{\coloneq}[0]{\mathrel{\mathop:}=}
\def\nopar{{\parfillskip0pt\parskip0pt\par\noindent}}
\newcommand{\TS}[0]{\mathit{TS}}
\newcommand{\pppp}[1]{\underset{#1}{\curvearrowright}}
\newcommand{\RRR}[1]{{\mathfrak{R}(#1)}}
\newcommand{\DAG}{{\mbox{DAG}}}
\newcommand{\AHP}{{\mbox{AHP}}}
\newcommand{\DAGcut}[2]{{{#1}_{|#2}}}
\newcommand{\tauE}[2]{{{#1}_{#2}}}
\newcommand{\pipart}[3]{{{#1}_{#2}^{#3}}}
\newcommand{\vertexOF}[1]{{\mathcal{V}(#1)}}
\newcommand{\pidigraph}[2]{{\mathcal{G}(#1,#2)}}
\newcommand{\FAS}{\mbox{FAS}}
\newcommand{\VS}{\mbox{VC}}
\newcommand{\pipath}[2]{{\mathfrak{h}^{#1}(#2)}}
\newcommand{\gadget}{\mathcal{N}}
\newcommand{\gadgetV}{\mathcal{V}}
\newcommand{\gadgetE}{\mathcal{E}}
\newcommand{\G}{\mathcal{G}}
\newcommand{\V}{\mathbb{V}}
\newcommand{\E}{\mathbb{E}}
\newcommand{\VV}{{\mathbb{V}_\G}}
\newcommand{\EE}{{\mathbb{E}_\G}}
\newcommand{\F}{{\mathcal{F}}}
\newcommand{\DD}[2]{\mathcal{P}(#1,#2)}
\newcommand{\e}{{e}}
\newcommand{\ee}[1]{\e(\!#1\!)}
\newcommand{\eT}[1]{\e^{-1}\!(#1)}
\newcommand{\LL}{{{\boldsymbol{\swarrow}}}}
\newcommand{\RR}{{{\boldsymbol{\searrow}}}}
\newcommand{\miz}[1]{\texttt{#1}}\newcommand{\C}[0]{$^{-1}$}\newcommand{\M}[0]{$\cdot$}\newcommand{\QQ}[1]{#1:\:}
\tikzset{
    Referencja/.style={-latex,draw=black,thick},
    Variable/.style={-latex,draw=black,thick,dashed},
    mynodes/.style={circle, draw, minimum size=1mm,
         draw=black},
    myrectanglenodes/.style={
    rectangle,rounded corners, draw, minimum size=5mm,
         draw=black},
    mycirclenodes/.style={
    circle, draw, minimum size=8mm,
         draw=black}
         }
\def\s{0.4cm}
\def\p{0.125cm}
\def\g{0.77cm}
\newcommand{\picut}[2]{{{#1}_{|#2}}}
\newcommand{\pipathcut}[3]{{\mathfrak{h}^{#1}_{|#2}(#3)}}
\begin{document}

\title[Improving legibility of natural deduction proofs is not trivial]{Improving legibility of natural deduction proofs\\is
not trivial}
\author{Karol P\k{a}k}

\email{pakkarol@uwb.edu.pl}
\address{University of Bialystok, Institute of Computer Science, Bialystok, Poland}
\thanks{The paper has been financed by the resources of the Polish National Science Centre granted by decision no
DEC-2012/07/N/ST6/02147}

\begin{abstract}
In formal proof checking environments such as Mizar it is not merely the validity of mathematical formulas that is evaluated in
the process of adoption to the body of accepted formalizations, but also the readability of the proofs that witness validity.
As in case of
  computer programs, such proof scripts may sometimes be more and sometimes be less readable. To better understand the notion
  of readability of formal proofs, and to assess and improve their readability, we propose in this paper a method of improving
  proof readability based on Behaghel's First Law of sentence structure.  Our method maximizes the number of local references
  to the directly preceding statement in a proof linearisation.  It is shown that our optimization method is NP-complete.
 \end{abstract}

\keywords{Operations on languages, Legibility of proofs, Acyclic partition}

\maketitle

\section{Introduction}

\subsection{Motivations}

The readability of formal proofs has significantly influenced
 development of formalization. The idea of formalization was recognized by David Hilbert
 in his program of formalization of mathematics\,---\,research in
 his program led, in particular, to the concept of the universal
 Turing machine.
Hilbert's project in his most idealistic form collapsed due to G$\ddot{\mbox{o}}$del's incompleteness theorems. In spite of
them, a group of mathematicians under a single pen name Nicolas Bourbaki tried to
 formalize an exposition of modern advanced mathematics using
 first-order logic.
 This project was also abandoned because the resulting formal proofs turned out to be too obscure and
the formalization in such a detailed form started to be perceived as hardly readable \cite{Snapper} and as a result
impractical.
 Moreover,
checking the correctness of formal proofs turned out to be as difficult as the reading comprehension of these proofs.
 Not only Bourbaki's project
was faced with the readability problems. The earlier efforts of Whitehead and Russell in the formalization of mathematics had
the same problems. The formal definitions and proofs that they created were deemed cryptic as the approach to mathematics they
used was extremely rigorous and precise. The formalization has become feasible only after increasing availability of computers.
These problems are described in more detail by Zammit \cite{PhdZammit}.

Proof development in formalized mathematical frameworks is similar to program development. In both cases the content is created
in an artificial language that has precise meaning executed by a computer system. Therefore it comes as no surprise that both
activities have much in common. Since readability of programs has a significant impact on their maintainability
\cite{Aggarwal02} it is strongly expected that readability of proof scripts has a significant impact on proof maintainability.
Indeed the maintenance tasks take place in current formal proof development, e.g.\ novice users of the formal proof development
tools often follow the existing formalizations, and the formalizations are subject to refinement for instance when a
generalized version of a theorem is to be proved. This concerns especially systems such as Isabelle/Isar \cite{Isar11} or
Mizar
\cite{Naumowicz:2009:BOM} where the proof script language is close to the natural
language.

In this paper, we focus on another, still underdeveloped approach. Models of cognitive perception of read material stress that
locality of reference is a significant factor in the process of understanding. This can be summarized with the Behaghel's First
Law that can be stated as follows, {\em elements that belong close together intellectually will also be placed close together}
\cite{Behaghel1909}. This law of local reference is also recognized in
{modern scientific literature concerning human perception} \cite{Gibson98,RLevy}. The significance of the locality of reference
is exploited in the current study. We focus on a particular proof step in a proof script. The step uses some information that
was derived before in the proof. This information can be located somewhere far away in the proof or within a close neighborhood
of the step. With Behaghel's law in mind, we assume that a step where at least part of
 information it requires is available in the directly preceding
step is more comprehensive than a step in which all information is far away in the proof {(as a significant part of assumptions
can still be in the, so called, {\em working memory}, see e.g.\ \cite[Chapter~1]{Cowan98}). }

\par The locality of reference principle is the basis for a method of proof
distillation that is presented in this paper. Since the steps that refer to the preceding step are perceived as more
comprehensive, the procedure transforms the proof scripts so that the number of steps that violate the requirement is minimal.
The transformation consists of two actions.  The first one translates a linear representation of a proof script to a
representation in the form of a directed acyclic graph and the second one chooses a different linearisation of the proof so
that the number of proof steps that are,
 according to the presented principles, difficult to comprehend
 is minimal.

\subsection{Related works}

The experience of big proof formalization developments shows that formalized results are often used as libraries, which requires
reading proof scripts \cite{Schwarzweller2009}. But there are many more reasons that force formalization authors to read existing proof scripts.
New users often use the available library as an in-depth tutorial \cite{KPslgr}. Experienced users adapt or modify existing proofs to obtain
stronger theorems \cite{Gontie}
 (actually this is mentioned in page 3 of an unpublished
preliminary version of the article \cite{UnpublishedGonther}).
 
 Authors that cooperate on a common formalization analyze each other's achievements
(e.g. over 40\% of MML proof scripts have more that one author). Therefore it comes as no surprise that formal systems develop in
many directions that improve their readability.

One of the most popular methods to make it possible for computers to assist in formal reasoning is based on bringing
the formal mathematical language to the informal one by introduction to the formal language idioms that stem from informal
mathematical practice. In particular, the keywords
\miz{thus} and
\miz{show}(\miz{s}) indicate a conclusion of reasoning in Mizar and
Isabelle/Isar systems respectively (for more details see~\cite{MIZAR,Isar11}). In both systems the same
\miz{assume}(\miz{s}) keyword is used to indicate a step which,
e.g., assumes the antecedent of a thesis that is an implication. Such constructions, which accord to the rules of natural
deduction created by S. Ja\'skowski and F. B. Fitch \cite{FBF,JS} were introduced in the Mizar system and later
implemented in a number of procedural systems: {\em Declare} \cite{Sym1999}, {\em Mizar Mode for HOL} \cite{Harrison1996}, {\em Mizar-light
for HOL-light}
\cite{MLHL}, {\em \verb*Vmiz3V for HOL-light} \cite{MIZ3}, {\em
MMode for Coq} \cite{Giero}, {\em declarative proof language (DPL) for Coq} \cite{Corbineau2007}. The process of making the
Mizar language similar to the informal one is based also on introduction of informal constructions such as ellipsis
\cite{AK_ellipsisCICM2012}. In particular, in the MML database we
can find theorems formulated as follows:
\miz{k <= n implies k = 0 or ... or k = n}, and this simply states that
for every natural numbers $k$ and $n$, if $k\leq n$, then $k$ can be equal to $1$ or $2$ or $\ldots$ or $n$.

Mizar developers are generally very careful when adding new syntactic constructs to the language, and always prefer
symbols that stem from standard English. All logical symbols in Mizar, as well as symbols denoting functors,
predicates and attributes in the Mizar language are typed as plain text in the editor
(the editor of choice for a majority of Mizar users is MizarMode for Emacs created by Urban \cite{Urban05-jal}).
Conversion of text symbols to \LaTeX{} is performed only when the proof script is published in the dedicated journal ``Formalized Mathematics''.
In this respect, the
editor used for Isabelle/Isar\,---\,jEdit \cite{JEdit} is more flexible and can 
visualize proofs both as plain text or in \LaTeX{} symbols.
Clearly these editors provide many features typical for programming language editors such as syntax highlighting or hint
systems (infotip). Additionally, MizarMode is connected
 with the {\em Automated Reasoning for Mizar} (\mbox{Miz}$\mathbb{AR}$) service
 that provides ATP assistance to Mizar authors in searching a proof of a given statement \cite{ATP}
 and also
facilitates access to the tools distributed with the Mizar system, which
can to some extent improve the proof scripts legibility.
This initial improvement generally is based on finding and removing irrelevant parts of reasoning or modifying the
justification of steps to simplify the list of used premises, preserving the correctness of the modified proof scripts. In
particular, the utility, called {\em Irrelevant Premises Detector} (distributed under the name \mbox{RELPREM}), indicates
premises which are not needed to justify a given step in reasoning (for more details see~\cite{MIZAR}). Another, equally popular
utility, {\em{}Irrelevant Inferences Detector} (\mbox{RELINFER}), is often used by authors
when they complete writing a Mizar article. This tool indicates every reference to an irrelevant step, i.e.\ a reference which
can be replaced by all references used to justify this step without affecting the correctness of reasoning. In consequence
such modification can shorten proof scripts, but also can spoil their readability. Indeed, steps irrelevant for the machine can be
crucial for human understanding of the proof idea. It is common that inferences which are obvious for proof writers are not
obvious for the checker and vice versa \cite{Rudnicki87}. Moreover, a simple variant of resolution that is implemented in the Mizar system does not
approximate well these two kinds of obviousness. 
As a result the Mizar developers distribute only tools that enable
finding such references, but the decision about their removal is left
to the proof author to make it possible to preserve proof legibility.

Independent efforts of Urban concerning visualization of existing Mizar proof scripts led to creation of an auxiliary tool that
converts every script to a linked HTML form
(\cite{UrbanXML}, see also http://mizar.org/version/current/html/).
This linked form includes additional information that is not explicitly available in the script, but is
calculated by the Mizar analyzer during the verification process. In particular, it gives immediate access to every definition
that has been used implicitly in reasoning and every definition of functor, predicate, and attribute that appears in the
statement of a proof step. It is worth to note that the MML script authors often use the same symbol to express different
things, e.g.,
\miz{*} and \miz{+} are defined or redefined for different notions in MML more than 130 and 160 times, respectively.
Therefore, it comes as no surprise that in scripts with a more complex environment, even the authors have a problem in
recognizing the symbol variant they used.

Another direction of improving the legibility of Mizar proof scripts is the automation of the most frequently performed
operations in reasoning, which was explored by Naumowicz and Byli\'nski
\cite{Naumowicz:2010:IEC}. This approach is based on the fact that
premises that are simply a substitution of some functor (e.g., commutativity, involutivnes) or predicate (e.g., reflexivity,
symmetry) do not have to be formulated explicitly in reasoning if these properties have been already proven and
substituted terms occur in the reasoning. A special kind of information that is automatically generated and used in justifications
of the Mizar proof steps, called {\em reduction}, has been implemented by Korni{\l}owicz \cite{Kornilowicz13}. Here, a reduction
means an equality $f(t_1,t_2,\ldots,t_n)=t$ where $f$ is an $n$-ary function symbol, $t_1,t_2,\ldots, t_n$ are terms,
and $t$ is a subterm of $f(t_1,t_2,\ldots,t_n)$, e.g., the restriction of a function $f$ to its domain is equal to $f$.

An orthogonal approach to improving the legibility of formal proof scripts is based on popular practices that solve the problem
of illegible long reasoning in informal mathematical practice. This approach is generally based on two methods. The first one
consists in finding (often less important) fragments of reasoning and then extracting them in the form of a lemma, or encapsulating them
at deeper levels of nested proofs. The second ones tries to reorganize the order of independent steps
focusing mainly on premises used to justify proof steps. The first experiments with the first method were carried out generally
to reduce the size of mechanically generated proofs \cite{LEM_EX}. It is generally known that theorem provers often justify the
same goal and in exactly the same way several times. Therefore, it is not surprising that the effort to extract similar proofs as
lemmas is often undertaken. However, such modifications are generally carried out without the emphasis on the legibility of obtained
scripts. Only the initial studies on properties of reasoning passages that in the reader's opinion determine fragments that
deserve to be called a lemma were conducted \cite{JAR1}. Additionally, there were considered dependencies between such
properties and the complication level of statements that describe reasoning in selected passages.

Experiential tools that try to realize the second approach were created by P\k{a}k for the Mizar system \cite{KPslgr}. The
impact of these tools, which are based on greedy algorithms, has been already recognized by the community of Mizar authors \cite{JAR1}.
Additionally, these techniques of proof script modification are taken into consideration in
other systems as an effective way to refactor machine proofs
\cite{Redirecting2013,KaliszykUrban,Robust2013}.
However, techniques that try to optimise proof readability criteria in most cases correspond to problems for which
complexity is still an open problem.

\medskip

In this paper we study the computational complexity of methods that have origins in Behaghel's law. For this we devise a
generalization of two known NP-complete optimization problems that have been already described in \cite{NPLIST} and reduce to
these problems two proof readability criteria. This shows that the task of improving readability is computationally demanding
and so non-trivial.

\par
In Section~\ref{Graph_representation_of_proofs} we introduce the notion of an abstract model of natural deduction proofs and we
discuss selected methods to improve readability. In Section~\ref{Formulation_of_the_Behaghel's_Law_Problem} we restrict our
attention to two problems of improving the legibility of natural deduction proofs that have origins in Behaghel's law. Then in
Section~\ref{Acyclic_Hamiltonian_Partition} we formalize these problems in terms of the acyclic partitions and in
Section~\ref{NPproof} we show that these problems are NP-complete, since Minimum Feedback Arc Set problem can be {transformed}
to {them}. In Section~\ref{Constructive_abstract_proof_graphs} we present {a} family of abstract proof graphs for which there
{exist} proofs written in the Mizar system that have structures described by these graphs. Section~\ref{Other} contains a brief
overview of other problems {associated with} improving proof legibility and the study of their complexity. Finally,
Section~\ref{Conclusion} concludes the paper and discusses the future work.

\section{Graph representation of proofs}\label{Graph_representation_of_proofs}

The abstract notion of proof we deal with here is based on natural deduction created by S. Ja\'skowski and F. B. Fitch
\cite{FBF,JS,MW}.
As an illustration, let us consider an example written in this notation, which is presented in Fig.~\ref{some_proof}.
\begin{figure}[h!]
\begin{center}\footnotesize
\begin{tabular}{cccllc}
1&$\strut\mkern-18mu$&&{$\exists_x P(x)$} & premise&$\strut\mkern-18mu$\\ 2&$\strut\mkern-20mu$&&{$\forall_x \left(P(x)
\rightarrow Q(x)\right)$} & premise&$\strut\mkern-18mu$\\ 3&$\strut\mkern-18mu$&&{$\forall_x\forall_y\left(Q(x) \rightarrow
R(y)\right)$} & premise&$\strut\mkern-18mu$\\\cline{2-6} 4 &
\multicolumn{1}{|c}{$\strut\mkern-18mu$}&\multicolumn{1}{l}{$\strut\:y_0\:$} && &\multicolumn{1}{l|}{$\strut\mkern-18mu$}\\
\cline{3-5}
5& \multicolumn{1}{|c}{$\strut\mkern-18mu$}&\multicolumn{1}{|l}{$\strut\:x_0\:$} & {$P(x_0)$} & \multicolumn{1}{l|}{assumption}
& \multicolumn{1}{l|}{$\strut\mkern-18mu$}\\ 6& \multicolumn{1}{|c}{$\strut\mkern-18mu$}&\multicolumn{1}{|l}{} &
{$P(x_0)\rightarrow Q(x_0)$} & \multicolumn{1}{l|}{$\forall_x$e 2} & \multicolumn{1}{l|}{$\strut\mkern-18mu$}\\ 7&
\multicolumn{1}{|c}{$\strut\mkern-18mu$}&\multicolumn{1}{|l}{} & {$\forall_y\left(Q(x_0)\rightarrow R(y)\right)$} &
\multicolumn{1}{l|}{$\forall_x$e 3} & \multicolumn{1}{l|}{$\strut\mkern-18mu$}\\ 8&
\multicolumn{1}{|c}{$\strut\mkern-18mu$}&\multicolumn{1}{|l}{} & {$Q(x_0)\rightarrow R(y_0)$} &
\multicolumn{1}{l|}{$\forall_y$e 7} & \multicolumn{1}{l|}{$\strut\mkern-18mu$}\\ 9&
\multicolumn{1}{|c}{$\strut\mkern-18mu$}&\multicolumn{1}{|l}{} & {$Q(x_0)$} & \multicolumn{1}{l|}{$\rightarrow$e 6,5} &
\multicolumn{1}{l|}{$\strut\mkern-18mu$}\\ 10& \multicolumn{1}{|c}{$\strut\mkern-18mu$}&\multicolumn{1}{|l}{} & {$R(y_0)$} &
\multicolumn{1}{l|}{$\rightarrow$e 8,9} & \multicolumn{1}{l|}{$\strut\mkern-18mu$}\\
\cline{3-5}
11 & \multicolumn{1}{|c}{$\strut\mkern-18mu$}&\multicolumn{1}{l}{} &{$\strut R(y_0)$} & \multicolumn{1}{l}{$\exists_x$e
1,5--10}&
\multicolumn{1}{l|}{$\strut\mkern-18mu$}\\\cline{2-6}
12&\multicolumn{1}{c}{$\strut\mkern-18mu$}&\multicolumn{1}{l}{} &{$\strut\forall_y R(y)$} & \multicolumn{1}{l}{$\forall_y$i
4--11}&
\multicolumn{1}{l}{$\strut\mkern-18mu$}\\
\end{tabular}
\end{center}
\caption{The example of reasoning that is based on the Fitch notation of natural deduction. It justifies the formula
$\left(
\left(\exists_x P(x)\right)
\wedge
\left(\forall_x(P(x)\!\rightarrow\! Q(x))\right)\right.$ $\left.\wedge
\left(\forall_x\forall_y(Q(x)\!\rightarrow\! R(y))\right)\right)
\rightarrow \forall_y R(y)$.}
\label{some_proof}
\end{figure}
In the same way as M. Hurt and M. Ryan~\cite{LinCS}, we give a unique number for every line (written on the left), and in the
justification we write references to all lines that are used. Additionally, we denote by $\rightarrow\!\mbox{e}$ the Modus
Ponens rule, $\frac{\phi\:\:\phi\rightarrow\psi}{\psi}$; and by $\forall_x\mkern1mu\mbox{e}$ the rule for eliminating the
universal quantifier $\forall$, $\frac{\forall_{x}\phi}{\phi[t/x]}$. Rules for eliminating $\exists$ and introducing $\forall$
use boxes to stipulate the scope of an auxiliary variable $x_0$ as follows:
\begin{equation}
\begin{array}{clcl}
   \begin{array}{c}
       \begin{array}{|lc|}\hline x_0 & \\ &\vdots\\&\phi[x_0/x]\\\hline
       \end{array}
        \vspace{1mm}\\\hline
    \end{array}\:\vbox{\hbox{,}\vspace{-22pt}}& &
    \begin{array}{c}
      \begin{array}{c|lc|}\cline{2-3}& x_0\: & \phi[x/x_0]\\& & \vdots  \\
            \exists_x\phi\;
            & &\chi\\\cline{2-3}
      \end{array}
      \vspace{1mm}\\\hline
    \end{array}\:\vbox{\hbox{,}\vspace{-22pt}}&\\
\forall_x\phi&\mkern20mu&\chi&
\end{array}
\end{equation}
and we denote them by $\forall_x\mbox{i}$, $\exists_x\mbox{e}$ respectively. The variables $x_0$ are called eigenvariables.
This means that in case of $\forall_x\phi$ it cannot occur in the assumptions of the derivation concluding with $\phi[x_0/x]$
and in case of $\chi$ it cannot occur in $\chi$.

\par An abstract model of such proofs was considered in
\cite{KPslgr} for the general case, but for our purposes we consider a
simplified model which is represented by a directed acyclic graph (\DAG{}). A \DAG{} $G=\langle V,E\rangle$ with a
distinguished set of arcs $\RRR{G}\subseteq E$\label{RRRRR}
 is called an
abstract proof graph. The vertices of $G$ represent steps of reasoning and arcs of $G$ represent the flow of information
between different steps of reasoning. Additionally, an arc of $\RRR{G}$, called a {\em reference arc}, describes the dependence
between the expression of a step (the head of the arc) and a previously justified step (the tail of the arc) used as the
justification of that expression. Other arcs of $G$ describe all kinds of additional constraints
   that force one step to precede another step,
e.g.\ the dependence between steps which introduce variables into the reasoning and steps which use these variables in the
expressions.

In general, the reasonings in Ja\'skowski-Fitch notation can have nested subreasonings, i.e.\ ones that are enclosed within
boxes created by quantifier rules. The graph model presented here does not take into account this structure. However, the
NP-hardness proofs presented in this paper also apply to the appropriate generalization, which includes the nested structure,
as they are its special case.

\medskip

\par To illustrate the model,
let us consider the proof contained in the scope box of $x_0$ (Fig.~\ref{some_proof}). Its model is presented in Fig.~\ref{APG}
where solid arrows represent reference arcs, dashed arrows represent the uses of variable $x_0$, and the arrows both solid and
dashed represent both reference and the use of a variable. Additionally, both arcs and nodes of the actual abstract proof graph
are not labeled (arcs and nodes in Fig.~\ref{APG}, are labeled only to simplify their identification).
\begin{figure}[ht!]
\begin{center}
\begin{tikzpicture}
\footnotesize
     [>=stealth,->,shorten >=2pt,looseness=.5,auto]
    \matrix (M)[matrix of math nodes,
        column sep={3.0cm,between origins},
        row sep={1.4cm,between origins},
        nodes=myrectanglenodes]
        {
\node(A6){P(x_0)\!\rightarrow\!Q(x_0)};&
\node(A5){x_0\:\:P(x_0)};&
\node(A7){\forall_y\left(Q(x_0)\!\rightarrow \!R(y)\right)};\\
\node(A9){Q(x_0)};&\node(A10){R(y_0)};&\node(A8){Q(x_0)\!\rightarrow\!R(y_0)};\\     };
\draw[Variable](A5)--(A6);
\draw[Referencja]($(A5.south west)+(0.02,-0.01)$)--($(A9.north east)+(0.02,-0.01)$);
\draw[Variable]($(A5.south west)+(-0.02,0.01)$)--($(A9.north east)+(-0.02,0.01)$);
\draw[Referencja](A6)--(A9);
\draw[Referencja](A7)--(A8);
\draw[Referencja](A8)--(A10);
\draw[Referencja](A9)--(A10);
\draw[Variable](A5)--(A7);
\draw[Variable](A5)--(A8);
\node at ($0.5*(A5.west)+0.5*(A6.east)+(0,0.2)$){$x_0$};
\node at ($0.5*(A5.east)+0.5*(A7.west)+(0,0.2)$){$x_0$};
\node at ($0.5*(A5.south east)+0.5*(A8.north west)+(0.2,0.2)$){$x_0$};
\node at ($0.5*(A5.west)+0.5*(A9.east)+(0,0.2)$){$x_0$};
\node at ($0.5*(A6.south)+0.5*(A9.north)+(-0.3,0)$){$\rightarrow$e};
\node at ($0.5*(A9.east)+0.5*(A10.west)+(0,0.12)$){$\rightarrow$e};
\node at ($0.5*(A8.west)+0.5*(A10.east)+(0,0.12)$){$\rightarrow$e};
\node at ($0.5*(A5.west)+0.5*(A9.east)+(0.2,-0.15)$){$\rightarrow$e};
\node at ($0.5*(A7.south)+0.5*(A8.north)+(0.29,0)$){$\forall_y$e};
\end{tikzpicture}
\end{center}
\caption{The abstract proof graph illustrating
the proof contained in the scope box of $x_0$ in Fig.~\ref{some_proof}.}
\label{APG}
\end{figure}

\par We identify here topological sortings of $G$ with one-to-one functions\break $\tau:V\longrightarrow \{1,2,\ldots,|V|\}$
such that, for every arc $\langle u,v\rangle\in E$, $\tau(u)<\tau(v)$.
Such sortings are sometimes called {\it linearizations}. We denote by $\TS(G)$ the set of all topological sortings of $G$. We
call a vertex of $V$ {\em ``then'' $\tau$-step} if a step of reasoning which corresponds to this vertex uses in justification
the immediately preceding step of the derivation induced by $\tau$ (and possibly some other statements). To be more precise,
vertex $u$ is a ``then'' $\tau$-step if and only if for some vertex $v$, $\tau(v)+1=\tau(u)$ and $\langle v,u\rangle\in
\RRR{G}$. For example, the step ``$10\;\;R(y_0)\;\;\rightarrow\mbox{e}\,8,9$'' (Fig.~\ref{some_proof}) is a ``then'' step.

\par To make references to the directly preceding step in a reasoning explicit, we extend
the Fitch notation with a new kind of step, called ``then'' step. Suppose that a step $s$ in line $n\!+\!1$ justifies a formula
$\phi$ by a rule $\mathfrak{r}$ that refers to the line $n$ and is represented as
\begin{equation}
\begin{array}{lcr}
n\!+\!1 &\phi&\mathfrak{r}\, \ldots,n,\ldots\,,
\end{array}
\end{equation}
We present it as a ``then'' step of the form
\begin{equation}
\begin{array}{lcr}
n\!+\!1 &then\:\:\phi&\mathfrak{r}\, \ldots,\ast,\ldots\,,
\end{array}
\end{equation}
where the number of the previous line n is replaced by $\ast$ (see Fig.~\ref{reorganisation}).
\begin{figure}[h!]
\begin{center}
\footnotesize
\begin{tabular}
{cccllc}\cline{3-5} 5& \multicolumn{1}{|c}{$\strut\mkern-18mu$}&\multicolumn{1}{|l}{$\strut\:x_0\:$} & {$P(x_0)$} &
\multicolumn{1}{l|}{assumption$\mkern-2mu$} & \multicolumn{1}{l|}{$\strut\mkern-18mu$}\\ 6&
\multicolumn{1}{|c}{$\strut\mkern-18mu$}&\multicolumn{1}{|l}{} & {$P(x_0)\rightarrow Q(x_0)$} &
\multicolumn{1}{l|}{$\forall_x$e 2} & \multicolumn{1}{l|}{$\strut\mkern-18mu$}\\

9& \multicolumn{1}{|c}{$\strut\mkern-18mu$}&\multicolumn{1}{|l}{{\em then}$\mkern-4mu$} & {$Q(x_0)$} &
\multicolumn{1}{l|}{$\rightarrow$e $\ast$,5} & \multicolumn{1}{l|}{$\strut\mkern-18mu$}\\

7& \multicolumn{1}{|c}{$\strut\mkern-18mu$}&\multicolumn{1}{|l}{} & {$\forall_y\left(Q(x_0)\rightarrow R(y)\right)\mkern-8mu$}
& \multicolumn{1}{l|}{$\forall_x$e 3} & \multicolumn{1}{l|}{$\strut\mkern-18mu$}\\

8& \multicolumn{1}{|c}{$\strut\mkern-18mu$}&\multicolumn{1}{|l}{{\em then}$\mkern-4mu$} & {$Q(x_0)\rightarrow R(y_0)$} &
\multicolumn{1}{l|}{$\forall_y$e $\ast$} & \multicolumn{1}{l|}{$\strut\mkern-18mu$}\\

10& \multicolumn{1}{|c}{$\strut\mkern-18mu$}&\multicolumn{1}{|l}{{\em then}$\mkern-4mu$} & {$R(y_0)$} &
\multicolumn{1}{l|}{$\rightarrow$e $\ast$,9} & \multicolumn{1}{l|}{$\strut\mkern-18mu$}\\
\cline{3-5}
\end{tabular}
\begin{tabular}
{cccllc}\cline{3-5} 5& \multicolumn{1}{|c}{$\strut\mkern-18mu$}&\multicolumn{1}{|l}{$\strut\:x_0\:$} & {$P(x_0)$} &
\multicolumn{1}{l|}{assumption$\mkern-2mu$} & \multicolumn{1}{l|}{$\strut\mkern-18mu$}\\ 7&
\multicolumn{1}{|c}{$\strut\mkern-18mu$}&\multicolumn{1}{|l}{} & {$\forall_y\left(Q(x_0)\rightarrow R(y)\right)\mkern-8mu$} &
\multicolumn{1}{l|}{$\forall_x$e 3} & \multicolumn{1}{l|}{$\strut\mkern-18mu$}\\ 8&
\multicolumn{1}{|c}{$\strut\mkern-18mu$}&\multicolumn{1}{|l}{{\em then}$\mkern-4mu$} & {$Q(x_0)\rightarrow R(y_0)$} &
\multicolumn{1}{l|}{$\forall_y$e $\ast$} & \multicolumn{1}{l|}{$\strut\mkern-18mu$}\\ 6&
\multicolumn{1}{|c}{$\strut\mkern-18mu$}&\multicolumn{1}{|l}{} & {$P(x_0)\rightarrow Q(x_0)$} &
\multicolumn{1}{l|}{$\forall_x$e 2} & \multicolumn{1}{l|}{$\strut\mkern-18mu$}\\ 9&
\multicolumn{1}{|c}{$\strut\mkern-18mu$}&\multicolumn{1}{|l}{{\em then}$\mkern-4mu$} & {$Q(x_0)$} &
\multicolumn{1}{l|}{$\rightarrow$e $\ast$,5} & \multicolumn{1}{l|}{$\strut\mkern-18mu$}\\ 10&
\multicolumn{1}{|c}{$\strut\mkern-18mu$}&\multicolumn{1}{|l}{{\em then}$\mkern-4mu$} & {$R(y_0)$} &
\multicolumn{1}{l|}{$\rightarrow$e 8,$\ast$} & \multicolumn{1}{l|}{$\strut\mkern-18mu$}\\
\cline{3-5}
\end{tabular}
\end{center}
\caption{Two possible topological sortings of the proof
that is contained in the scope of the $x_0$ box in Fig.~\ref{some_proof}. These sortings have the largest number of ``then''
steps.}
\label{reorganisation}
\end{figure}

\par {It is desirable} to have a ``push-button'' tool that automatically finds a suitable topological sortings $\tau$ of a
natural deduction proof so that {they have} the largest
number of ``then'' $\tau$-steps but we show {here that this goal is non-trivial since the problem is NP-complete.}

\par
Naturally every sequence of ``then'' $\tau$-steps determines a linear fragment of reasoning in the proof script that we call
{a} {\em $\tau$-reasoning path}. More precisely, let us fix a directed path $P=\langle u_0,u_1, u_2, \dots, u_n \rangle$. Then
$P$ is a $\tau$-reasoning path if and only if $u_k$ is a ``then'' $\tau$-step for $k=1,2,\ldots,n$ ($u_0$ does not have to be a
``then'' $\tau$-step). To say that $P$ is maximal, means simply that $P$ is not a subsequence of any other $\tau$-reasoning
path. It is easy to check that the number of ``then'' $\tau$-steps depends on the number of all maximal $\tau$-reasoning paths.
To be more precise, the sum of the number of ``then'' $\tau$-steps and the number of maximal $\tau$-reasoning paths is constant
and equal to the number of reasoning steps. Hence the number of ``then'' $\tau$-steps is maximal if and only if the number of
all maximal $\tau$-reasoning paths is minimal.

This means we can equivalently formulate many definitions in terms of the number of ``then'' $\tau$-steps and in terms of
$\tau$-reasoning paths. In addition, we would like the flow of information between any two different maximal $\tau$-reasoning
paths {to be minimal}. {This is equivalent to the property that the number of {\em internal $\tau$-arcs}, i.e.\ the reference
arcs whose head and tail belong to the same maximal $\tau$-reasoning path, is maximal. It is true since the sum of all
reference arcs between any two different maximal $\tau$-reasoning paths {plus} the number of internal $\tau$-arcs is constant
and equal to the number of reference arcs. }

As an illustration of ``then'' $\tau$-steps and $\tau$-reasoning paths, let us consider a short quasi proof script written in
the Mizar style. A theorem in the Mizar style has the form:
{\small
\begin{equation}
\begin{tabular}{l}
\miz{theorem} {\em label}\miz{:} {\em statement}\\
\miz{proof}\\
$\strut\mkern20mu${\em proof steps}\\
\miz{end};\\
\end{tabular}
\end{equation}
}
and a proof step in general has the form:
{\small
\begin{equation}\label{step_private}
\begin{tabular}{c}
{\em label}\miz{:} {\em statement} \miz{by} {\em justification} \miz{;}\\
\end{tabular}
\end{equation}
}
Let us focus on a statement of the theory of fields that the inverse of a product of two elements of the field is equal to the
product of their inverses. Note that symbols \mbox{1.F}, \mbox{0.F} represent the additive identity element and the
multiplicative identity element of {the} field \mbox{F}, respectively. Note additionally that the system Mizar uses only ASCII
characters. Operations such as ``{\C}{}'', ``{\M}{}'' are represented in Mizar as \miz{"} and \miz{*}, respectively, and
VECTSP\!\_1,
\begin{figure}[h!]
\small
\begin{tabular}{rl}
&\miz{theorem} Th1\miz{:}\\ &\miz{for} F \miz{be} Field\miz{,} a\miz{,} b \miz{be} Element \miz{of} F \miz{st} a \miz{<>} 0.F
\miz{\&} b \miz{<>} 0.F \\ &$\strut\mkern20mu$\miz{holds} \miz{(}a {{\M}} b\miz{)}{\C} \miz{=} a{\C} {{\M}} b{\C}\\
&\miz{proof}\\
\QQ{1}&$\strut\mkern20mu$\miz{let} F \miz{be} Field\miz{,} a\miz{,} b \miz{be} Element \miz{of} F\miz{;}\\
\QQ{2}&$\strut\mkern20mu$\miz{assume} A1\miz{:} a \miz{<>} 0.F \miz{\&} b \miz{<>} 0.F\miz{;}\\
\QQ{3}&$\strut\mkern20mu$A2\miz{:} a{\C} + \miz{(}a{\C} {{\M}} 0.F\miz{)} \miz{=} \miz{(}a{\C} {{\M}} 1.F\miz{)} + \miz{(}a{\C}
{{\M}} 0.F\miz{)} \miz{by} VECTSP\!\_1\miz{:def} 6\miz{;}\\
\QQ{4}&$\strut\mkern20mu$A3\miz{:} \miz{(}a {{\M}} b\miz{)} {{\M}} \miz{(}b {\C}{{\M}} a{\C}\miz{)} \miz{=} \miz{((}a {{\M}}
b\miz{)} {{\M}} b{\C}\miz{)} {{\M}} a{\C} \miz{by} GROUP\!\_1\miz{:def} 3\miz{;}\\
\QQ{5}&$\strut\mkern20mu$A4\miz{:} a{\C} + \miz{(}a{\C} {{\M}} 0.F\miz{)} \miz{=} a{\C} {{\M}} \miz{(}1.F + 0.F\miz{)} \miz{by}
A2\miz{,}VECTSP\!\_1\miz{:def} 2\miz{;}\\
\QQ{6}&$\strut\mkern20mu$A5\miz{:} b \miz{=} 1.F {{\M}} b \miz{by} VECTSP\!\_1\miz{:def} 6\miz{;}\\
\QQ{7}&$\strut\mkern20mu$A6\miz{:} a{\C} + \miz{(}a{\C} {{\M}} 0.F) \miz{=} a{\C} {{\M}} 1.F \miz{by} A4\miz{,}
RLVECT\!\_1\miz{:def} 4\miz{;}\\
\QQ{8}&$\strut\mkern20mu$A7\miz{:} \miz{(}a {\M
} b\miz{)} {{\M}} \miz{(}b{\C} {{\M}} a{\C}\miz{)} = \miz{(}a {{\M}} \miz{(}b {{\M}} b{\C}\miz{)} {{\M}} a{\C}\miz{)} \miz{by}
A3\miz{,} GROUP\!\_1\miz{:def} 3\miz{;}\\
\QQ{9}&$\strut\mkern20mu$A8\miz{:} a{\C} + \miz{(}a{\C} {{\M}} 0.F\miz{)} \miz{=} a{\C} \miz{by} A6\miz{,}
VECTSP\!\_1\miz{:def} 6\miz{;}\\
\QQ{10}&$\strut\mkern20mu$A9\miz{:} b \miz{=} \miz{(}a{\C} {{\M}} a\miz{)} {{\M}} b \miz{by} A1\miz{,} A5\miz{,}
VECTSP\!\_1\miz{:def} 10\miz{;}\\
\QQ{11}&$\strut\mkern20mu$A10\miz{:} \miz{(}-(a{\C})+a{\C}\miz{)} + \miz{(}a{\C} {{\M}} 0.F\miz{)} \miz{=} -\miz{(}a{\C}\miz{)}
+ \miz{(}a{\C}\miz{)} \miz{\&} -\miz{(}a{\C}\miz{)} + a{\C} \miz{=} 0.F\\
&$\strut\mkern40mu$\miz{by} A8\miz{,} RLVECT\!\_1\miz{:def} 3\miz{,} \miz{def} 10\miz{;}\\
\QQ{12}&$\strut\mkern20mu$A11\miz{:} \miz{(}a {{\M}} b\miz{)} {\M
} \miz{(}b{\C} {{\M}} a{\C}\miz{)} \miz{=} a {{\M}} 1.F {{\M}} a{\C}  \miz{by} A7\miz{,} A1\miz{,} VECTSP\!\_1\miz{:def}
10\miz{;}\\
\QQ{13}&$\strut\mkern20mu$A12\miz{:} a{\C} {{\M}} 0.F \miz{=} 0.F \miz{by} A10\miz{,} RLVECT\!\_1\miz{:def} 4\miz{;}\\
\QQ{14}&$\strut\mkern20mu$A13\miz{:} \miz{(}a {{\M}} b\miz{)} {\M
} \miz{(} b{\C} {{\M}} a{\C}\miz{)} \miz{=} a {{\M}} a{\C} \miz{by} A11\miz{,} VECTSP\!\_1\miz{:def} 6\miz{;}\\
\QQ{15}&$\strut\mkern20mu$A14\miz{:} a{\C} {{\M}} 0.F \miz{<>} a{\C} {{\M}} \miz{(}a {{\M}} b\miz{)} \miz{by} A12\miz{,}
A1\miz{,} A9\miz{,} GROUP\!\_1\miz{:def} 3\miz{;}\\
\QQ{16}&$\strut\mkern20mu$\miz{(}a {{\M}} b\miz{)} {{\M}} \miz{(}b{\C} {{\M}} a{\C}\miz{)} \miz{=} 1.F \miz{by} A13\miz{,}
A1\miz{,} VECTSP\!\_1\miz{:def} 10\miz{;}\\
\QQ{17}&$\strut\mkern20mu$\miz{thus} \miz{then} \miz{thesis} \miz{by} A14\miz{,} VECTSP\!\_1\miz{:def} 10{;}\\
&\miz{end;}\\
\end{tabular}
\caption{An example of illegible proof script written in the Mizar style.}
\label{Mizar_proof}
\end{figure}
GROUP\!\_1, RLVECT\!\_1 are identifiers of Mizar articles
\cite{VECTSP_1,GROUP_1,RLVECT_1}. The proof script presented in
Fig.~\ref{Mizar_proof} does not comply well to the Behaghel's First Law. Only one step (line 17) refers to the preceding step,
the other steps in the reasoning refer to steps placed far away in the proof script. Analyzing the possible topological
sortings of the corresponding proof graph (see Fig.~\ref{Mizar_proof_graph}), 359598 cases, we get that the greatest number of
``then'' $\tau$-steps is reached in 16 linearizations and is equal to 12. Hence the smallest number of maximal $\tau$-reasoning
paths is equal to 5. In 6 cases of these 16, we have also the smallest number of reference arcs between any two different
maximal $\tau$-reasoning paths, which is equal to 5. One of these 6 cases is presented in Fig.~\ref{Mizar_mod}.
\begin{figure}[ht!]
\begin{center}
\begin{tikzpicture}
[>=stealth,->,shorten >=2pt,looseness=.5,auto]
    \matrix (M)[matrix of math nodes,
        column sep={1.2cm,between origins},
        row sep={1.2cm,between origins},
        nodes=myrectanglenodes]
        {
&|(A7)|{\:7:\:}&|(A5)|{\:5:\:}&|(A3)|{\:3:\:}&\\ |(A9)|{\:9:\:}&|(A6)|{\:6:\:}&|(A1)|{\:1:\:}&|(A8)|{\:8:\:}&|(A4)|{\:4:\:}\\
|(A11)|{\:11:\:}&|(A10)|{10:}&|(A2)|{\:2:\:}&|(A12)|{12:}&|(A14)|{\:14:\:}\\
|(A13)|{13:}&&|(A15)|{15:}&|(A17)|{17:}&|(A16)|{16:}\\
        };
\draw[Referencja](A3)--(A5);
\draw[Referencja](A5)--(A7);
\draw[Referencja](A7)--(A9);
\draw[Referencja](A9)--(A11);
\draw[Referencja](A11)--(A13);
\draw[Referencja](A13)--(A15);
\draw[Referencja](A15)--(A17);
\draw[Referencja](A16)--(A17);
\draw[Referencja](A14)--(A16);
\draw[Referencja](A12)--(A14);
\draw[Referencja](A8)--(A12);
\draw[Referencja](A4)--(A8);
\draw[Referencja](A10)--(A15);
\draw[Referencja](A6)--(A10);
\draw[Referencja](A2)--(A15);
\draw[Referencja](A2)--(A12);
\draw[Referencja](A2)--(A10);
\draw[Referencja](A2)--(A16);
\draw[Variable](A1)--(A3);
\draw[Variable](A1)--(A5);
\draw[Variable](A1)--(A7);
\draw[Variable](A1)--(A6);
\draw[Variable](A1)--(A10);
\draw[Variable](A1)--(A2);
\draw[Variable](A1)--(A12);
\draw[Variable](A1)--(A8);
\draw[Variable,bend right, looseness=0.8]($(A1.south west)+(0,0.14)$)to($(A11.east)+(0,0.07)$);
\draw[Variable,bend left, looseness=0.8]($(A1.south east)+(0,0.14)$)to($(A14.west)+(0,0.07)$);
\draw[Variable,bend right, looseness=1](A1)to(A9);
\draw[Variable,bend left, looseness=1](A1)to(A4);
\draw[Variable,bend right, looseness=1](A1)to(A15);
\draw[Variable,bend right, looseness=0.8]($(A1.south west)+(0,0.08)$)to(A13);
\draw[Variable,bend left, looseness=0.8]($(A1.south east)+(0,0.08)$)to(A16);
\end{tikzpicture}
\end{center}
\caption{The abstract proof graph illustrating
the structure of the reasoning presented in Fig.~\ref{Mizar_proof}.}
\label{Mizar_proof_graph}
\end{figure}

This example illustrates the significance of ``then'' construction in the process of improving proof scripts legibility. This
construction in natural way focuses the reader's attention on sequences of ``then'' $\tau$-steps. Note that every sequence of
steps that has a label connected only to the last step of this sequence (as is the case of the sequence of ``then''
$\tau$-steps starting in step 3 and ending in step 13 in Fig.~\ref{Mizar_mod}) contains additional information for readers.
Naturally, the previous unlabelled steps have only local significance, i.e.\ each of them is used exclusively  to justify the
directly following step. This is a sign for readers that they can ``encapsulate'' the piece of reasoning and abstract away from
its content in many situations.

\begin{figure}[h!]
\small
\begin{tabular}{rl}
&\miz{theorem} Th1\miz{:}\\ &\miz{for} F \miz{be} Field\miz{,} a\verb*V,V b \miz{be} Element \miz{of} F \miz{st} a \miz{<>} 0.F
\miz{\&} b \miz{<>} 0.F \\ &$\strut\mkern20mu$\miz{holds} \miz{(}a {{\M}} b\miz{)}{\C} \miz{=} a{\C} {{\M}} b{\C}\\
&\miz{proof}\\
\QQ{1}&$\strut\mkern20mu$\miz{let} F \miz{be} Field\miz{,} a\miz{,} b \miz{be} Element \miz{of} F\miz{;}\\
\QQ{3}&$\strut\mkern20mu$a{\C} + \miz{(}a{\C} {\M} 0.F\miz{)} \miz{=} \miz{(}a{\C} {\M} 1.F\miz{)} + \miz{(}a{\C} {\M}
0.F\miz{)} \miz{by} VECTSP\!\_1\miz{:def} 6\miz{;}\\
\QQ{5}&$\strut\mkern20mu$\miz{then} a{\C} + \miz{(}a{\C} {\M} 0.F\miz{)} \miz{=} a{\C} {\M} \miz{(}1.F + 0.F\miz{)} \miz{by}
VECTSP\!\_1\miz{:def} 2\miz{;}\\
\QQ{7}&$\strut\mkern20mu$\miz{then} a{\C} + \miz{(}a{\C} {\M} 0.F\miz{)} \miz{=} a{\C} {\M} 1.F \miz{by} RLVECT\!\_1\miz{:def}
4\miz{;}\\
\QQ{9}&$\strut\mkern20mu$\miz{then} a\C
 + \miz{(}a{\C} {\M} 0.F\miz{)} \miz{=} a{\C} \miz{by} VECTSP\!\_1\miz{:def} 6\miz{;}\\
\QQ{11}&$\strut\mkern20mu$\miz{then} \miz{(}-\miz{(}a{\C}\miz{)} + a{\C}\miz{)} + \miz{(}a{\C} {\M} 0.F\miz{)} \miz{=}
-\miz{(}a{\C}\miz{)} + \miz{(}a{\C}\miz{)} \miz{\&} -\miz{(}a{\C}\miz{)} + a{\C} \miz{=} 0.F\\
&$\strut\mkern40mu$\miz{by} RLVECT\!\_1\miz{:def} 3\miz{,} \miz{def} 10\miz{;}\\
\QQ{13}&$\strut\mkern20mu$\miz{then} A1\miz{:} a{\C} {\M} 0.F \miz{=} 0.F \miz{by}  RLVECT\!\_1\miz{:def} 4\miz{;}\\
\QQ{6}&$\strut\mkern20mu$A2\miz{:} b \miz{=} 1.F {\M} b \miz{by} VECTSP\!\_1\miz{:def} 6\miz{;}\\
\QQ{2}&$\strut\mkern20mu$\miz{assume} A3\miz{:} a \miz{<>} 0.F \miz{\&} b \miz{<>} 0.F\miz{;}\\
\QQ{10}&$\strut\mkern20mu$\miz{then} b \miz{=} \miz{(}a{\C} {\M} a\miz{)} {\M} b \miz{by} A2\miz{,} VECTSP\!\_1\miz{:def}
10\miz{;}\\
\QQ{15}&$\strut\mkern20mu$\miz{then} A4\miz{:}a{\C} {\M} 0.F \miz{<>} a{\C} {\M} \miz{(}a {\M} b\miz{)} \miz{by} A1\miz{,}
A3\miz{,} GROUP\!\_1\miz{:def} 3\miz{;}\\
\QQ{4}&$\strut\mkern20mu$\miz{(}a {\M} b\miz{)} {\M} \miz{(}b{\C} {\M} a{\C}\miz{)} \miz{=} \miz{((}a {\M} b\miz{)} {\M}
b{\C}\miz{)} {\M} a{\C} \miz{by} GROUP\!\_1\miz{:def} 3\miz{;}\\
\QQ{8}&$\strut\mkern20mu$\miz{then} \miz{(}a {\M} b\miz{)} {\M} \miz{(}b{\C} {\M} a{\C}\miz{)} \miz{=} \miz{(}a {\M} \miz{(}b
{\M} b{\C}\miz{)} {\M} a{\C}\miz{)} \miz{by} GROUP\!\_1\miz{:def} 3\miz{;}\\
\QQ{12}&$\strut\mkern20mu$\miz{then} \miz{(}a {\M} b\miz{)} {\M} \miz{(}b{\C} {\M} a{\C}\miz{)} \miz{=} a {\M} 1.F {\M} a{\C}
\miz{by} A3\miz{,} VECTSP\!\_1\miz{:def} 10\miz{;}\\
\QQ{14}&$\strut\mkern20mu$\miz{then} \miz{(}a {\M} b\miz{)} {\M} \miz{(}b{\C} {\M} a{\C}\miz{)} \miz{=} a {\M} a{\C} \miz{by}
VECTSP\!\_1\miz{:def} 6\miz{;}\\
\QQ{16}&$\strut\mkern20mu$\miz{then} \miz{(}a {\M}{} b\miz{)} {\M} \miz{(}b{\C} {\M} a{\C}\miz{)} \miz{=} 1.F  \miz{by}
A3\miz{,} VECTSP\!\_1\miz{:def} 10\miz{;}\\
\QQ{17}&$\strut\mkern20mu$\miz{thus} \miz{then} \miz{thesis} \miz{by} A4\miz{,} VECTSP\!\_1\miz{:def} 10\miz{;}\\
&\miz{end;}\\
\end{tabular}
\caption{A linearisation of the proof presented in Fig.~\ref{Mizar_proof}. This linearisation maximizes the number of ``then''
$\tau$-steps.}
\label{Mizar_mod}
\end{figure}

The following short analysis of this property shows that it is impossible to apply several proof readability criteria at the
same time and that they are sometimes contradictory. A~commonly accepted solution in such circumstances is to impose a
hierarchy on the optimized properties (this additional property and several others are briefly described in
Section~\ref{Other}).

Until now, we have concentrated on the number of steps that refer to the preceding step. Maximization of sequences of this kind
is not the only strategy to improve readability. The topological sorting presented in Fig.~\ref{Mizar_mod} has additional
important features that are not consequence of Behaghel's First Law, but are related to readability. In  particular, it is one
of two linearizations
 that have the smallest sum of {all} distances between premises and their uses, equal to 38 (the second topological sorting
 with this property contains 11 ``then'' $\tau$-steps).
Additionally, 4 labels have been used in this reasoning, but it is not the smallest number of labels in existing linearisation.
If we swap lines 6 and 2, then the label \mbox{A2} is not necessary. Among all sortings of the abstract proof graph in
Fig.~\ref{Mizar_proof_graph} the smallest possible number of labels is 3. This is achieved in 8 cases. However, none of the
cases minimizes the sum of all distances since the sum there is at best 41. Notably there are, among the 8 cases, ones that
contain exactly 12 ``then'' $\tau$-steps, which is the maximal possible number for the abstract proof graph. This illustrates
that there are proofs for which all plausible readability optimalization criteria cannot be met at once.

Another way to impose a restriction on the distance between premises and their uses is
 to minimize the maximal distance between an arbitrary premise and its use. This distance for the reasoning presented in
 Fig.~\ref{Mizar_mod} is equal to 7 (the distance is reached between lines 2 and 16 for the label \mbox{A3}).
 The smallest restriction of maximal distance in this abstract proof graph is equal to 5 and is obtained in 128 cases, but
 optimal values of
 previously mentioned parameters
  are not always possible to reach among them. More precisely,
  the number of ``then'' $\tau$-steps is at most 12 (the optimal number is 12),
  the number of reference arcs
between any two different maximal $\tau$-reasoning paths is at least 7 (the optimal number is 5), the sum of distances between
premises and their uses is at least 39 (the optimal number is 38), and the number of labels in a reasoning is at least 6 (the
optimal number is 3).

\section{Formulation of Behaghel's Law
Determinants}\label{Formulation_of_the_Behaghel's_Law_Problem} {To formulate the criterion we need to set the appropriate
vocabulary of notations.} Let $G=\langle V,E\rangle$ be a \DAG\ and let $E_1$ be a subset of $E$. For a subset $V_1$ of $V$ we
denote by $\DAGcut{G}{V_1}$ the subgraph of $G$ induced by $V_1$. For a vertex $u$ in $V$, we use the following notation:
\begin{equation}
\begin{array}{lcrcr}
N^{-}_{G}(u) & \coloneq & \{v\in V:\langle v,u\rangle\in E\}&&(
\mbox{incoming arcs}),\\
N^{+}_{G}(u) & \coloneq & \{v\in V:\langle u,v\rangle\in E\} &\strut\mkern40mu&(\mbox{outgoing arcs}),
\end{array}
\end{equation}
$|N^{-}_{G}(u)|$ is the {\em in-degree} of $u$ and $|N^{+}_{G}(u)|$ is the {\em out-degree} of $u$. An arc is called {\em
$E_1$-arc} if it belongs to $E_1$. A path $P= \langle u_1,u_2,\ldots, u_{n}\rangle $ of $G$ is called an {\em $E_1$-path} if
$\langle u_i,u_{i+1}\rangle$ is an $E_1$-arc for $i=1,2,\ldots,n-1$. The concatenation of two paths $P_1,$ $P_2$ is denoted by
$P_1\!\mathbin{{}^\frown}\!P_2$. The length of $P$ is denoted by $l(P)$ and the set $\{u_1,u_2,\ldots, u_{n}\}$ of vertices of
$P$ is denoted by $\vertexOF{P}$. An $E_1$-arc $\langle v,u\rangle$ is an {\em $E_1$-shortcut} if there exists an $E_1$-path
that leads from $v$ to $u$ and whose length is at last 2.

\par
Now we define a formal equivalent of a $\tau$--reasoning path, called a $\tauE{\tau}{E_1}$-path. Given $\tau\in \TS(G)$, an
${E_1}$-path $P= \langle u_1,u_2,\ldots, u_{n}\rangle $ is called a {\em $\tauE{\tau}{E_1}$-path} if
$\tau(u_i)+1=\tau(u_{i+1})$ for $i=1,2,\ldots,n-1$. Additionally, $P$ is said to be a maximal {$\tauE{\tau}{E_1}$-path} if and
only if $\vertexOF{P}$ is not included in any other set of vertices of {a} $\tauE{\tau}{E_1}$-path. Let
\begin{equation}\pipart{\pi}{\tau}{E_1}\coloneq\{\vertexOF{P}\mid P \mbox{ is a maximal } \tauE{\tau}{E_1}\mbox{-path} \}
\end{equation}
and call it {\em {a} partition of $G$ determined by $\tau$ with {respect to $E_1$}}. For subsets $V_1,V_2$ of $V$ we use the
following notation:
\begin{equation}
V_1\pppp{E_1}V_2\coloneq\{\langle v_1,v_2\rangle\in E_1\mid v_1\in V_1\,\wedge\,v_2\in V_2\}.
\end{equation}

\par The methods of improving legibility of proofs described above
can be formulated as the following two {decision} problems:\\

\nopar{\bf 1st Method of Improving Legibility} ({\bf$1$st MIL}):\\
$\strut\mkern25mu$\textsc{Instance}: A \DAG\: $G = \langle V,E\rangle$, a subset $E_1$ of $E$, and a positive integer $K\leq
|V|$.\\ $\strut\mkern25mu$\textsc{Question}: Does there exist a topological sorting $\tau$ of $G$ for which $\pi^{E_1}_\tau$
has size\break at most $K$?\\

\nopar{\bf 2nd Method of Improving Legibility} ({\bf $2$nd MIL}):\\
$\strut\mkern25mu$\textsc{Instance}: A \DAG\: $G = \langle V,E\rangle$, a subset $E_1$ of $E$, and a positive integer $L\leq
|E_1|$. \\ $\strut\mkern25mu$\textsc{Question}: Does there exist a topological sorting $\tau$ of $G$ for which
$$\strut\mkern-5mu
\underset{\scriptsize
\begin{array}{c}P_1,P_2\in\pi^{E_1}_\tau\\
P_1\neq P_2\\
\end{array}}{\sum}
|P_1\pppp{E_1}P_2|\leq L\mbox{?}$$\\

\noindent
In our setting, the subset $E_1$ corresponds to the set of reference arcs. The \mbox{$1$st MIL} corresponds to searching for
 a topological sorting $\tau$ that has
 the smallest possible number of
 maximal sequences of ``then'' $\tau$-steps.
Recall that we can formulate this problem in the equivalent way as maximization of the number of ``then'' $\tau$-steps.

The \mbox{$2$nd MIL} is an extension of the idea of the first method. We demand that every ``then'' $\tau$-step not only refers
to the preceding step in a maximal sequence of ``then'' $\tau$-steps, but also that other references of this step point mostly
to steps of this sequence. To fulfil this condition we search for a topological sorting $\tau$ that has at most a given number
of references between any two different maximal sequences of ``then'' $\tau$-steps.

\par Now we show that these problems have the same answer for {a} special type of \DAG{}s and
parameters $K,\,L$ such that $K=L+|V|-|E_1|$.

\begin{lem}\label{KP_PROP1}
Let $G=\langle V,E\rangle$ be a \DAG\: and $E_1$ be a subset of $E$ for which $G$ is without $E_1$-shortcuts. For any $\tau\in
\TS(G)$, the following holds:
\begin{equation}
|V|-|\pipart{\pi}{\tau}{E_1}|=|E_1|-\underset{
\scriptsize\begin{array}{c} P_1,P_2\in\pipart{\pi}{\tau}{E_1}\\
P_1\neq P_2\\
\end{array}}{\sum}\mkern-10mu |P_1\pppp{E_1}P_2|.
\end{equation}
\end{lem}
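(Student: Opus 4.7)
The plan is to observe that the identity is just a double-counting statement about $E_1$-arcs, partitioned according to whether they lie inside a single class of $\pipart{\pi}{\tau}{E_1}$ or across two different classes. Specifically, I want to prove
\begin{equation*}
|E_1| \;=\; \bigl(|V|-|\pipart{\pi}{\tau}{E_1}|\bigr) \;+\; \underset{\scriptsize\begin{array}{c}P_1,P_2\in\pipart{\pi}{\tau}{E_1}\\ P_1\neq P_2\end{array}}{\sum}|P_1\pppp{E_1}P_2|,
\end{equation*}
which is a direct rearrangement of the claimed equality. So the whole task reduces to counting, for each class, the number of $E_1$-arcs whose head and tail both belong to that class, and showing these counts add up to $|V|-|\pipart{\pi}{\tau}{E_1}|$.

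First, I would verify that $\pipart{\pi}{\tau}{E_1}$ is indeed a partition of $V$. For every $u\in V$ there is at least one maximal $\tauE{\tau}{E_1}$-path containing $u$ (one can start with the trivial path $\langle u\rangle$ and extend it greedily to the left and to the right using the condition $\tau(u_i)+1=\tau(u_{i+1})$ together with the $E_1$-arc requirement; the process terminates because $\tau$ takes only $|V|$ values). Uniqueness of this maximal path follows because the extension choices on either side are forced: the predecessor, if any, must be $\tau^{-1}(\tau(u)-1)$ and joined to $u$ by an $E_1$-arc, and similarly for the successor. Hence each vertex belongs to exactly one element of $\pipart{\pi}{\tau}{E_1}$.

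Next, fix a maximal path $P=\langle u_1,\dots,u_n\rangle$ in $\pipart{\pi}{\tau}{E_1}$. By definition the $n-1$ arcs $\langle u_i,u_{i+1}\rangle$ are $E_1$-arcs, so $P$ contributes at least $n-1=|\vertexOF{P}|-1$ internal $E_1$-arcs. The key step, where the hypothesis on shortcuts is used, is to rule out any further internal $E_1$-arcs: if $\langle u_i,u_j\rangle\in E_1$ with $j>i+1$, then $\langle u_i,u_{i+1},\dots,u_j\rangle$ is an $E_1$-path of length at least $2$ from $u_i$ to $u_j$, so $\langle u_i,u_j\rangle$ would be an $E_1$-shortcut, contradicting the assumption. (Note that $j<i$ is impossible by acyclicity, since $\tau(u_i)<\tau(u_j)$ for $i<j$.) Thus $P$ contains exactly $|\vertexOF{P}|-1$ internal $E_1$-arcs, and summing over all classes gives $\sum_{P\in\pipart{\pi}{\tau}{E_1}}(|\vertexOF{P}|-1)=|V|-|\pipart{\pi}{\tau}{E_1}|$.

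Finally, since $\pipart{\pi}{\tau}{E_1}$ is a partition of $V$, every $E_1$-arc $\langle v_1,v_2\rangle$ has its two endpoints either in the same class or in two distinct classes, and in the latter case it is counted exactly once in the double sum $\sum_{P_1\neq P_2}|P_1\pppp{E_1}P_2|$. Combining this with the previous count yields the displayed identity for $|E_1|$, hence the lemma. The main obstacle, and the place the hypothesis is genuinely needed, is the shortcut elimination in the middle step; once that is in place the rest is bookkeeping.
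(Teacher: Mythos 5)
Your proof is correct and follows essentially the same route as the paper's: both arguments double-count $|E_1|$ by splitting arcs into internal and cross-class ones, with the no-shortcut hypothesis used exactly to establish $|P\pppp{E_1}P|=|\vertexOF{P}|-1$ for each class $P$. You simply spell out more of the bookkeeping (the partition property and the forced-extension argument) that the paper leaves implicit.
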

\begin{proof}
Since $G$ is without $E_1$-shortcuts, we have $|P|-1=|P\pppp{E_1}P|$ for $P\in \pi^{E_1}_\tau$. Consequently,
$|E_1|=\underset{\scriptsize\begin{array}{c} P_1,P_2\in\pipart{\pi}{\tau}{E_1}\\ P_1\neq P_2\\\end{array}}{\sum}
|P_1\pppp{E_1}P_2| +\underset{P\in\pipart{\pi}{\tau}{E_1}}{\sum}(|P|-1),$ but
$\underset{P\in\pipart{\pi}{\tau}{E_1}}{\sum}(|P|-1) = |V|-|\pipart{\pi}{\tau}{E_1}|$.
\end{proof}

This lemma makes it possible to give a single construction that shows hardness of both problems. We exploit it later in the
next section.

\section{Acyclic Hamiltonian Partition}\label{Acyclic_Hamiltonian_Partition}

The {notion} of topological sorting is difficult to manipulate and makes various formulations obscure. Therefore it is useful
to express these problems in terms of {\em acyclic partitions}. Following Borowiecki and Mih\'ok~\cite{BM} we use the following
notation. Let $\mathcal{P}$ be a graph property. A {\it$\mathcal{P}^k$-partition of a graph} $G=\langle V,E\rangle$ is a
partition $\pi=\{V_1,V_2,\ldots,V_k\}$ of $V$ such that $\DAGcut{G}{V_i}$ has property $\mathcal{P}$ for $i=1,2,\ldots,k$. A
partition $\phi$ of $G$ is called a {\em $\mathcal{P}^\ast$-partition} if there exists a positive integer $k$ such that $\phi$
is a $\mathcal{P}^k$-partition. Let us denote by $\pidigraph{G}{\pi}$ the directed graph $\langle \pi,\{\langle V_i,V_j\rangle:
1\leq i,j\leq k\wedge i\neq j\wedge  V_i\pppp{E}V_j\neq \emptyset\}\rangle$. We say that $\mathcal{P}^k$-partition $\pi$ is
acyclic if and only if $\pidigraph{G}{\pi}$ is acyclic. For simplicity, an acyclic $\mathcal{P}^k$-partition is called a {\em
$\mathcal{P}^{(k)}$-partition} and an acyclic $\mathcal{P}^\ast$-partition is called a {\em $\mathcal{P}^{(\ast)}$-partition}.
Let $G=\langle V,E\rangle$ be a \DAG{} and let $V_1$ be a subset of $V$. We define the property $\mathcal{H}$ of $G$ as: the
subgraph $\DAGcut{G}{V_1}$ has a Hamiltonian path.

\par Our main object of investigation is now the following problem:\\

\nopar{\bf Acyclic Hamiltonian Partition} ({\bf\AHP}):\\
$\strut\mkern25mu$\textsc{Instance}: A \DAG\: $G = \langle V,E\rangle$ and a positive integer $K\leq |V|$.\\
$\strut\mkern25mu$\textsc{Question}: Is there an $\mathcal{H}^{(\ast)}$-partition of $G$ of size at most $K$?\\

\noindent It is evident that  $\pipart{\pi}{\tau}{E}$ is an $\mathcal{H}^{(|\pipart{\pi}{\tau}{E}|)}$-partition of $G$
for any $\tau\in \TS(G)$. We show that an $\mathcal{H}^{(\ast)}$-partition $\pi$ of $G$ which solves \AHP{} determines the
topological sorting $\tau$ of $G$ such that $|\pipart{\pi}{\tau}{E}|\leq |\pi|$. In consequence we can consider the $1$st
\mbox{MIL} problem to be equivalent to \AHP{} in the case $E=E_1$. Hence if we show that \AHP{} is an NP-complete problem for \DAG{} without
$E$-shortcuts, then $1$st \mbox{MIL} is also NP-complete for \DAG{} without $E_1$-shortcuts, where $E_1=E$. Additionally for so
restricted instances, $1$st \mbox{MIL} and $2$nd \mbox{MIL} are equivalent by Lemma~\ref{KP_PROP1}. Based on this observation,
to show that these methods are NP-complete, we focus in Section~\ref{NPproof} only on the complexity of \AHP{}.

\par To shorten our notation, $\pipath{\pi}{p}$ denotes the Hamiltonian path of $\DAGcut{G}{p}$, where $p\in \pi$ and $\pi$ is
an $\mathcal{H}^{\ast}$-partition.
Additionally, we denote by $\pipath{\pi}{v}$ the path $\pipath{\pi}{p}$, where $v \in p$.

\begin{lem}\label{KP_LEM1}
Let $\pi$ be an $\mathcal{H}^{(\ast)}$-partition and suppose that $\pi$ solves \AHP. Then there exists $\varrho\in \TS(G)$ such
that $|\pipart{\pi}{\varrho}{E}|\leq |\pi|$.
\end{lem}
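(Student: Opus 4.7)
The plan is to construct $\varrho$ explicitly from the given partition $\pi = \{V_1,V_2,\ldots,V_k\}$. Since $\pi$ is acyclic, the quotient digraph $\pidigraph{G}{\pi}$ is a DAG, so we can fix a topological sorting $V_{i_1},V_{i_2},\ldots,V_{i_k}$ of $\pi$. Inside each block $V_{i_j}$ we order the vertices according to the Hamiltonian path $\pipath{\pi}{V_{i_j}}$ of $\DAGcut{G}{V_{i_j}}$. Concatenating these orderings defines a bijection $\varrho\colon V\longrightarrow\{1,\ldots,|V|\}$.

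To verify $\varrho\in\TS(G)$, I would pick an arbitrary arc $\langle u,v\rangle\in E$ and split into two cases. If $u\in V_{i_j}$ and $v\in V_{i_l}$ with $j\neq l$, then by definition $\langle V_{i_j},V_{i_l}\rangle\in\EE_{\pidigraph{G}{\pi}}$, so $j<l$ by the topological ordering of the blocks, and hence every vertex of $V_{i_j}$ precedes every vertex of $V_{i_l}$ under $\varrho$. If $u,v$ both lie in a single block $V_{i_j}$, then I would argue that the Hamiltonian path $\pipath{\pi}{V_{i_j}}$ is the \emph{unique} topological sorting of $\DAGcut{G}{V_{i_j}}$: $\DAGcut{G}{V_{i_j}}$ is a subgraph of the DAG $G$, hence acyclic, so any arc going ``backward'' along $\pipath{\pi}{V_{i_j}}$ would close a directed cycle with the Hamiltonian path. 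Thus the arc $\langle u,v\rangle$ is consistent with the Hamiltonian order, giving $\varrho(u)<\varrho(v)$.

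To derive the size bound, I would observe that consecutive vertices of $\pipath{\pi}{V_{i_j}}$ are, by construction, joined by $E$-arcs and occupy consecutive $\varrho$-values. Hence all vertices of $V_{i_j}$ sit on a common $\tauE{\varrho}{E}$-path, and in particular belong to the same element of $\pipart{\pi}{\varrho}{E}$. Since this holds for each of the $k$ blocks, the partition $\pipart{\pi}{\varrho}{E}$ can only coarsen $\pi$ (when the last vertex of some block and the first vertex of the next are joined by an $E$-arc), and we conclude $|\pipart{\pi}{\varrho}{E}|\leq k=|\pi|$.

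The substantive step is the uniqueness claim for the topological sorting inside each block; everything else is routine bookkeeping. Once that observation is in place, both the verification that $\varrho$ respects all arcs of $G$ and the comparison of partition sizes follow directly from how the blocks were concatenated in the order prescribed by $\pidigraph{G}{\pi}$.
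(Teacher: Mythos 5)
Your proposal is correct and follows essentially the same route as the paper: take a topological sorting of the quotient digraph $\pidigraph{G}{\pi}$, order each block along its Hamiltonian path, concatenate to get $\varrho$, and note that each block then lies inside a single maximal $\tauE{\varrho}{E}$-path. The paper merely asserts that $\varrho$ is a topological sorting, whereas you supply the (correct) justification via the uniqueness of the topological order of a DAG possessing a Hamiltonian path; this is a welcome elaboration, not a different argument.
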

\begin{proof}
{As $\pi$ is acyclic we can take} $\tau\in \TS(\pidigraph{G}{\pi})$. {Let} $\pipath{\pi}{P}=(u_1^P,u_2^P,\ldots,u_{|P|}^P)$,
for $P\!\in\! \pi$. {Let $\varrho:\vertexOF{G}\rightarrow\{1,2,\ldots,|\vertexOF{G}|\}$ be a function} given by the formula
$\varrho(u_i^Q) = i\!+\!\underset{R\in\, \pi:\tau(R)<\tau(Q)}{\sum}|R|$, where $1\leq i\leq |Q|$, $Q\in \pi$. {Observe that
$\varrho$ is a topological sorting of $G=\langle V,E\rangle$}. Obviously, every Hamiltonian path $\pipath{\pi}{P}$ is also a
$\tauE{\varrho}{E}$-path, for $P \in \pi$. Therefore, the number of all maximal $\tauE{\varrho}{E}$-paths is not greater than
the cardinality of $\pi$.
\end{proof}

\section{The NP-completeness of the \AHP{} problem} \label{NPproof}
It is clear that \AHP{} is in NP. We can guess a partition and verify that it is indeed an $\mathcal{H}^{(\ast)}$-partition. We
give here a decision problem for directed graphs and its transformation to \AHP{} that proves that \AHP{} is {NP-hard}. Given
$G=\langle V,E \rangle$, a {\em feedback arc set} is a subset $E^\prime$ of $E$ {that} contains at least one arc from each
directed cycle of $G$. The following problems are NP-complete (see \mbox{GT1}, \mbox{GT8} in \cite{NPLIST}).\\

\nopar{\bf Feedback Arc Set} ({\FAS}):\\
$\strut\mkern25mu$\textsc{Instance}: A directed graph $G = \langle V,E\rangle$ and a positive integer $K\leq |V|$.\\
$\strut\mkern25mu$\textsc{Question}: Is there a feedback arc set of $G$ of size at most $K$?\\

\nopar{\bf Vertex Cover} (\VS):\\
$\strut\mkern25mu$\textsc{Instance}: An undirected graph $G = \langle V,E\rangle$ and a positive integer $K\leq |V|$.\\
$\strut\mkern25mu$\textsc{Question}: Is there a vertex cover of size at most $K$, i.e.\ a subset $V^\prime\subseteq V$ with
$|V^\prime|\leq K$ such that for each edge $\{u,v\}\in E$ at least one of $u$ or $v$ belongs to $V^\prime$?\\

\par Transformation of \VS{} to \FAS{} has been considered by R. M. Karp \cite{RACP}.
Using this transformation we can prove that \FAS{} is NP-complete even for digraphs in which all vertices have in-degree or
out-degree equal to $1$.
\begin{lem}\label{KP_PROP2}
\FAS{} is  NP-complete even for directed graphs without self-loops, in which additionally all vertices have in-degree or
out-degree equal to $1$.
\end{lem}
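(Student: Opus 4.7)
The plan is to use Karp's reduction from Vertex Cover to Feedback Arc Set, and verify that the digraph produced already satisfies the claimed structural restrictions. Given an instance $\langle G,K\rangle$ of \VS{} with undirected $G=\langle V,E\rangle$, I would build a digraph $G'=\langle V',E'\rangle$ by splitting each vertex $v\in V$ into an ``in-copy'' $v^{-}$ and an ``out-copy'' $v^{+}$, joining them by an \emph{internal arc} $\langle v^{-},v^{+}\rangle$, and by replacing every undirected edge $\{u,v\}\in E$ with two \emph{cross arcs} $\langle u^{+},v^{-}\rangle$ and $\langle v^{+},u^{-}\rangle$. Then $G'$ has no self-loops, every in-copy $v^{-}$ has out-degree exactly $1$ (its only outgoing arc is the internal one), and every out-copy $v^{+}$ has in-degree exactly $1$ (its only incoming arc is the internal one); hence every vertex of $G'$ satisfies the degree restriction claimed in the lemma.

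Next I would prove the equivalence ``$G$ has a vertex cover of size $\leq K$ iff $G'$ has a feedback arc set of size $\leq K$''. For the forward direction, given a cover $C\subseteq V$, I would take $F=\{\langle v^{-},v^{+}\rangle : v\in C\}$ and argue that every directed cycle in $G'$ hits $F$: since every out-copy $v^{+}$ is reached only through the internal arc $\langle v^{-},v^{+}\rangle$, any cycle that uses a cross arc $\langle u^{+},v^{-}\rangle$ must also traverse the internal arcs at both $u$ and $v$, and the cover property guarantees one of them lies in $F$.

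For the backward direction I need the standard normalization argument: any feedback arc set $F$ of $G'$ can be replaced, without increasing its size, by one consisting solely of internal arcs. Indeed, if $\langle u^{+},v^{-}\rangle\in F$ is a cross arc, I would swap it with the internal arc $\langle u^{-},u^{+}\rangle$; this preserves the feedback property because the only way to reach $u^{+}$ is through $\langle u^{-},u^{+}\rangle$, so every cycle previously cut at $\langle u^{+},v^{-}\rangle$ is still cut. Once $F=\{\langle v^{-},v^{+}\rangle : v\in C\}$ for some $C\subseteq V$, the four-cycle $u^{-}\to u^{+}\to v^{-}\to v^{+}\to u^{-}$ associated with any edge $\{u,v\}\in E$ must be broken by $F$, forcing $u\in C$ or $v\in C$, so $C$ is a vertex cover.

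The reduction is clearly polynomial and \FAS{} is already known to be in NP, so combined with the NP-hardness of \VS{} this yields NP-completeness of \FAS{} under the stated restriction. The main technical obstacle is the normalization step in the backward direction; everything else is a direct structural inspection of the gadget, which is why I would write the proof in that order (construction, degree check, easy direction, normalization, hard direction).
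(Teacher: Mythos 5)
Your proposal is correct and follows essentially the same route as the paper: both use Karp's vertex-splitting construction (your $v^{-},v^{+}$ are the paper's $\langle v,0\rangle,\langle v,1\rangle$) and observe that every in-copy has out-degree $1$ and every out-copy has in-degree $1$, with no self-loops. The only difference is that the paper simply cites Karp for the correctness of the reduction and checks only the structural restrictions, whereas you re-derive the cover/feedback-arc-set equivalence (including the normalization to internal arcs), which is a valid but not required elaboration.
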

\begin{proof}
Given an undirected graph $G=\langle V,E \rangle$, we follow R. M. Karp and construct the directed graph $G^\prime = \langle
V^\prime, E^\prime\rangle$ as
\begin{equation}
\begin{array}{rcl}
V^\prime &=& V\times \{0,1\},\\ E^\prime&=&\{\langle \langle v,0\rangle,\langle v,1\rangle\rangle:v\in V\}\cup
\{\langle \langle u,1\rangle,\langle v,0\rangle\rangle:\{u,v\}\in E\}.
\end{array}
\end{equation}
There are no self-loops as for $\langle\langle u,i\rangle,\langle v,j\rangle\rangle\in E^\prime$ we have $i\neq j$. The
out-degree of vertices $\langle u,0\rangle$ is 1 and in-degree of vertices $\langle v,1\rangle$ is 1. The directed graph
$G^\prime$ has the property formulated in the proposition.
\end{proof}

\par
We transform the subcase of \FAS{} that is specified in Lemma~\ref{KP_PROP2} to \AHP{}. Let $G=\langle V,E \rangle$ be a
directed graph without self-loops, in which additionally all vertices have in-degree or out-degree equal to $1$.
The case $V=\emptyset$ is obvious, therefore we can assume that $|V|>0$. Note that since there exists at last one vertex with
in-degree or out-degree equal to $1$, we have $|E|>0$.   Let us fix a one-to-one function $\e:E\longrightarrow \{1,2,\ldots,|E|\}$. We construct a \DAG{}
$\DD{G}{\e}=\langle V_G,E_G\rangle$ without $E_G$-shortcuts (Def.~\ref{DAG}) and we prove that there exists $\F$ that is a
feedback arc set of $E$ which has size at most $K$ if and only if there exists an $\mathcal{H}^{(\ast)}$-partition $\pi_\F$ of
$\DD{G}{\e}$ with size at most $|V|\cdot(|E|+1)+K$. The set $V_G$ of vertices is {divided} in two separate classes,  grouped
according to their intended function: ``gadgets'', {each of which} corresponds to a single vertex of $V$, and ``glues'', which
correspond to a single arc from $E$. We divide this construction in
 sequence of steps and statements, included into three subsections. In Subsection \ref{OG} we introduce the notion of gadgets
 and prove their  {essential}  properties.
Next in Subsection \ref{SS2} we complete the definition of $\DD{G}{\e}$ by presentation of ``glues'' and construct  $\pi_\F $
using $\F$. At the end of Subsection \ref{SS3} we construct {the feedback arc set} $\F$ using $\pi_\F$.

\par Let us fix the notation $\G=\langle \V,\E\rangle$ for a
directed graph without self-loops, in which all vertices have in-degree or out-degree equal to $1$.

\subsection{The Gadgets}\label{OG}

Let us take a vertex $r\in \V$. We call a graph $\gadget_r=\langle \gadgetV_r,\gadgetE_r\rangle$ ({see Fig.~\ref{G1}}) the
  {\em gadget corresponding to $r$}, {and define it} as:
\begin{equation}
\begin{array}{rcl}
\gadgetV_r &=& \{r_{i,j}:0\leq i,j\leq |\E|\},\\
\gadgetE_r &=& \LL_r\cup \RR_r,\\
\end{array}\label{Gadget_def_5.2}
\end{equation}
where
\begin{equation}
\begin{array}{lcl}
\LL_r&\coloneq&\{\langle r_{i,j},r_{i,j+1}\rangle:0\leq i,j< |\E|\}\cup
\{\langle r_{|\E|,i},r_{|\E|,i+1}\rangle: 0\leq i<|\E|\},\\
\RR_r&\coloneq&\{\langle r_{i,j},r_{i+1,j}\rangle:0\leq i,j< |\E|\}\cup
\{\langle r_{i,|\E|},r_{i+1,|\E|}\rangle: 0\leq i<|\E|\}.
\\
\end{array}
\end{equation}
Let us introduce notations for $\gadgetE_r$--paths:
\begin{equation}
\begin{array}{lcrclcr}
\LL_r^{i}&\coloneq&(r_{i,0},r_{i,1},\ldots,r_{i,|\E|}),&\strut\:\:&
\RR_r^{i}&\coloneq&(r_{0,i},r_{1,i},\ldots,r_{|\E|,i}),\\
\end{array}
\end{equation}
where $0\leq i\leq |\E|$, and for $\mathcal{H} ^{(|\E|+1)}$-partitions of $\gadget_r$:
\begin{equation}
\begin{array}{lcrclcr}
\mathcal{L}_r&\coloneq&\{\vertexOF{\LL_r^{i}}:0\leq i\leq |\E|\}\mbox{,}&\strut\:\:&
\mathcal{R}_r&\coloneq&\{\vertexOF{\RR_r^{i}}:0\leq i\leq |\E|\}\mbox{.}\\
\end{array}
\end{equation}
For simplicity of notation an $\gadgetE_r$-path $P$ is denoted by $\LL^\ast_r$ if there exists a positive integer $i$ such that
$P=\LL^i_r$, and analogously we denote by $\RR^\ast_r$ an $\gadgetE_r$-path $P$ if there exists a positive integer $j$ such
that $P=\RR^j_r$.
\begin{figure}[ht!]
\begin{center}
\begin{tikzpicture}
[>=stealth,->,shorten >=2pt,looseness=.5,auto]
    \matrix (M)[matrix of math nodes,
        column sep={\g,between origins},
        row sep={\g,between origins},
        nodes=mynodes]
        {
&&&&|(A00)|&&&&\\ &&&|(A01)|&&|(A10)|&&&\\ &&|(A02)|&&|(A11)|&&|(A20)|&&\\ &|(A03)|&&|(A12)|&&|(A21)|&&|(A30)|&\\
|(A04)|&&|(A13)|&&|(A22)|&&|(A31)|&&|(A40)|\\ &|(A14)|&&|(A23)|&&|(A32)|&&|(A41)|&\\ &&|(A24)|&&|(A33)|&&|(A42)|&&\\
&&&|(A34)|&&|(A43)|&&&\\ &&&&|(A44)|&&&&\\
        };
    \tikzstyle{every node}=[font=\small\itshape]
\draw[Referencja](A00)--(A01);\draw[Referencja](A00)--(A10);\draw[Referencja](A10)-- (A11);\draw[Referencja](A10) --
(A20);\draw[Referencja](A01) -- (A11);\draw[Referencja](A01) -- (A02);\draw[Referencja,dotted](A02) --
(A03);\draw[Referencja](A02) -- (A12);\draw[Referencja](A11) -- (A12);\draw[Referencja](A11) -- (A21);\draw[Referencja](A20) --
(A21);\draw[Referencja,dotted](A20) -- (A30);\draw[Referencja](A30) -- (A40);    \draw[Referencja](A30) -- (A31);
\draw[Referencja,dotted](A21) -- (A31);        \draw[Referencja](A21) -- (A22);        \draw[Referencja](A12) -- (A22);
\draw[Referencja,dotted](A12) -- (A13);        \draw[Referencja](A03) -- (A13);        \draw[Referencja](A03) -- (A04);
\draw[Referencja](A04) -- (A14);        \draw[Referencja](A13) -- (A14);        \draw[Referencja](A13) -- (A23);
\draw[Referencja,dotted](A22) -- (A23);       \draw[Referencja,dotted](A22) -- (A32);\draw[Referencja](A31) --
(A32);\draw[Referencja](A31) -- (A41);\draw[Referencja](A40) -- (A41);\draw[Referencja](A40) -- (A41);\draw[Referencja](A41) --
(A42);\draw[Referencja](A32) -- (A42);\draw[Referencja,dotted](A32) -- (A33);\draw[Referencja,dotted](A23) --
(A33);\draw[Referencja](A23) -- (A24);\draw[Referencja](A14) -- (A24);\draw[Referencja,dotted](A24) --
(A34);\draw[Referencja](A33) -- (A34);\draw[Referencja](A33) -- (A43);\draw[Referencja,dotted](A42) --
(A43);\draw[Referencja,-,dotted](A22) -- (A33);\draw[Referencja](A43) -- (A44);\draw[Referencja](A34) -- (A44);
\node at (A00) [right,inner sep=2mm]{$r_{0,0}$};
\node at (A01) [right,inner sep=2mm]{$r_{0,1}$};
\node at (A10) [right,inner sep=2mm]{$r_{1,0}$};
\node at (A11) [right,inner sep=2mm]{$r_{1,1}$};
\node at (A20) [right,inner sep=2mm]{$r_{2,0}$};
\node at (A02) [right,inner sep=2mm]{$r_{0,2}$};
\node at (A03) [right,inner sep=2mm]{$r_{0,|\E|\!-\!1}$};
\node at (A04) [right,inner sep=2mm]{$r_{0,|\E|}$};
\node at (A13) [right,inner sep=2mm]{$r_{1,|\E|\!-\!1}$};
\node at (A14) [right,inner sep=2mm]{$r_{1,|\E|}$};
\node at (A12) [right,inner sep=2mm]{$r_{1,2}$};
\node at (A21) [right,inner sep=2mm]{$r_{2,1}$};
\node at (A30) [right,inner sep=2mm]{$r_{|\E|\!-\!1,0}$};
\node at (A22) [right,inner sep=2mm]{$r_{2,2}$};
\node at (A31) [right,inner sep=2mm]{$r_{|\E|\!-\!1,1}$};
\node at (A40) [right,inner sep=2mm]{$r_{|\E|,0}$};
\node at (A23) [right,inner sep=2mm]{$r_{2,|\E|\!-\!1}$};
\node at (A32) [right,inner sep=2mm]{$r_{|\E|\!-\!1,2}$};
\node at (A41) [right,inner sep=2mm]{$r_{|\E|,1}$};
\node at (A24) [right,inner sep=2mm]{$r_{2,|\E|}$};
\node at (A33) [right,inner sep=1.6mm]{ $r_{|\!\mkern1mu\E\!\mkern1mu|\!-\!1,|\!\mkern1mu\E\!\mkern1mu|\!-\!1}$};
\node at (A42) [right,inner sep=2mm]{$r_{|\E|,2}$};
\node at (A34) [right,inner sep=2mm]{$r_{|\E|\!-\!1,|\E|}$};
\node at (A43) [right,inner sep=2mm]{$r_{|\E|,|\E|\!-\!1}$};
\node at (A44) [right,inner sep=2mm]{$r_{|\E|,|\E|}$};
\end{tikzpicture}
\vspace{-0.5cm}
\end{center}
\caption{The gadget $\gadget_r$.}
\label{G1}
\end{figure}

\par Obviously $\gadget_r$ is a \DAG{} without $\gadgetE_r$-shortcuts,
since an arbitrary $\gadgetE_r$-arc $\langle r_{i_1,j_1},r_{i_2,j_2}\rangle$ fulfils the condition $i_1\!+\!j_1\!+\!1\! =\!
i_2\!+\!j_2$. Moreover, a set of $\gadgetE_r$-paths that can determine an $\mathcal{H}^{(\ast)}$-partition of
$\gadget_r$ is included only in paths that are subsequences of $\LL^\ast_r$ or $\RR^\ast_r$. {It is expressed more precisely in
the following lemma.}

\begin{lem}\label{R2_S1_WN1}
Let $\pi$ be an $\mathcal{H}^{(\ast)}$-partition of $\gadget_r$ and $P\in \pi$. Then the path $\pipath{\pi}{P}$
has {arcs in $\LL_r$ only or in $\RR_r$ only}.
\end{lem}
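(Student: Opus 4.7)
The plan is to exploit the ``antidiagonal'' structure of $\gadget_r$ to show that the Hamiltonian path of $\DAGcut{\gadget_r}{P}$ cannot switch arc types, and then combine this with acyclicity of $\pidigraph{\gadget_r}{\pi}$ to finish. The first observation I would record is that every arc $\langle r_{i_1,j_1},r_{i_2,j_2}\rangle$ of $\gadget_r$ satisfies $i_2+j_2 = i_1+j_1+1$. Hence along the path $\pipath{\pi}{P}=(u_1,\ldots,u_{|P|})$ the quantity $i+j$ strictly increases, so $P$ contains at most one vertex on each antidiagonal $\{r_{i,j}:i+j=c\}$.

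Assuming toward a contradiction that $\pipath{\pi}{P}$ contains both an $\LL_r$-arc and an $\RR_r$-arc, at some index $k$ two consecutive arcs of $\pipath{\pi}{P}$ have different types. Up to the obvious symmetry between $\LL_r$ and $\RR_r$, I would treat the configuration $r_{i,j-1} \xrightarrow{\LL_r} r_{i,j} \xrightarrow{\RR_r} r_{i+1,j}$ (the other case, $r_{i-1,j} \xrightarrow{\RR_r} r_{i,j} \xrightarrow{\LL_r} r_{i,j+1}$, being handled identically by exchanging the roles of the two coordinates). In this configuration the indices automatically lie in the ranges needed for the existence of the arcs $r_{i,j-1}\to r_{i+1,j-1}\in\RR_r$ and $r_{i+1,j-1}\to r_{i+1,j}\in\LL_r$, so the ``completing'' vertex $r_{i+1,j-1}$ is a legitimate vertex of $\gadget_r$ with incoming arc from $r_{i,j-1}\in P$ and outgoing arc to $r_{i+1,j}\in P$.

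Now I would split on where $r_{i+1,j-1}$ lands under $\pi$. If $r_{i+1,j-1}\in P$, then $P$ contains two distinct vertices, $r_{i,j}$ and $r_{i+1,j-1}$, on the same antidiagonal $i+j$, contradicting the first observation. If $r_{i+1,j-1}\in Q$ for some $Q\neq P$, then the arc $\langle r_{i,j-1},r_{i+1,j-1}\rangle$ yields $\langle P,Q\rangle\in\E_{\pidigraph{\gadget_r}{\pi}}$ while the arc $\langle r_{i+1,j-1},r_{i+1,j}\rangle$ yields $\langle Q,P\rangle\in\E_{\pidigraph{\gadget_r}{\pi}}$, producing a 2-cycle in $\pidigraph{\gadget_r}{\pi}$ and contradicting that $\pi$ is acyclic. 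Either way we reach a contradiction, so $\pipath{\pi}{P}$ uses only arcs of one of $\LL_r$, $\RR_r$.

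The only delicate point is bookkeeping the index ranges to guarantee the ``square completion'' vertex $r_{i+1,j-1}$ (respectively $r_{i-1,j+1}$) together with its two connecting arcs is always present; this follows mechanically from the constraints on $i,j$ implied by the type-switch in $\pipath{\pi}{P}$, and will be the one routine case-check in the proof.
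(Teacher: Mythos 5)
Your proposal is correct and follows essentially the same route as the paper's proof: observe that the antidiagonal sum $i+j$ increases by exactly one along each arc so that $\pipath{\pi}{P}$ meets each antidiagonal at most once, locate a type-switch as two consecutive arcs of different kinds, and use the fourth corner of the resulting square to produce either a repeated antidiagonal in $P$ or a $2$-cycle in $\pidigraph{\gadget_r}{\pi}$. The paper phrases the first alternative as directly showing $\pipath{\pi}{P}\neq\pipath{\pi}{r_{i+1,j}}$ via the same antidiagonal inequalities, so the two arguments coincide.
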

\begin{proof}
{The proof is by contradiction.} Suppose that $\pipath{\pi}{P}$ uses at least one $\LL_r$-arc and at least one $\RR_r$-arc.
Note that for each consecutive elements of this path $r_{i_1,j_1}$, $r_{i_2,j_2}$ we have $i_1+j_1<i_2+j_2$. Then from the
assumption about $\LL_r$-arcs and $\RR_r$-arc $G$, we obtain that $\pipath{\pi}{P}$ has $\langle r_{i,j},r_{i,j\!+\!1}\rangle$,
$\langle r_{i,j\!+\!1},r_{i\!+\!1,j\!+\!1}\rangle$ arcs or $\langle r_{i,j},r_{i\!+1,j}\rangle$, $\langle
r_{i\!+1,j},r_{i\!+\!1,j\!+\!1}\rangle$, for some $0\leq i,j<|\E|$. We prove only the first case, the second case is analogous.
Since $i+1+j< i+1+j+1<k+l$ for each $r_{k,l}\in P$ after $r_{i+1,j+1}$ and $i+1+j>i+j>k+l$ for each $r_{k,l}\in P$ before
$r_{i,j}$, we obtain that $\pipath{\pi}{P}\neq\pipath{\pi}{r_{i\!+\!1,j}}$. Then \mbox{$\langle
P,\vertexOF{\pipath{\pi}{r_{i\!+\!1,j}}}\rangle$, $\langle \vertexOF{\pipath{\pi}{r_{i\!+\!1,j}}},P\rangle$} are arcs in
$\pidigraph{\gadget_r^n}{\pi}$, which generates a directed cycle in the acyclic graph $\pidigraph{\gadget_r^n}{\pi}$. {This is
a contradiction with the assumption that $\pi$ is an $\mathcal{H}^{(\ast)}$-partition.}
\end{proof}

{We can now prove that either parts of all $\LL_r^i$ or all $\RR_r^i$ can be found in an $\mathcal{H}
^{(\ast)}$-partition.}

\begin{lem}\label{R2_S1_new_1}
Let $\pi$ be an $\mathcal{H}^{(\ast)}$-partition of $\gadget_r$. Then {for each $i\in\{0,1,\ldots,|\E|\}$} there
exists $L_i\in \pi$ such that $\pipath{\pi}{L_i}$ is a subsequence of $\LL^i_r$ or {for each $i\in\{0,1,\ldots,|\E|\}$} there
exists $R_i\in \pi$ such that $\pipath{\pi}{R_i}$ is a subsequence of $\RR^i_r$.
\end{lem}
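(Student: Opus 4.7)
The plan is to argue by contrapositive on the first disjunct. Suppose there exists $i_0\in\{0,1,\ldots,|\E|\}$ such that no $P\in\pi$ has $\pipath{\pi}{P}$ a subsequence of $\LL^{i_0}_r$. I will then show that the second disjunct must hold, i.e.\ for every $j\in\{0,1,\ldots,|\E|\}$ there exists $R_j\in\pi$ with $\pipath{\pi}{R_j}$ a subsequence of $\RR^j_r$.

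For an arbitrary $j$, let $P_j\in\pi$ be the unique block containing $r_{i_0,j}$; the goal is to show that $P_j$ itself can serve as the required $R_j$, i.e.\ that $\vertexOF{\pipath{\pi}{P_j}}\subseteq\vertexOF{\RR^j_r}$. The key ingredient is Lemma~\ref{R2_S1_WN1}, which asserts that $\pipath{\pi}{P_j}$ uses only $\LL_r$-arcs or only $\RR_r$-arcs. I will combine this with the immediate observations that $\LL_r$-arcs preserve the first coordinate of $r_{i,j}$ while $\RR_r$-arcs preserve the second coordinate.

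The case analysis then has three branches. If $P_j=\{r_{i_0,j}\}$ is a singleton, then trivially $\{r_{i_0,j}\}\subseteq\vertexOF{\LL^{i_0}_r}$, so the one-vertex path $\pipath{\pi}{P_j}$ is a subsequence of $\LL^{i_0}_r$, contradicting the choice of $i_0$. If $\pipath{\pi}{P_j}$ contains at least one arc and only $\LL_r$-arcs, then coordinate preservation forces every vertex of $P_j$ to have first coordinate $i_0$, so again $\vertexOF{\pipath{\pi}{P_j}}\subseteq\vertexOF{\LL^{i_0}_r}$, contradicting the choice of $i_0$. The remaining alternative is that $\pipath{\pi}{P_j}$ uses only $\RR_r$-arcs; then every vertex of $P_j$ has second coordinate $j$, hence $\vertexOF{\pipath{\pi}{P_j}}\subseteq\vertexOF{\RR^j_r}$, and $R_j:=P_j$ works for this $j$.

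I do not anticipate a genuine obstacle: the nontrivial combinatorial work is already packaged in Lemma~\ref{R2_S1_WN1}, and the remainder amounts to reading off the two coordinate-preservation properties from the definitions of $\LL_r$ and $\RR_r$. The only subtle point is the singleton case, which must be treated separately because a one-vertex path is both $\LL$-only and $\RR$-only at once, and a careless argument could accidentally classify it as the wrong kind of segment and miss the contradiction with the choice of $i_0$.
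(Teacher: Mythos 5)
Your proposal is correct and follows essentially the same route as the paper: negate the first disjunct for some $i_0$, look at the blocks containing $r_{i_0,0},\ldots,r_{i_0,|\E|}$, and use Lemma~\ref{R2_S1_WN1} together with the coordinate-preservation of $\LL_r$- and $\RR_r$-arcs to conclude each such block's path is a subsequence of the corresponding $\RR^j_r$. Your explicit handling of the singleton case is a detail the paper leaves implicit, but the argument is the same.
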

\begin{proof}
Suppose that there exists $i$, $0\leq i\leq |\E|$, such that for every $L\in\pi$, $\pipath{\pi}{L}$ is not a subsequence of
$\LL^i_r$. Then from Lemma~\ref{R2_S1_WN1} we conclude that
$\pipath{\pi}{r_{i,0}},\pipath{\pi}{r_{i,1}},\ldots,\pipath{\pi}{r_{i,|\E|}}$ are pairwise different paths and
$\pipath{\pi}{r_{i,j}}$ is a subsequence of $\RR^j_r$, for $j=0,1,\ldots,|\E|$, which completes the proof.
\end{proof}

\par An easy computation shows that from Lemma~\ref{R2_S1_WN1} and~\ref{R2_S1_new_1} we can infer the following lemma.
\begin{lem}\label{R2_S1_T1}
An arbitrary $\mathcal{H} ^{(\ast)}$-partition of $\gadget_r$ has size at least $|\E|+1$. Moreover, there exist
exactly two $\mathcal{H} ^{(|\E|+1)}$-partitions of $\gadget_r$, they are $\mathcal{L}_r$, $\mathcal{R}_r$.
\end{lem}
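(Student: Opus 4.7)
The plan is to extract the lower bound directly from Lemma~\ref{R2_S1_new_1}, and then close the gap to exact equality via a simple counting argument against the size $|\gadgetV_r|=(|\E|+1)^2$ of the gadget's vertex set.

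First, for the lower bound, I would take an arbitrary $\mathcal{H}^{(\ast)}$-partition $\pi$ of $\gadget_r$ and apply Lemma~\ref{R2_S1_new_1}. In the first alternative, for each $i\in\{0,1,\ldots,|\E|\}$ there is some $L_i\in\pi$ with $\pipath{\pi}{L_i}$ a subsequence of $\LL_r^i$; since the rows $\vertexOF{\LL_r^0},\ldots,\vertexOF{\LL_r^{|\E|}}$ are pairwise disjoint subsets of $\gadgetV_r$, the elements $L_0,\ldots,L_{|\E|}$ of $\pi$ are pairwise distinct, so $|\pi|\geq|\E|+1$. The second alternative is symmetric with columns.

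Next, before tackling uniqueness I would verify that $\mathcal{L}_r$ and $\mathcal{R}_r$ are indeed $\mathcal{H}^{(|\E|+1)}$-partitions. Each class of $\mathcal{L}_r$ admits a Hamiltonian path by the very definition of $\LL_r^i$; and $\pidigraph{\gadget_r}{\mathcal{L}_r}$ is acyclic because inter-class arcs come only from $\RR_r$, which connect row $i$ to row $i+1$, inducing a linear order on the parts. The same holds for $\mathcal{R}_r$ by symmetry.

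For the exactness claim, assume $|\pi|=|\E|+1$ and (without loss of generality) that the first alternative of Lemma~\ref{R2_S1_new_1} holds. Then the pairwise distinct classes $L_0,\ldots,L_{|\E|}$ already exhaust $\pi$. Since $\pi$ partitions $\gadgetV_r$ with $|\gadgetV_r|=(|\E|+1)^2$ and $|L_i|\leq|\vertexOF{\LL_r^i}|=|\E|+1$, summing the cardinalities forces equality for every $i$, so $L_i=\vertexOF{\LL_r^i}$ and $\pi=\mathcal{L}_r$. The symmetric choice yields $\pi=\mathcal{R}_r$.

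The one subtlety I would address carefully is the possibility that \emph{both} alternatives of Lemma~\ref{R2_S1_new_1} happen to hold for the same minimum partition: the counting argument above would then force $\pi=\mathcal{L}_r$ and $\pi=\mathcal{R}_r$ simultaneously, which is impossible since each row and each column of $\gadget_r$ meet in a single vertex while $|\E|\geq 1$. Hence exactly one alternative holds in the minimum case, and the two $\mathcal{H}^{(|\E|+1)}$-partitions of $\gadget_r$ are precisely $\mathcal{L}_r$ and $\mathcal{R}_r$. The only computational step is the cardinality equality, so I do not expect a real obstacle beyond keeping the row/column bookkeeping tidy.
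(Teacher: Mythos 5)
Your proof is correct and follows essentially the route the paper intends: the paper dismisses this lemma as ``an easy computation'' from Lemmas~\ref{R2_S1_WN1} and~\ref{R2_S1_new_1}, and your argument---distinctness of the classes $L_i$ (resp.\ $R_i$) via disjointness of rows (resp.\ columns) for the lower bound, followed by the cardinality count against $|\gadgetV_r|=(|\E|+1)^2$ for exactness---is precisely the natural filling-in of that computation. The extra care about both alternatives of Lemma~\ref{R2_S1_new_1} holding simultaneously is harmless but not strictly needed, since either alternative alone already pins a minimum-size $\pi$ down to one of $\mathcal{L}_r$, $\mathcal{R}_r$.
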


\subsection{The \DAG{} \texorpdfstring{$\DD{\G}{\e}$}{P(G,e)}}\label{SS2}

In the construction of the digraph $\DD{\G}{\e}$ we use a one-to-one function $\e$, but its choice does not affect the
properties of the considered digraph.

\begin{defi}\label{DAG}
Let $\e:\E\longrightarrow \{1,2,\ldots,|\E|\}$ be a one-to-one function. We define {a graph} $\DD{\G}{\e} \coloneq \langle
\VV,\EE\rangle$ (see Fig.~\ref{G_konstr}) as:
\begin{equation}
\begin{array}{rcl}
\VV &=& \underset{v \in \V}{\bigcup}\,\gadgetV_v \:\cup\:\E,\\
\EE & = &
\underset{v \in \V}{\bigcup}\,\gadgetE_v \:\cup\:
\{\langle \langle v,u\rangle,v_{\ee{\langle v,u\rangle},0}\rangle: \langle v,u\rangle\in \E\}\cup\\
&&\strut\mkern92mu\{\langle \langle v,u\rangle,u_{0,\ee{\langle v,u\rangle}}\rangle: \langle v,u\rangle\in \E\}.
\end{array}\label{KP_eq_14}
\end{equation}
\end{defi}
Clearly, $\DAGcut{\DD{\G}{\e}}{\gadgetV_v} = \gadget_v$ for every $v\in \V$.

\noindent In the following considerations, we construct an $\mathcal{H}^{(\ast)}$-partition $\pi(\F)$ of $\DD{\G}{\e}$ for {a} given feedback arc set $\F$ of $\G$, which fulfils
$|\pi(\F)|=|\V|\cdot(|\E|+1)+|\F|$. Note that throughout this construction the {size of} in-degree and out-degree in $\G$ is
not used.
\begin{figure}[h!]
\begin{center}
\begin{tikzpicture}
\def\c{1.1cm}
\def\r{1.1*1.73cm}
\small
     [>=stealth,->,shorten >=2pt,looseness=.5,auto]
    \matrix (M)[matrix of math nodes,
        column sep={\c,between origins},
        row sep={\r,between origins},
        nodes=mynodes]
{ &\node(W) {w};&\\
\node(V) {v};&&\node(U) {u};\\
&&\\ };
\draw[Referencja,bend left,looseness=0.9](W)to(V);
\draw[Referencja](V)--(U);
\draw[Referencja,bend left,looseness=0.9](V)to(W);
\draw[Referencja](U)--(W);
\node at (W)[left,inner sep=11mm]{$\G$};
\end{tikzpicture}$\strut\mkern-80mu$
\begin{tikzpicture}
\def\c{1.3cm}
\def\r{0.9cm}
\small
     [>=stealth,->,shorten >=2pt,looseness=.5,auto]
    \matrix (M)[matrix of math nodes,
        column sep={\c,between origins},
        row sep={\r,between origins},
        nodes={mynodes,rectangle, rounded corners}]
{ &\node(B0) {(\langle w,v\rangle)};&\\
\node(A1) {\strut\;\LL_v^{0}\;};&
\node(B1) {\strut\;\LL_u^{0}\;};&
\node(C1) {\strut\;\LL_w^{0}\;};\\
\node(A2) {\LL^{\langle v,u \rangle}};&
\node(B2) {\strut\;\LL_u^{1}\;};&
\node(C2) {\strut\;\LL_w^{1}\;};\\
\node(A3) {\strut\;\LL_v^{2}\;};&
\node(B3) {\LL_u^{\langle u,w \rangle}};&
\node(C3) {\strut\;\LL_w^{2}\;};\\
\node(A4) {\strut\;\LL_v^{3}\;};&
\node(B4) {\strut\;\LL_u^{3}\;};&
\node(C4) {\strut\;\LL_w^{3}\;};\\
\node(A5) {\LL^{\langle v,w \rangle}};&
\node(B5) {\strut\;\LL_u^{4}\;};&
\node(C5) {\strut\;\LL_w^{4}\;};\\
};
\draw[Referencja](A1)--(A2);
\draw[Referencja](A2)--(A3);
\draw[Referencja](A3)--(A4);
\draw[Referencja](A4)--(A5);
\draw[Referencja](B1)--(B2);
\draw[Referencja](B2)--(B3);
\draw[Referencja](B3)--(B4);
\draw[Referencja](B4)--(B5);
\draw[Referencja](C1)--(C2);
\draw[Referencja](C2)--(C3);
\draw[Referencja](C3)--(C4);
\draw[Referencja](C4)--(C5);
\draw[Referencja](A2)--(B1);
\draw[Referencja]($0.7*(B3.north east)+0.3*(B3.north)$)--
($0.7*(C1.south west)+0.3*(C1.west)$);
\draw[Referencja](B0).. controls ($(B0)+(\c*2,0)$) and ($(C4)+(\c,1.3*\r)$).. (C4);
\draw[Referencja](B0).. controls ($(B0)+(-\c*2,0)$) and ($(A4)+(-\c,1.3*\r)$).. (A4);
\draw[Referencja]($(A5.south east)$).. controls ($(A5)+(\c*3,-1.4*\r)$) and ($(C1)+(1*\c,-4*\r)$).. ($(C1.south east)$);
\node at (B0)[left,inner sep=19mm]{$\pidigraph{\DD{\G}{\e}}{\pi({\{\langle w,v\rangle\}})}$};
\end{tikzpicture}\\
\vspace{-1cm}
\begin{tikzpicture}
\def\s{0.4cm}
\def\p{0.125cm}
\tiny
[>=stealth,->,shorten >=2pt,looseness=.5,auto]
    \matrix (M)[matrix of math nodes,
        column sep={\s,between origins},
        row sep={\s,between origins},
        nodes=mynodes]
{ &&&&|(A00)|&&&&                              && &&&&|(B00)|&&&&                              && &&&&|(C00)|&&&&
\\ &&&|(A01)|&&|(A10)|&&&                       && &&&|(B01)|&&|(B10)|&&&                       && &&&|(C01)|&&|(C10)|&&&
\\ &&|(A02)|&&|(A11)|&&|(A20)|&&                && &&|(B02)|&&|(B11)|&&|(B20)|&&                &&
&&|(C02)|&&|(C11)|&&|(C20)|&&    \\ &|(A03)|&&|(A12)|&&|(A21)|&&|(A30)|&         &|(vw)|& &|(B03)|&&|(B12)|&&|(B21)|&&|(B30)|&
&& &|(C03)|&&|(C12)|&&|(C21)|&&|(C30)|&\\
|(A04)|&&|(A13)|&&|(A22)|&&|(A31)|&&|(A40)|&&|(B04)|&&|(B13)|&&|(B22)|&&|(B31)|&&|(B40)|&&|(C04)|&&|(C13)|&&|(C22)|&&|(C31)|&&|(C40)|\\
&|(A14)|&&|(A23)|&&|(A32)|&&|(A41)|&         && &|(B14)|&&|(B23)|&&|(B32)|&&|(B41)|&         &&
&|(C14)|&&|(C23)|&&|(C32)|&&|(C41)|&\\ &&|(A24)|&&|(A33)|&&|(A42)|&&                && &&|(B24)|&&|(B33)|&&|(B42)|&&
&& &&|(C24)|&&|(C33)|&&|(C42)|&&\\ &&&|(A34)|&&|(A43)|&&&                       && &&&|(B34)|&&|(B43)|&&&
&& &&&|(C34)|&&|(C43)|&&&\\ &&&&|(A44)|&&&&                              && &&&&|(B44)|&&&&                              &&
&&&&|(C44)|&&&&                \\ };
\draw[-,rounded corners=2mm,shorten >=2pt]
    ($(A00)+(-\p,\p)$)--++(-4*\s-\p,-\p-4*\s)--++(\p+\p,-\p-\p)--++(4*\s+2*\p,4*\s+2*\p)--++(-2*\p,2*\p)--++(-\s,-\s);
\draw[-,rounded corners=2mm,shorten >=2pt]
    ($(A20)+(-\p,\p)$)--++(-4*\s-\p,-\p-4*\s)--++(\p+\p,-\p-\p)--++(4*\s+2*\p,4*\s+2*\p)--++(-2*\p,2*\p)--++(-\s,-\s);
\draw[-,rounded corners=2mm,shorten >=2pt]
    ($(A30)+(-\p,\p)$)--++(-4*\s-\p,-\p-4*\s)--++(\p+\p,-\p-\p)--++(4*\s+2*\p,4*\s+2*\p)--++(-2*\p,2*\p)--++(-\s,-\s);
\draw[-,rounded corners=2mm,shorten >=2pt]
    ($(vw)+(-\p,\p)$)--++(-5*\s-\p,-\p-5*\s)--++(\p+\p,-\p-\p)--++(5*\s+2*\p,5*\s+2*\p)--++(-2*\p,2*\p)--++(-\s,-\s);
\draw[-,rounded corners=2mm,shorten >=2pt]
    ($(B00)+(-\p,\p)$)--++(-4*\s-\p,-\p-4*\s)--++(\p+\p,-\p-\p)--++(4*\s+2*\p,4*\s+2*\p)--++(-2*\p,2*\p)--++(-\s,-\s);
\draw[-,rounded corners=2mm,shorten >=2pt]
    ($(B10)+(-\p,\p)$)--++(-4*\s-\p,-\p-4*\s)--++(\p+\p,-\p-\p)--++(4*\s+2*\p,4*\s+2*\p)--++(-2*\p,2*\p)--++(-\s,-\s);
\draw[-,rounded corners=2mm,shorten >=2pt]
    ($(B30)+(-\p,\p)$)--++(-4*\s-\p,-\p-4*\s)--++(\p+\p,-\p-\p)--++(4*\s+2*\p,4*\s+2*\p)--++(-2*\p,2*\p)--++(-\s,-\s);
\draw[-,rounded corners=2mm,shorten >=2pt]
    ($(B40)+(-\p,\p)$)--++(-4*\s-\p,-\p-4*\s)--++(\p+\p,-\p-\p)--++(4*\s+2*\p,4*\s+2*\p)--++(-2*\p,2*\p)--++(-\s,-\s);
\draw[-,rounded corners=2mm,shorten >=2pt]
    ($(C00)+(-\p,\p)$)--++(-4*\s-\p,-\p-4*\s)--++(\p+\p,-\p-\p)--++(4*\s+2*\p,4*\s+2*\p)--++(-2*\p,2*\p)--++(-\s,-\s);
\draw[-,rounded corners=2mm,shorten >=2pt]
    ($(C10)+(-\p,\p)$)--++(-4*\s-\p,-\p-4*\s)--++(\p+\p,-\p-\p)--++(4*\s+2*\p,4*\s+2*\p)--++(-2*\p,2*\p)--++(-\s,-\s);
\draw[-,rounded corners=2mm,shorten >=2pt]
    ($(C20)+(-\p,\p)$)--++(-4*\s-\p,-\p-4*\s)--++(\p+\p,-\p-\p)--++(4*\s+2*\p,4*\s+2*\p)--++(-2*\p,2*\p)--++(-\s,-\s);
\draw[-,rounded corners=2mm,shorten >=2pt]
    ($(C30)+(-\p,\p)$)--++(-4*\s-\p,-\p-4*\s)--++(\p+\p,-\p-\p)--++(4*\s+2*\p,4*\s+2*\p)--++(-2*\p,2*\p)--++(-\s,-\s);
\draw[-,rounded corners=2mm,shorten >=2pt]
    ($(C40)+(-\p,\p)$)--++(-4*\s-\p,-\p-4*\s)--++(\p+\p,-\p-\p)--++(4*\s+2*\p,4*\s+2*\p)--++(-2*\p,2*\p)--++(-\s,-\s);
\draw[Referencja](A00)--(A01);
\draw[Referencja](A00)--(A10);
\draw[Referencja](A10)--(A11);
\draw[Referencja](A10)--(A20);
\draw[Referencja](A01)--(A02);
\draw[Referencja](A01)--(A11);
\draw[Referencja](A02)--(A03);
\draw[Referencja](A02)--(A12);
\draw[Referencja](A11)--(A12);
\draw[Referencja](A11)--(A21);
\draw[Referencja](A20)--(A21);
\draw[Referencja](A20)--(A30);
\draw[Referencja](A30)--(A40);
\draw[Referencja](A30)--(A31);
\draw[Referencja](A21)--(A31);
\draw[Referencja](A21)--(A22);
\draw[Referencja](A12)--(A22);
\draw[Referencja](A12)--(A13);
\draw[Referencja](A03)--(A13);
\draw[Referencja](A03)--(A04);
\draw[Referencja](A04)--(A14);
\draw[Referencja](A13)--(A14);
\draw[Referencja](A13)--(A23);
\draw[Referencja](A22)--(A23);
\draw[Referencja](A22)--(A32);
\draw[Referencja](A31)--(A32);
\draw[Referencja](A31)--(A41);
\draw[Referencja](A40)--(A41);
\draw[Referencja](A14)--(A24);
\draw[Referencja](A23)--(A24);
\draw[Referencja](A23)--(A33);
\draw[Referencja](A32)--(A33);
\draw[Referencja](A32)--(A42);
\draw[Referencja](A41)--(A42);
\draw[Referencja](A24)--(A34);
\draw[Referencja](A33)--(A34);
\draw[Referencja](A33)--(A43);
\draw[Referencja](A42)--(A43);
\draw[Referencja](A34)--(A44);
\draw[Referencja](A43)--(A44);
\draw[Referencja](B00)--(B01);
\draw[Referencja](B00)--(B10);
\draw[Referencja](B10)--(B11);
\draw[Referencja](B10)--(B20);
\draw[Referencja](B01)--(B02);
\draw[Referencja](B01)--(B11);
\draw[Referencja](B02)--(B03);
\draw[Referencja](B02)--(B12);
\draw[Referencja](B11)--(B12);
\draw[Referencja](B11)--(B21);
\draw[Referencja](B20)--(B21);
\draw[Referencja](B20)--(B30);
\draw[Referencja](B30)--(B40);
\draw[Referencja](B30)--(B31);
\draw[Referencja](B21)--(B31);
\draw[Referencja](B21)--(B22);
\draw[Referencja](B12)--(B22);
\draw[Referencja](B12)--(B13);
\draw[Referencja](B03)--(B13);
\draw[Referencja](B03)--(B04);
\draw[Referencja](B04)--(B14);
\draw[Referencja](B13)--(B14);
\draw[Referencja](B13)--(B23);
\draw[Referencja](B22)--(B23);
\draw[Referencja](B22)--(B32);
\draw[Referencja](B31)--(B32);
\draw[Referencja](B31)--(B41);
\draw[Referencja](B40)--(B41);
\draw[Referencja](B14)--(B24);
\draw[Referencja](B23)--(B24);
\draw[Referencja](B23)--(B33);
\draw[Referencja](B32)--(B33);
\draw[Referencja](B32)--(B42);
\draw[Referencja](B41)--(B42);
\draw[Referencja](B24)--(B34);
\draw[Referencja](B33)--(B34);
\draw[Referencja](B33)--(B43);
\draw[Referencja](B42)--(B43);
\draw[Referencja](B34)--(B44);
\draw[Referencja](B43)--(B44);
\draw[Referencja](C00)--(C01);
\draw[Referencja](C00)--(C10);
\draw[Referencja](C10)--(C11);
\draw[Referencja](C10)--(C20);
\draw[Referencja](C01)--(C02);
\draw[Referencja](C01)--(C11);
\draw[Referencja](C02)--(C03);
\draw[Referencja](C02)--(C12);
\draw[Referencja](C11)--(C12);
\draw[Referencja](C11)--(C21);
\draw[Referencja](C20)--(C21);
\draw[Referencja](C20)--(C30);
\draw[Referencja](C30)--(C40);
\draw[Referencja](C30)--(C31);
\draw[Referencja](C21)--(C31);
\draw[Referencja](C21)--(C22);
\draw[Referencja](C12)--(C22);
\draw[Referencja](C12)--(C13);
\draw[Referencja](C03)--(C13);
\draw[Referencja](C03)--(C04);
\draw[Referencja](C04)--(C14);
\draw[Referencja](C13)--(C14);
\draw[Referencja](C13)--(C23);
\draw[Referencja](C22)--(C23);
\draw[Referencja](C22)--(C32);
\draw[Referencja](C31)--(C32);
\draw[Referencja](C31)--(C41);
\draw[Referencja](C40)--(C41);
\draw[Referencja](C14)--(C24);
\draw[Referencja](C23)--(C24);
\draw[Referencja](C23)--(C33);
\draw[Referencja](C32)--(C33);
\draw[Referencja](C32)--(C42);
\draw[Referencja](C41)--(C42);
\draw[Referencja](C24)--(C34);
\draw[Referencja](C33)--(C34);
\draw[Referencja](C33)--(C43);
\draw[Referencja](C42)--(C43);
\draw[Referencja](C34)--(C44);
\draw[Referencja](C43)--(C44);
\draw[Referencja](vw)--(A40);
\node(vu)[mynodes] at ($(A00)+(5*\s,\s)$){};
\draw[-,rounded corners=2mm,shorten >=2pt]
    ($(A10)+(-\p,\p)$)--++(-4*\s-\p,-\p-4*\s)--++(\p+\p,-\p-\p)--++(4*\s+2*\p,4*\s+2*\p);
\draw[-]
    ($(A00)+(-\p+\s,\p-\s)$) .. controls ($(A00)+(-\p+\s,\p-\s)+(0.41*\p+2*\s,0.41*\p+2*\s)$)
    .. ($(vu)+(0,1.41*\p)$);
\draw[-]
    ($(A00)+(\p+\s,-\p-\s)$) .. controls ($(A00)+(\p+\s,-\p-\s)+(-0.41*\p+2*\s,-0.41*\p+2*\s)$)
    .. ($(vu)+(0,-1.41*\p)$);
\draw[Referencja]
($(vu)+(-0.139,0)$) .. controls ($(A00)+(+\s,-\s)+(-0.41*\p+2*\s,-0.41*\p+2*\s)$)
    .. ($(A00)+(+\s,-\s)+(0.1,0.1)$);
\draw[Referencja]
($(vu)+(0.139,0)$) .. controls ($(A00)+(5.5*\s,-\s)+(-0.41*\p+2*\s,-0.41*\p+2*\s)$)
    .. ($(A00)+(9*\s,-\s)+(-0.1,0.1)$);
\draw[-,rounded corners=2mm,shorten >=2pt]
    ($(vu)+(0,1.41*\p)$)--++(1.41*\p,0)--++(0,2*-1.41*\p)--++(-1.61*\p,0);
\node(uw)[mynodes] at ($(B00)+(5*\s,\s)$){};
\draw[-,rounded corners=2mm,shorten >=2pt]
    ($(uw)+(-\p,\p)$)--++(-7*\s-\p,-\p-7*\s)--++(\p+\p,-\p-\p)--++(7*\s+2*\p,7*\s+2*\p)--++(-2*\p,2*\p)--++(-\s,-\s);
\draw[Referencja](uw)--(C02);
\draw[Referencja](uw)--(B20);
\node(wv)[mynodes] at ($(B00)+(0,3*\s)$){};
\draw[Referencja](wv)..controls ($(wv)+(11*\s,0)$) and ($(C30)+(\s,4*\s)$)..(C30);
\draw[Referencja](wv)..controls ($(wv)+(-11*\s,0)$) and ($(A03)+(-\s,4*\s)$)..(A03);

\draw[Referencja,-](vw)..controls ($(vw)+(2*\s,-6*\s)$)and ($(B44)+(-\s,-\s)$).. ($(B44)+(0,-\s)$);
\draw[Referencja]($(B44)+(0,-\s)$) .. controls ($(B44)+(\s,-\s)$)
..(C04);
\draw[-,rounded corners=1.75mm,shorten >=2pt]
($(wv)+(-\p,\p)$)--++(-\p,-\p)--++(2*\p,-2*\p)--++(2*\p,2*\p)--++(-2*\p,2*\p)--++(-\p,-\p);
\small
\node at (wv) [above,inner sep=2mm]{$\langle w,v\rangle$};
\node at (uw) [above,inner sep=2mm]{$\langle u,w\rangle$};
\node at (vu) [above,inner sep=2mm]{$\langle v,u\rangle$};
\node at (vw) [above,inner sep=2mm]{$\langle v,w\rangle$};
\node at (A03) [above left,inner sep=2mm]{$v_{0,3}$};
\node at (A10) [right,inner sep=4mm]{$v_{1,0}$};
\node at (A40) [below right,inner sep=2mm]{$v_{4,0}$};
\node at (B01) [left,inner sep=3mm]{$u_{0,1}$};
\node at (B20) [right,inner sep=4mm]{$u_{2,0}$};
\node at (C02) [left,inner sep=3mm]{$w_{0,2}$};
\node at (C30) [right,inner sep=3mm]{$w_{3,0}$};
\node at ($(wv)+(-5,0.2)$){$\DD{\G}{\e}$};
\node at (C04) [above left,inner sep=1mm]{$w_{0,4}$};
\end{tikzpicture}
\end{center}
\caption{The $\mathcal{H}^{(16)}$-partition $\pi(\F)$ of
$\DD{\G}{\e}$, an example that illustrates the construction from the proof of Lemma~\ref{R2_S1_prosta_implikacja}, where
$\ee{\langle v,u\rangle}=1,$ $\ee{\langle u,w\rangle}=2,$ $\ee{\langle w,v\rangle}=3,$ $\ee{\langle v,w\rangle}=4,$
$\F=\{\langle w,v\rangle\}$.}
\label{G_konstr}
\end{figure}
\begin{lem}\label{R2_S1_prosta_implikacja}
Let $\F$ be a feedback arc set of $\G$. Then there exists an $\mathcal{H}^{(|\V|\cdot(|\E|+1)+|\F|)}$-partition of
$\DD{\G}{\e}$.
\end{lem}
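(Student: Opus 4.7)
My plan is to build the partition $\pi(\F)$ explicitly, using the $\mathcal{L}$-partition of each gadget as the backbone, absorbing ``non-feedback'' arc vertices into these row-blocks, and keeping feedback arc vertices as singletons.

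Concretely, since $\F$ is a feedback arc set, the digraph $\G\!\setminus\!\F=\langle \V,\E\setminus\F\rangle$ is acyclic and admits a topological order $\sigma:\V\to\{1,\dots,|\V|\}$. I would then define $\pi(\F)$ by taking, for every $v\in\V$, the family $\mathcal{L}_v=\{\vertexOF{\LL_v^i}:0\le i\le|\E|\}$ and, for every arc $\langle v,u\rangle\in\E\setminus\F$, replacing the block $\vertexOF{\LL_v^{\ee{\langle v,u\rangle}}}$ by $\vertexOF{\LL_v^{\ee{\langle v,u\rangle}}}\cup\{\langle v,u\rangle\}$; for every arc $\langle v,u\rangle\in\F$, I would add the singleton block $\{\langle v,u\rangle\}$. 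Since $\e$ is one-to-one, at most one arc vertex is ever prepended to any given row block $\vertexOF{\LL_v^i}$, so the three kinds of blocks are pairwise disjoint and the total count is exactly $|\V|\cdot(|\E|+1)+|\F|$, matching the claimed size.

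Next I would verify that every block induces a Hamiltonian subgraph. For an unextended row block the path $\LL_v^i$ is itself Hamiltonian by construction of $\gadget_v$. For an extended block $\vertexOF{\LL_v^{\ee{\langle v,u\rangle}}}\cup\{\langle v,u\rangle\}$, the arc $\langle\langle v,u\rangle,v_{\ee{\langle v,u\rangle},0}\rangle$ of $\EE$ (from Definition~\ref{DAG}) lets me prepend the arc vertex to $\LL_v^{\ee{\langle v,u\rangle}}$, producing a Hamiltonian path $\langle v,u\rangle,v_{\ee{\langle v,u\rangle},0},\dots,v_{\ee{\langle v,u\rangle},|\E|}$. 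Singletons are trivially Hamiltonian.

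The main obstacle, and the real content of the lemma, is showing that $\pidigraph{\DD{\G}{\e}}{\pi(\F)}$ is acyclic. I would exhibit a linear order on the blocks that witnesses this: first list all singleton blocks $\{\langle v,u\rangle\}$ (in any order), then list the gadget blocks sorted primarily by $\sigma(v)$ and secondarily by the row index $i$ of $\LL_v^i$. It then remains to check that every arc of $\pidigraph{\DD{\G}{\e}}{\pi(\F)}$ runs forward in this order: (a) the down-arcs of $\RR_v$ give arcs $L_v^i\to L_v^{i+1}$ inside each gadget, forward by the secondary key; (b) for $\langle v,u\rangle\in\E\setminus\F$, the arc $\langle\langle v,u\rangle,u_{0,\ee{\langle v,u\rangle}}\rangle$ gives $L_v^{\ee{\langle v,u\rangle}}\to L_u^0$, forward because $\sigma(v)<\sigma(u)$; (c) the two outgoing arcs of a singleton $\{\langle v,u\rangle\}\in\F$ go to $L_v^{\ee{\langle v,u\rangle}}$ and $L_u^0$, both forward since singletons come first; and (d) no arc of $\EE$ enters a singleton, as the arc vertices of $\E$ are sources in $\DD{\G}{\e}$. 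This completes the acyclic Hamiltonian partition of the required size.
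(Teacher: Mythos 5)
Your construction of $\pi(\F)$ is exactly the partition used in the paper (row blocks $\LL_v^i$, with $\LL_v^{\ee{\langle v,u\rangle}}$ absorbing the arc vertex when $\langle v,u\rangle\in\E\setminus\F$, and singletons for arcs in $\F$), and your Hamiltonicity and size checks match as well; the only difference is that you certify acyclicity of $\pidigraph{\DD{\G}{\e}}{\pi(\F)}$ directly, by exhibiting a topological order of the blocks built from a topological order of $\langle\V,\E\setminus\F\rangle$, whereas the paper argues by contradiction, projecting a hypothetical cycle of the quotient graph down to a cycle of $\langle\V,\E\setminus\F\rangle$. Both arguments rest on the same fact (acyclicity of $\G$ with $\F$ removed), so this is essentially the paper's proof, with your direct verification being, if anything, a little cleaner.
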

\begin{proof}
 {We define} a partition $\pi(\F)$ of $\VV$ determined by $\F$ {as:}
\begin{equation}
\begin{array}{rl}
\pi(\F)\mkern-1mu =\mkern-9mu&\{\{\eT{i},v_{i,0},v_{i,1},\ldots,v_{i,|\E|}\}:\\
&\multicolumn{1}{r}{ v \in \V \wedge 0\leq i\leq |\E|
\wedge\! \underset{u \in \V}{\exists}
(\eT{i}=\langle v,u\rangle\wedge \langle v,u\rangle\! \in\! \E\!\setminus\! \F)\}\,\cup}\\
&\{\{v_{i,0},v_{i,1},\ldots,v_{i,|\E|}\}:\\ &\multicolumn{1}{r}{\strut\mkern80mu v \in \V \wedge 0\leq i\leq |\E|
\wedge \neg\underset{u \in \V}{\exists}
(\eT{i}=\langle v,u\rangle\wedge \langle v,u\rangle \in \E\setminus \F)\}\,\cup}\\ &\{\{\langle v,u\rangle\}:\langle v,u\rangle
\in  \F\}.\\
\end{array}
\end{equation}
Elements of $\pi(\F)$ can be divided into three types of $\VV$-paths:
\begin{enumerate}
\item\label{type_one} $\langle\langle v,u\rangle\rangle\!\mathbin{{}^\frown}\!$\!$\LL_v^{\ee{\langle v,u\rangle}}$, if
$\langle v,u\rangle\in \E\setminus \F$ (denoted by $\LL^{\langle v,u\rangle}$),
\item\label{type_two} $\LL_v^{i}$, if $\neg\underset{u \in \V}{\exists}
(\eT{i}=\langle v,u\rangle\wedge \langle v,u\rangle \in \E\setminus \F)$,
\item\label{type_three} $\langle\langle v,u\rangle\rangle$, if $\langle v,u\rangle \in  \F$.
\end{enumerate}
It is easy to check that $\pi(\F)$ is an $\mathcal{H} ^{|\V|\cdot(|\E|+1)+|\F|}$-partition of $\DD{\G}{\e}$. To complete
the proof it is enough to show only that $\pidigraph{\DD{\G}{\e}}{\pi(\F)}$ is acyclic.

\par Suppose, contrary to our claim, that there exists a cycle $\mathfrak{a}=\langle a_1,a_2,\ldots,a_k\rangle$ {in}\break
$\pidigraph{\DD{\G}{\e}}{\pi(\F)}$, where $k> 2$ and $a_k=a_1$.
Since all paths of the \ref{type_three}rd type correspond to a source in $\pidigraph{\DD{\G}{\e}}{\pi(\F)}$, see
\eqref{KP_eq_14}, we conclude that for each $i\in \{1,2,\ldots,k-1\}$ there is a vertex $v$ such that $a_i\cap
\gadgetV_{v}\neq\emptyset$. Moreover this vertex is unique for each $i$. {Let $v_i$ be such vertex corresponding to $a_i$.}
Obviously, vertices $v_1,v_2,\ldots,v_{k-1}$ do not have to be pairwise different. {However let us consider the maximal
subsequence $\langle v_1^\prime, v_2^\prime,\ldots,v_{k^\prime-1}^\prime,v_{k^\prime}^\prime\rangle$ of $\langle
v_1,v_2,\ldots,v_{k-1},v_1\rangle$ such that $v_i^\prime\neq v_{i+1}^\prime$ for $i=1,2,\ldots,k^\prime-1$ where
$v^\prime_k=v_1^\prime$. We prove that this subsequence is} a cycle of {the} graph $\langle \V,\E\setminus\F\rangle$ which
contradicts the fact $\F$ is a feedback arc set of $\G$.

\par For this purpose let us consider vertices $a_i$, $a_{i+1}$ of $\pidigraph{\DD{\G}{\e}}{\pi(\F)}$ which satisfy $v_i\neq
v_{i+1}$.
Obviously, both vertices $a_i$, $a_{i+1}$ have to correspond to $\VV$-paths of \ref{type_one}st or \ref{type_two}nd type. From
this, it is easy to check that since $\langle a_i,a_{i+1}\rangle$ is an arc of $\pidigraph{\DD{\G}{\e}}{\pi(\F)}$, we have
$a_i=\mathcal{V}(\LL^{\langle v_i,v_{i+1}\rangle})$, $a_{i+1}=\mathcal{V}(\LL_{v_{i+1}}^{0})$, and the form of $a_i$ implies
that $\langle v_i,v_{i+1}\rangle\in \E\setminus\F$.
\end{proof}

\subsection{A feedback arc set of \texorpdfstring{$\G$}{G}
determined by an \texorpdfstring{$\mathcal{H} ^{(\ast)}$}{H*}-partition of \texorpdfstring{$\DD{\G}{\e}$}{P(G,e}}\label{SS3}

Let us consider an $\mathcal{H} ^{(\ast)}$-partition $\pi$ of $\DD{\G}{\e}$. First we show in Lemma~\ref{modyf} that
there exists an $\mathcal{H}^{(\ast)}$-partition $\pi^\prime$ of $\DD{\G}{\e}$ which satisfies $|\pi^\prime|\leq |\pi|$
and additionally has the property that {for each $v\in \V$}: $\pi^\prime$ restricted to the gadget $\gadget_v$ is a
$\mathcal{L}_v$ or $\mathcal{R}_v$-partition. Since every gadget $\gadget_v$ in $\pi^\prime$ is a $\mathcal{L}_v$ or
$\mathcal{R}_v$-partition, we construct in Lemma~\ref{construction} a feedback arc set of $\G$ which has size at most
$|\pi|-|\V|\cdot(|\E|+1)$.

\par Let $\pi$ be an $\mathcal{H}
^{(\ast)}$--partition of $\DD{\G}{\e}$. For a vertex $v\in \V$ we use the following notations:
$\picut{\pi}{v}\coloneq\{\vertexOF{\pipath{\pi}{w}}:w \in\gadgetV_v\}$,
$\picut{\pi}{\gadget_v}\coloneq\{\vertexOF{\pipath{\pi}{w}}\cap \gadgetV_v:w \in\gadgetV_v\}$,
$\DD{\G}{\e}_{|\pi,v}\coloneq\DAGcut{\DD{\G}{\e}}{{\bigcup}\picut{\pi}{v}}$. {Obviously $\picut{\pi}{\gadget_v}$ is an
$\mathcal{H} ^{(\ast)}$--partition of $\gadget_v$.} We also denote by $\VV_{|\pi,v}$ the set of vertices of
$\DD{\G}{\e}_{|\pi,v}$, and by $\EE_{|\pi,v}$ the set of arcs of $\DD{\G}{\e}_{|\pi,v}$.

\begin{defi}
Let $\pi$ be an $\mathcal{H}^{(\ast)}$--partition of $\DD{\G}{\e}$ and $v\in \V$. We call a gadget $\gadget_v$ {\em
well-oriented in $\pi$} ({see Fig.~\ref{Orien}}) if and only if $\picut{\pi}{\gadget_v}=\mathcal{L}_v$ or
$\picut{\pi}{\gadget_v}=\mathcal{R}_v$. In the case $\picut{\pi}{\gadget_v}=\mathcal{L}_v$ we say that the gadget $\gadget_v$
is {\em $\LL$-oriented in $\pi$}, and {\em $\RR$-oriented in $\pi$}, in the second case.
\end{defi}

{\begin{figure}[h!]
\begin{tikzpicture}
\def\s{0.4cm}
\def\p{0.125cm}
\tiny
[>=stealth,->,shorten >=2pt,looseness=.5,auto]
    \matrix (M)[matrix of math nodes,
        column sep={\s,between origins},
        row sep={\s,between origins},
        nodes=mynodes]
{ &&&&|(A00)|&&&&                              &&&&&& &&&&|(B00)|&&&&                                                 \\
&&&|(A01)|&&|(A10)|&&&                       &&&&&& &&&|(B01)|&&|(B10)|&&&                                   \\
&&|(A02)|&&|(A11)|&&|(A20)|&&                &&&&&& &&|(B02)|&&|(B11)|&&|(B20)|&&             \\
&|(A03)|&&|(A12)|&&|(A21)|&&|(A30)|&         &&&&&& &|(B03)|&&|(B12)|&&|(B21)|&&|(B30)|&      \\
|(A04)|&&|(A13)|&&|(A22)|&&|(A31)|&&|(A40)|&&&&&&|(B04)|&&|(B13)|&&|(B22)|&&|(B31)|&&|(B40)|&\\
&|(A14)|&&|(A23)|&&|(A32)|&&|(A41)|&         &&&&&& &|(B14)|&&|(B23)|&&|(B32)|&&|(B41)|&      \\ &&|(A24)|&&|(A33)|&&|(A42)|&&
&&&&&& &&|(B24)|&&|(B33)|&&|(B42)|&&               \\ &&&|(A34)|&&|(A43)|&&&                       &&&&&&
&&&|(B34)|&&|(B43)|&&&                       \\ &&&&|(A44)|&&&&                              &&&&&& &&&&|(B44)|&&&&
\\ };
\draw[-,rounded corners=2mm,shorten >=2pt]
    ($(A00)+(-\p,\p)$)--++(-4*\s-\p,-\p-4*\s)--++(\p+\p,-\p-\p)--++(4*\s+2*\p,4*\s+2*\p)--++(-2*\p,2*\p)--++(-\s,-\s);
\draw[-,rounded corners=2mm,shorten >=2pt]
    ($(A20)+(-\p,\p)$)--++(-4*\s-\p,-\p-4*\s)--++(\p+\p,-\p-\p)--++(4*\s+2*\p,4*\s+2*\p)--++(-2*\p,2*\p)--++(-\s,-\s);
\draw[-,rounded corners=2mm,shorten >=2pt]
    ($(A30)+(-\p,\p)$)--++(-4*\s-\p,-\p-4*\s)--++(\p+\p,-\p-\p)--++(4*\s+2*\p,4*\s+2*\p)--++(-2*\p,2*\p)--++(-\s,-\s);
\draw[Referencja](A00)--(A01);
\draw[Referencja](A00)--(A10);
\draw[Referencja](A10)--(A11);
\draw[Referencja](A10)--(A20);
\draw[Referencja](A01)--(A02);
\draw[Referencja](A01)--(A11);
\draw[Referencja](A02)--(A03);
\draw[Referencja](A02)--(A12);
\draw[Referencja](A11)--(A12);
\draw[Referencja](A11)--(A21);
\draw[Referencja](A20)--(A21);
\draw[Referencja](A20)--(A30);
\draw[Referencja](A30)--(A40);
\draw[Referencja](A30)--(A31);
\draw[Referencja](A21)--(A31);
\draw[Referencja](A21)--(A22);
\draw[Referencja](A12)--(A22);
\draw[Referencja](A12)--(A13);
\draw[Referencja](A03)--(A13);
\draw[Referencja](A03)--(A04);
\draw[Referencja](A04)--(A14);
\draw[Referencja](A13)--(A14);
\draw[Referencja](A13)--(A23);
\draw[Referencja](A22)--(A23);
\draw[Referencja](A22)--(A32);
\draw[Referencja](A31)--(A32);
\draw[Referencja](A31)--(A41);
\draw[Referencja](A40)--(A41);
\draw[Referencja](A14)--(A24);
\draw[Referencja](A23)--(A24);
\draw[Referencja](A23)--(A33);
\draw[Referencja](A32)--(A33);
\draw[Referencja](A32)--(A42);
\draw[Referencja](A41)--(A42);
\draw[Referencja](A24)--(A34);
\draw[Referencja](A33)--(A34);
\draw[Referencja](A33)--(A43);
\draw[Referencja](A42)--(A43);
\draw[Referencja](A34)--(A44);
\draw[Referencja](A43)--(A44);
\draw[Referencja](B00)--(B01);
\draw[Referencja](B00)--(B10);
\draw[Referencja](B10)--(B11);
\draw[Referencja](B10)--(B20);
\draw[Referencja](B01)--(B02);
\draw[Referencja](B01)--(B11);
\draw[Referencja](B02)--(B03);
\draw[Referencja](B02)--(B12);
\draw[Referencja](B11)--(B12);
\draw[Referencja](B11)--(B21);
\draw[Referencja](B20)--(B21);
\draw[Referencja](B20)--(B30);
\draw[Referencja](B30)--(B40);
\draw[Referencja](B30)--(B31);
\draw[Referencja](B21)--(B31);
\draw[Referencja](B21)--(B22);
\draw[Referencja](B12)--(B22);
\draw[Referencja](B12)--(B13);
\draw[Referencja](B03)--(B13);
\draw[Referencja](B03)--(B04);
\draw[Referencja](B04)--(B14);
\draw[Referencja](B13)--(B14);
\draw[Referencja](B13)--(B23);
\draw[Referencja](B22)--(B23);
\draw[Referencja](B22)--(B32);
\draw[Referencja](B31)--(B32);
\draw[Referencja](B31)--(B41);
\draw[Referencja](B40)--(B41);
\draw[Referencja](B14)--(B24);
\draw[Referencja](B23)--(B24);
\draw[Referencja](B23)--(B33);
\draw[Referencja](B32)--(B33);
\draw[Referencja](B32)--(B42);
\draw[Referencja](B41)--(B42);
\draw[Referencja](B24)--(B34);
\draw[Referencja](B33)--(B34);
\draw[Referencja](B33)--(B43);
\draw[Referencja](B42)--(B43);
\draw[Referencja](B34)--(B44);
\draw[Referencja](B43)--(B44);
\draw[Referencja,dotted]($(A40)+1.2*(\s,\s)$)--(A40);
\draw[Referencja,dotted]($(A10)+1.2*(\s,\s)$)--(A10);
\draw[Referencja,dotted]($(A03)+0.7*(-\s,2.5*\s)$)--(A03);
\draw[-,rounded corners=2mm,shorten >=2pt,dotted]
    ($(A10)+(-\p,\p)+0.95*(\s,\s)$)
    --++(-4*\s-\p-0.95*\s,
         -\p-4*\s-0.95*\s)--++(\p+\p,-\p-\p)
    --++(4*\s+\p+0.95*\s+2pt,
         \p+4*\s+0.95*\s+2pt);
\draw[-,rounded corners=2mm,shorten >=2pt,dotted]
    ($(A40)+(-\p,\p)+0.95*(\s,\s)$)
    --++(-4*\s-\p-0.95*\s,
         -\p-4*\s-0.95*\s)--++(\p+\p,-\p-\p)
    --++(4*\s+\p+0.95*\s+2pt,
         \p+4*\s+0.95*\s+2pt);
\draw[-,rounded corners=2mm,shorten >=2pt]
    ($(B00)+(-\p,\p)$)--++(-2*\s-\p,-\p-2*\s)--++(\p+\p,-\p-\p)
                      --++(2*\s+2*\p,2*\s+2*\p)--++(-2*\p,2*\p)--++(-\s,-\s);
\draw[-,rounded corners=2mm,shorten >=2pt]
    ($(B11)+(-\p,\p)$)--++(-1*\s-\p,-\p-1*\s)--++(\p+\p,-\p-\p)
                      --++(1*\s+2*\p,1*\s+2*\p)--++(-2*\p,2*\p)--++(-\s,-\s);
\draw[-,rounded corners=2mm,shorten >=2pt]
    ($(B10)+(-\p,-\p)$)--++(2*\s+\p,-\p-2*\s)--++(\p+\p,\p+\p)
                      --++(-2*\s-2*\p,2*\s+2*\p)--++(-2*\p,-2*\p)--++(\s,-\s);
\draw[-,rounded corners=2mm,shorten >=2pt]
    ($(B03)+(-\p,-\p)$)--++(3*\s+\p,-\p-3*\s)--++(\p+\p,\p+\p)
                      --++(-3*\s-2*\p,3*\s+2*\p)--++(-2*\p,-2*\p)--++(\s,-\s);
\draw[-,rounded corners=2mm,shorten >=2pt]
    ($(B14)+(-\p,-\p)$)--++(2*\s+\p,-\p-2*\s)--++(\p+\p,\p+\p)
                      --++(-2*\s-2*\p,2*\s+2*\p)--++(-2*\p,-2*\p)--++(\s,-\s);
\draw[-,rounded corners=1.75mm,shorten >=2pt]
($(B04)+(-\p,\p)$)--++(-\p,-\p)--++(2*\p,-2*\p)--++(2*\p,2*\p)--++(-2*\p,2*\p)--++(-\p,-\p);
\draw[-,rounded corners=2mm,shorten >=2pt]
    ($(B22)+(-\p,-\p)$)--++(1*\s+\p,-\p-1*\s)--++(\p+\p,\p+\p)
                      --++(-1*\s-2*\p,1*\s+2*\p)--++(-2*\p,-2*\p)--++(\s,-\s);
\draw[-,rounded corners=2mm,shorten >=2pt]
    ($(B21)+(-\p,-\p)$)--++(1*\s+\p,-\p-1*\s)--++(\p+\p,\p+\p)
                      --++(-1*\s-2*\p,1*\s+2*\p)--++(-2*\p,-2*\p)--++(\s,-\s);
\draw[-,rounded corners=2mm,shorten >=2pt]
    ($(B40)+(-\p,\p)$)--++(-4*\s-\p,-\p-4*\s)--++(\p+\p,-\p-\p)
                      --++(4*\s+2*\p,4*\s+2*\p)--++(-2*\p,2*\p)--++(-\s,-\s);
\draw[-,rounded corners=2mm,shorten >=2pt]
    ($(A10)+(-\p,\p)$)--++(-4*\s-\p,-\p-4*\s)--++(\p+\p,-\p-\p)--++(4*\s+2*\p,4*\s+2*\p)--++(-2*\p,2*\p)--++(-\s,-\s);
\draw[-,rounded corners=2mm,shorten >=2pt]
    ($(A40)+(-\p,\p)$)--++(-4*\s-\p,-\p-4*\s)--++(\p+\p,-\p-\p)--++(4*\s+2*\p,4*\s+2*\p)--++(-2*\p,2*\p)--++(-\s,-\s);
\end{tikzpicture}
\caption{Part of the graph $\DD{\G}{\e}$ presented in Fig.~\ref{G_konstr} that
illustrates the $\mathcal{H}^{(\ast)}$-partition $\picut{\pi(\F)}{\gadget_v}$, which gives the $\LL$-orientation of $\gadget_v$
in $\pi(\F)$ on the left hand side, and a example that illustrates a not well-oriented gadget $\gadget$ in the example
partition $\mathcal{H}^{(\ast)}$-partition of $\gadget$ on the right hand side.}
\label{Orien}
\end{figure}}
\renewcommand{\u}{u}
\renewcommand{\r}{\mathfrak{r}}
\newcommand{\D}{{\DD{\G}{\e}}}

\begin{lem}\label{modyf}
Let $\pi$ be an $\mathcal{H}^{(\ast)}$--partition of $\DD{\G}{\e}$ for which the gadget $\gadget_u$ is not
well-oriented where $u\in \V$. Then there exists an $\mathcal{H} ^{(\ast)}$--partition $\pi^\prime$ of $\DD{\G}{\e}$
which satisfies $|\pi^\prime|\leq |\pi|$, $\gadget_u$ is well-oriented in $\pi^\prime$, and also preserves the orientation of
$\gadget_w$ for $w \neq u,\, w\in \V$ ({i.e.\ $\gadget_w$ is well-oriented in $\pi$ if and only if $\gadget_w$ is well-oriented
in $\pi^\prime$}).
\end{lem}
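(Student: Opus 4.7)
The plan distinguishes the two cases from Lemma~\ref{R2_S1_new_1} applied to the restriction $\picut{\pi}{\gadget_u}$: either (A) for every $i\in\{0,\ldots,|\E|\}$ some cell of $\picut{\pi}{\gadget_u}$ is a subsequence of $\LL_u^i$, or (B) the symmetric statement for the columns $\RR_u^i$ holds. The two cases are dual, so I focus on (A) and aim to make $\gadget_u$ become $\LL$-oriented in $\pi'$. Let $\mathcal{C}$ denote the set of cells of $\pi$ that intersect $\gadgetV_u$; each such cell splits into a non-empty sub-path of $\gadget_u$ plus at most one arc-vertex of $\E$ prepended to it, of the form $\langle u,w\rangle$ or $\langle w,u\rangle$.

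I will construct $\pi'$ by removing the cells of $\mathcal{C}$ and inserting the $|\E|+1$ cells of $\mathcal{L}_u$, each optionally prepended by an outgoing arc-vertex taken from $\mathcal{C}$: every $\langle u,w\rangle$ previously in $\mathcal{C}$ is prepended to $\vertexOF{\LL_u^{\ee{\langle u,w\rangle}}}$ (a legal path, since $u_{\ee{\langle u,w\rangle},0}$ is the first vertex of $\LL_u^{\ee{\langle u,w\rangle}}$); every incoming arc-vertex $\langle w,u\rangle$ previously in $\mathcal{C}$ is instead promoted to its own singleton cell, because its target $u_{0,\ee{\langle w,u\rangle}}$ has become interior to $\vertexOF{\LL_u^0}$ and can no longer be the first vertex of any cell of $\mathcal{L}_u$. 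Cells of $\pi\setminus\mathcal{C}$ are preserved verbatim, which immediately yields $\picut{\pi'}{\gadget_u}=\mathcal{L}_u$ and $\picut{\pi'}{\gadget_w}=\picut{\pi}{\gadget_w}$ for every $w\neq u$, proving the orientation claims.

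For the size bound I set $k=|\picut{\pi}{\gadget_u}|$ and let $B_u$ be the set of incoming arc-vertices appearing in $\mathcal{C}$. The construction gives $|\pi'|-|\pi|=(|\E|+1+|B_u|)-k$, so it suffices to prove $k\geq |\E|+1+|B_u|$. I plan to exhibit $|\E|+1+|B_u|$ pairwise distinct cells of $\picut{\pi}{\gadget_u}$: the cell containing $u_{0,0}$; for each $\langle w,u\rangle\in B_u$ the cell starting at the distinct interior row-$0$ vertex $u_{0,\ee{\langle w,u\rangle}}$; and, by case~(A), for each row $i\in\{1,\ldots,|\E|\}$ one cell whose path is a subsequence of $\LL_u^i$, automatically disjoint from every row-$0$ cell.

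The main obstacle I expect is acyclicity of $\pi'$. A hypothetical cycle in $\pidigraph{\DD{\G}{\e}}{\pi'}$ cannot lie entirely in $\pi\setminus\mathcal{C}$ (a subgraph of the acyclic $\pidigraph{\DD{\G}{\e}}{\pi}$) and cannot include any new singleton $\{\langle w,u\rangle\}$ (a source, since arc-vertices have no incoming arcs in $\DD{\G}{\e}$), so it must traverse at least one cell of $\mathcal{L}_u$. I will analyse the cycle as an alternation of $\mathcal{L}_u$-segments $\vertexOF{\LL_u^{l'}}\to\cdots\to\vertexOF{\LL_u^{l}}$ (whose row indices strictly increase, since $\vertexOF{\LL_u^i}\to\vertexOF{\LL_u^{i+1}}$ is the only direct arc among $\mathcal{L}_u$-cells) and detours through $\pi\setminus\mathcal{C}$ (exiting through a prepended $\langle u,w\rangle$ and re-entering through an arc-vertex targeting $\gadget_u$), then lift it back to a cycle in $\pi$ by replaying each $\mathcal{L}_u$-segment along column-$0$ arcs of $\gadget_u$ between the $\pi$-cells containing $u_{l',0},\ldots,u_{l,0}$ and copying each detour verbatim, contradicting acyclicity of $\pi$. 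The delicate sub-case is when a detour re-enters row~$0$ through an $\langle w,u\rangle$-vertex targeting an interior vertex $u_{0,j^*}$ rather than a column-$0$ vertex; here I plan to refine the construction by detaching one of the prepended $\langle u,w\rangle$ arc-vertices back into a singleton, using the slack from the inequality $k\geq|\E|+1+|B_u|$ to absorb the extra cell, thereby destroying the only outgoing arc of the offending $\mathcal{L}_u$-cell and breaking the cycle.
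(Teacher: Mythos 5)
Your construction of $\pi'$ and your cardinality count are sound and essentially reproduce the paper's: the replacement family is exactly the partition $\widehat{\mathcal{L}}_u$ introduced in the paper's proof, and your exhibition of $|\E|+1+|B_u|$ pairwise distinct cells of $\pi$ meeting $\gadgetV_u$ is the first half of the paper's Step~1. The genuine gap is in the acyclicity argument, precisely at the sub-case you yourself flag as delicate, and the repair you sketch does not work. You propose to detach a prepended \emph{outgoing} arc-vertex $\langle u,w\rangle$, i.e.\ to destroy an \emph{exit} from the block of $\mathcal{L}_u$-cells; but a cycle entering $\vertexOF{\LL_u^0}$ at an interior row-$0$ vertex can leave the block through any of the $i=|L|$ prepended outgoing arc-vertices, so removing one exit merely reroutes the cycle, while removing all of them costs $i$ extra singleton cells that your inequality $k\geq|\E|+1+|B_u|$ does not pay for. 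The correct move (the paper's, in its cases 3 and 4) is to detach the \emph{entry}: split the cell containing the offending incoming arc-vertex $\langle w,u\rangle\notin B_u$ into that singleton plus the remainder, so that the row-$0$ entry becomes a source. Moreover, either move costs at least one extra cell, so one needs the strict inequality $k>|\E|+1+|B_u|$ whenever such a cycle exists; you invoke this ``slack'' but never establish it, and it is the hardest part of the paper's argument, obtained from Lemma~\ref{R2_S1_T1} together with a contradiction about a single cell containing two sources of $\DD{\G}{\e}$.

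Both difficulties are resolved in the paper only by using the standing hypothesis on $\G$ that every vertex has in-degree or out-degree equal to $1$ --- this is the entire purpose of Lemma~\ref{KP_PROP2} --- and your proof never uses it. With, say, $|N^-_{\G}(u)|=1$ there is at most one incoming arc-vertex altogether, hence at most one row-$0$ entry to detach, and the paper's four-way case analysis on the pair $(i,j)$ chooses between the $\LL$- and $\RR$-oriented replacements according to which one fits under the counting bound, rather than according to which alternative of Lemma~\ref{R2_S1_new_1} holds. Your case split (A)/(B) does give a sharper count ($k\geq|\E|+1+|B_u|$ conditional on (A), where the paper's Step~1 only guarantees $|\E|+1+\min\{i,j\}$), but without the degree restriction the number of problematic row-$0$ entries and of exits is unbounded, and neither the detachment trick nor the slack inequality survives. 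As written, the acyclicity of $\pi'$ is not established.
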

\begin{proof}
Let $\pi$, $\u$ fulfil the assumptions. Our proof starts with the observation that vertices of $\DD{\G}{\e}_{|\pi,\u}$ can be
divided into three types: elements of $\gadgetV_\u$, elements which can be represented in the form $\langle \u,l\rangle$ for
some $l\in N^+_\G(\u)$, and elements which can be represented in the form $\langle r,\u\rangle$ for some $r\in N^-_\G(\u)$. Let
$L\coloneq\{\langle \u,l\rangle:\langle \u,l\rangle\in \VV_{|\pi,u}\}$, $R\coloneq\{\langle r,\u\rangle:\langle r,\u\rangle\in
\VV_{|\pi,\u}\}$ and suppose that $L=\{\langle \u,l_1\rangle, \langle \u,l_2\rangle, \ldots,\langle \u,l_i\rangle\}$,
$R=\{\langle r_1,\u\rangle, \langle r_2,\u\rangle,\ldots,\langle r_j,\u\rangle\}$, where $\ee{\langle
\u,l_1\rangle}<\ee{\langle \u,l_2\rangle}<\ldots<\ee{\langle \u,l_i\rangle}$, {and} $\ee{\langle r_1,\u\rangle}<\ee{\langle
r_2,\u\rangle}<\ldots<\ee{\langle r_j,\u\rangle}$. Since $\G$ is without self-loops, we infer that  $\{l_1,l_2,\ldots, l_i\}$
$\cap \{r_1,r_2,\ldots,r_j\}=\emptyset$. We also denote by $\pipathcut{\pi}{\u}{P}$ the subsequence of $\pipath{\pi}{P}$ which
is an $\EE$-path given by $\vertexOF{\pipath{\pi}{P}}\cap \gadget_\u$, since $\vertexOF{\pipath{\pi}{P}}\setminus \gadget_\u$
can contain at most sources in $\D$ if $\vertexOF{\pipath{\pi}{P}}\cap \gadget_\u\neq \emptyset$, where  $P\in \pi$.
Additionally, we denote by $\pipathcut{\pi}{\u}{v}$ the path $\pipathcut{\pi}{\u}{P}$, where $v\in P$.

\par
{\em Step 1.} We show that $|\picut{\pi}{\u}|\geq |\E|+1+\min\{i,j\}$. To do this, suppose first that for every integer $k$
that satisfies $1\leq k\leq |\E|$ there exists $P\in \picut{\pi}{\u}$ for which $\pipathcut{\pi}{\u}{P}$ is a non-empty
subsequence of $\LL_\u^{k}$. Thus the partition $\pi$ has at least $|\E|$ elements, i.e.\ the ones forming the subsequence of
 $\LL_\u^{k}$ for $1\leq k\leq |\E|$. {These elements} are also
different from any $\vertexOF{\pipath{\pi}{v}}$ where $v \in R\cup \{\u_{0,0}\}$, since $\vertexOF{\LL_\u^{k}}\cap (R\cup
\{\u_{0,0}\}) = \emptyset$ for each $k$. {In addition,} $\vertexOF{\pipath{\pi}{v_1}}\neq \vertexOF{\pipath{\pi}{v_2}}$ for
different $v_1,v_2\in R\cup \{\u_{0,0}\}$, as $v_1$, $v_2$ have no incoming arcs in $\DD{\G}{\e}_{|\pi,\u}$.
 Thus finally $|\picut{\pi}{\u}|\geq |\E|+|R\cup \{u_{0,0}\}|=|\E|+1+j\geq |\E|+1+\min\{i,j\}$.

Suppose now that there exists $k$ that satisfies $1\leq k\leq |\E|$, for which there is no $P\in\picut{\pi}{\u}$ such that the
whole sequence $\pipathcut{\pi}{\u}{P}$ is a subsequence of $\LL^k_\u$. From Lemma~\ref{R2_S1_WN1} we have that for each $1\leq
d \leq |\E|$, $\pipath{\pi}{\u_{k,d}}$ is a subsequence of $\LL^k_\u$ or $\RR^d_u$, hence by assumption
$\vertexOF{\pipath{\pi}{\u_{k,d}}}\subseteq
\vertexOF{\RR^d_u}$ for each $d$.
Thus $\pipath{\pi}{\u_{k,1}}, \pipath{\pi}{\u_{k,2}},\ldots,\pipath{\pi}{\u_{k,|\E|}}$ are pairwise different and
$\pipath{\pi}{\u_{k,d}}\neq \pipath{\pi}{v}$ for every $1\leq d\leq |\E|,\,v\in L\cup \{\u_{0,0}\}$ (note that the possibility
$\pipath{\pi}{\u_{k,0}}=\pipath{\pi}{v}$ for some $v\in L\cup \{\u_{0,0}\}$ is not excluded). Additionally,
$\pipath{\pi}{v_1}\neq \pipath{\pi}{v_2}$ for different $v_1,v_2\in L\cup \{\u_{0,0}\}$ as $v_1$, $v_2$ have no incoming arcs,
hence $|\picut{\pi}{\u}|\geq |\E|+|L\cup \{\u_{0,0}\}|= |\E|+1+i\geq |\E|+1+\min\{i,j\}$ and the proof of this step is
complete.

\par
{\em Step 2.} {We describe here a property of arcs in $\pidigraph{\D}{\pi}$ whose both tail and head cross different gadgets.
Let $\langle P_1,P_2\rangle$ be an arc of $\pidigraph{\D}{\pi}$ such that $P_1\cap \gadgetV_{v}\neq \emptyset$, $P_2\cap
\gadgetV_{w}\neq \emptyset$, and $v\neq w$, where $v,w\in\V$. We show that it is possible only in two cases, depending on the
direction of the arc between $v$ and $w$:
\begin{enumerate}[label=(\roman*)]
\item $\langle v,w\rangle,\:v_{\ee{\langle v,w\rangle},0}\in P_1$,
$w_{0,\ee{\langle v,w\rangle}}\in P_2$,
\item or $\langle w,v\rangle,\:v_{0,\ee{\langle w,v\rangle}}\in P_1$,
$w_{\ee{\langle w,v\rangle},0}\in P_2$.
\end{enumerate}
Observe that since $\gadgetV_{v}\pppp{\EE}\gadgetV_{w}=\emptyset$ and $P_1\pppp{\EE}P_2\neq\emptyset$, one of vertices in the
arc that connects $P_1$ with $P_2$ must be outside of $\gadgetV_{v}$, $\gadgetV_{w}$ (see Fig.~\ref{mody_step2}). This vertex
has to be a part of a Hamiltonian path intersecting the sets $\gadgetV_{v}$, $\gadgetV_{w}$, hence it has the form
$\langle{v},{x}\rangle$, $\langle{w},{x}\rangle$, $\langle{x},{v}\rangle$, or $\langle{x},{w}\rangle$. The only vertices of the
form that can be connected to both $\gadgetV_{v}$ and $\gadgetV_{w}$ are $\langle{v},{w}\rangle$ and $\langle{w},{v}\rangle$.
Therefore, we have
 $\langle v,w\rangle \in P_1\cup P_2$
or $\langle w,v\rangle \in P_1\cup P_2$. We show that $(i)$ holds for $\langle v,w\rangle \in P_1\cup P_2$ and $(ii)$ in the
second case. We prove only the first case, the second case is analogous. Let us assume that $\langle v,w\rangle \in P_1\cup
P_2$ and suppose first, contrary to our claim that $\langle v,w\rangle\notin P_1$. Then $\langle v,w\rangle\in P_2$. Obviously,
$\langle v,w\rangle$ is adjacent only with $v_{\ee{\langle v,w\rangle},0}$, $w_{0,\ee{\langle v,w\rangle}}$ and both cannot
belong at the same time to $P_2$, since $\langle v,w\rangle$ is a source in $\D$ and $P_2$ determines the path
$\pipath{\pi}{\langle v,w\rangle}$. Note, that an arbitrary directed path of $\D$ can go through at most one vertex outside of
the gadgets of the form \eqref{Gadget_def_5.2}. Additionally, there are no arcs that connect vertices belonging to two
different gadgets. Hence $P_2\setminus\{\langle v,w\rangle\}\subseteq \gadgetV_w$ since $\langle v,w\rangle\in P_2$ and
$P_2\cap \gadgetV_{w}\neq \emptyset$. Consequently, $\langle v,w\rangle$ is the only vertex that belongs to $P_2$ and is
adjacent to any vertex of $\gadgetV_v$. Then, $\langle\langle v,w\rangle,v_{\ee{\langle v,w\rangle},0}\rangle
\in P_2\pppp{\EE}P_1$ and finally,
$\langle P_2,P_1\rangle$ is an arc of $\pidigraph{\D}{\pi}$, but this contradicts the assumption that $\pidigraph{\D}{\pi}$ is
acyclic. Since $\langle v,w\rangle\in P_1$, it is easy to check that $v_{\ee{\langle v,w\rangle},0}\in P_1$, $w_{0,\ee{\langle
v,w\rangle}}\in P_2$. }
\begin{figure}
\begin{center}
\begin{tikzpicture}
\def\s{0.4cm}
\def\p{0.125cm}
\def\q{0.135cm}
\tiny
[>=stealth,->,shorten >=2pt,looseness=.5,auto]
    \matrix (M)[matrix of math nodes,
        column sep={\s,between origins},
        row sep={\s,between origins},
        nodes={mynodes,color=black}]
{ &&&&            &&|(C00)| &&   &&&&&  \\ &&&&            &&& &   &&&&&  \\ &&|(A00)|&&       &&&&   &&|(B00)|&&&         \\
&&&|(A10)|&       &&&&   &|(B01)|&&&         \\ |(A02)|&&&&|(A20)| &&&&   |(B02)|&&&&|(B20)|&  \\ &&&&              &&&&
&&&&&   \\ &&|(A22)|&&       &&&&   &&|(B22)|&&&  \\ };
\draw[-,rounded corners=2mm,color=black, shorten >=2pt]
    ($(A00)+(-\q,\q)$)--++
    (-2*\s-\q,-\q-2*\s)--++
    (\q+\q+2*\s,-\q-\q-2*\s)--++
    (2*\s+2*\q,2*\s+2*\q)--++
    (-2*\q-2*\s,2*\q+2*\s)--++(-\s,-\s);
\draw[-,rounded corners=2mm,color=black, shorten >=2pt]
    ($(B00)+(-\q,\q)$)--++
    (-2*\s-\q,-\q-2*\s)--++
    (\q+\q+2*\s,-\q-\q-2*\s)--++
    (2*\s+2*\q,2*\s+2*\q)--++
    (-2*\q-2*\s,2*\q+2*\s)--++(-\s,-\s);
\draw[-,rounded corners=2mm, shorten >=0pt]
    ($(A10)+(-2*\p,0)$)--++
    (3*\s+2*\p,2*\p+3*\s)--++
    (\p+\p,-\p-\p)--++
    (-3*\s-2*\p,-2*\p-3*\s);
\draw[-,rounded corners=2mm, shorten >=0pt]
    ($(B01)+(0,-2*\p)$)--++
    (-2*\p,2*\p)--++
    (\p+\p,\p+\p)--++
    (2*\p,-2*\p);
\node at (C00) [above left,inner sep=1.5mm]{\small $\langle v, w \rangle$};
\node at ($(A10)+(0,0.18)$) [right,inner sep=3mm]{$v_{\ee{\langle v, w \rangle},0}$};
\node at ($(B01)+(0,-0.12)$) [left,inner sep=2mm]{$w_{0,\ee{\langle v, w \rangle}}$};
\node at ($(A10)+(0.1,0)$) [above,inner sep=5mm]{\small $P_1$};
\node at ($(B01)+(-0.1,0)$) [above,inner sep=3mm]{\small $P_2$};
\node at ($(A02)+(0.3,0)$) [above,inner sep=8.5mm,color=black]{\small $\gadgetV_{v}$};
\node at ($(B20)+(-0.2,0)$) [above,inner sep=8.5mm,color=black]{\small $\gadgetV_{w}$};
\draw[Referencja](C00)--(A10);
\draw[Referencja](C00)--(B01);
\draw[Referencja,dotted,color = black](A00) -- (A10);
\draw[Referencja,dotted,color = black](A00) -- ($(A00)+(-\s+0.4*\p,-\s+0.4*\p)$);
\draw[Referencja,dotted,color = black](A10) -- (A20);
\draw[Referencja,dotted,color = black](A10) -- ($(A10)+(-\s+0.4*\p,-\s+0.4*\p)$);
\draw[Referencja,dotted,color = black](A02) -- ($(A02)+(\s-0.4*\p,-\s+0.4*\p)$);
\draw[Referencja,dotted,color = black]($(A02)+(\s-0.4*\p,\s-0.4*\p)$)--(A02);
\draw[Referencja,dotted,color = black](A20) -- ($(A20)+(-\s+0.4*\p,-\s+0.4*\p)$);
\draw[Referencja,dotted,color = black]($(A22)+(-\s+0.4*\p,\s-0.4*\p)$)--(A22);
\draw[Referencja,dotted,color = black]($(A22)+(\s-0.4*\p,\s-0.4*\p)$)--(A22);
\draw[Referencja,dotted,color = black](B00) -- (B01);
\draw[Referencja,dotted,color = black](B00) -- ($(B00)+(\s-0.4*\p,-\s+0.4*\p)$);
\draw[Referencja,dotted,color = black](B01) -- (B02);
\draw[Referencja,dotted,color = black](B01) -- ($(B01)+(\s-0.4*\p,-\s+0.4*\p)$);
\draw[Referencja,dotted,color = black](B02) -- ($(B02)+(\s-0.4*\p,-\s+0.4*\p)$);
\draw[Referencja,dotted,color = black]
($(B20)+(-\s+0.4*\p,\s-0.4*\p)$)--(B20);
\draw[Referencja,dotted,color = black](B20) -- ($(B20)+(-\s+0.4*\p,-\s+0.4*\p)$);
\draw[Referencja,dotted,color = black]($(B22)+(-\s+0.4*\p,\s-0.4*\p)$)--(B22);
\draw[Referencja,dotted,color = black]($(B22)+(\s-0.4*\p,\s-0.4*\p)$)--(B22);
\node at ($(C00)+(0,-3.2)$) [above]{\small the case $\langle v,w\rangle\in P_1\cup P_2$};
\end{tikzpicture}$\mkern40mu$\begin{tikzpicture}
\def\s{0.4cm}
\def\p{0.125cm}
\def\q{0.135cm}
\tiny
[>=stealth,->,shorten >=2pt,looseness=.5,auto]
    \matrix (M)[matrix of math nodes,
        column sep={\s,between origins},
        row sep={\s,between origins},
        nodes={mynodes,color=black}]
{ &&&&            &&|(C00)| &&   &&&&&  \\ &&&&            &&& &   &&&&&  \\ &&|(A00)|&&       &&&&   &&|(B00)|&&&         \\
&&&|(A10)|&       &&&&   &|(B01)|&&&         \\ |(A02)|&&&&|(A20)| &&&&   |(B02)|&&&&|(B20)|&  \\ &&&&              &&&&
&&&&&   \\ &&|(A22)|&&       &&&&   &&|(B22)|&&&  \\ };
\draw[-,rounded corners=2mm,color=black, shorten >=2pt]
    ($(A00)+(-\q,\q)$)--++
    (-2*\s-\q,-\q-2*\s)--++
    (\q+\q+2*\s,-\q-\q-2*\s)--++
    (2*\s+2*\q,2*\s+2*\q)--++
    (-2*\q-2*\s,2*\q+2*\s)--++(-\s,-\s);
\draw[-,rounded corners=2mm,color=black, shorten >=2pt]
    ($(B00)+(-\q,\q)$)--++
    (-2*\s-\q,-\q-2*\s)--++
    (\q+\q+2*\s,-\q-\q-2*\s)--++
    (2*\s+2*\q,2*\s+2*\q)--++
    (-2*\q-2*\s,2*\q+2*\s)--++(-\s,-\s);
\draw[-,rounded corners=2mm, shorten >=0pt]
    ($(B01)+(2*\p,0)$)--++
    (-3*\s-2*\p,2*\p+3*\s)--++
    (-\p-\p,-\p-\p)--++
    (3*\s+2*\p,-2*\p-3*\s);
\draw[-,rounded corners=2mm, shorten >=0pt]
    ($(A10)+(0,-2*\p)$)--++
    (2*\p,2*\p)--++
    (-\p-\p,\p+\p)--++
    (-2*\p,-2*\p);
\node at (C00) [above left,inner sep=1.5mm]{\small $\langle w, v \rangle$};
\node at ($(A10)+(0,0.18)$) [right,inner sep=3mm]{$w_{\ee{\langle w, v \rangle},0}$};
\node at ($(B01)+(0,-0.12)$) [left,inner sep=2mm]{$v_{0,\ee{\langle w, v \rangle}}$};
\node at ($(A10)+(0.1,0)$) [above,inner sep=3mm]{\small $P_2$};
\node at ($(B01)+(-0.1,0)$) [above,inner sep=5mm]{\small $P_1$};
\node at ($(A02)+(0.3,0)$) [above,inner sep=8.5mm,color=black]{\small $\gadgetV_{w}$};
\node at ($(B20)+(-0.2,0)$) [above,inner sep=8.5mm,color=black]{\small $\gadgetV_{v}$};
\draw[Referencja](C00)--(A10);
\draw[Referencja](C00)--(B01);
\draw[Referencja,dotted,color = black](A00) -- (A10);
\draw[Referencja,dotted,color = black](A00) -- ($(A00)+(-\s+0.4*\p,-\s+0.4*\p)$);
\draw[Referencja,dotted,color = black](A10) -- (A20);
\draw[Referencja,dotted,color = black](A10) -- ($(A10)+(-\s+0.4*\p,-\s+0.4*\p)$);
\draw[Referencja,dotted,color = black](A02) -- ($(A02)+(\s-0.4*\p,-\s+0.4*\p)$);
\draw[Referencja,dotted,color = black]($(A02)+(\s-0.4*\p,\s-0.4*\p)$)--(A02);
\draw[Referencja,dotted,color = black](A20) -- ($(A20)+(-\s+0.4*\p,-\s+0.4*\p)$);
\draw[Referencja,dotted,color = black]($(A22)+(-\s+0.4*\p,\s-0.4*\p)$)--(A22);
\draw[Referencja,dotted,color = black]($(A22)+(\s-0.4*\p,\s-0.4*\p)$)--(A22);
\draw[Referencja,dotted,color = black](B00) -- (B01);
\draw[Referencja,dotted,color = black](B00) -- ($(B00)+(\s-0.4*\p,-\s+0.4*\p)$);
\draw[Referencja,dotted,color = black](B01) -- (B02);
\draw[Referencja,dotted,color = black](B01) -- ($(B01)+(\s-0.4*\p,-\s+0.4*\p)$);
\draw[Referencja,dotted,color = black](B02) -- ($(B02)+(\s-0.4*\p,-\s+0.4*\p)$);
\draw[Referencja,dotted,color = black]
($(B20)+(-\s+0.4*\p,\s-0.4*\p)$)--(B20);
\draw[Referencja,dotted,color = black](B20) -- ($(B20)+(-\s+0.4*\p,-\s+0.4*\p)$);
\draw[Referencja,dotted,color = black]($(B22)+(-\s+0.4*\p,\s-0.4*\p)$)--(B22);
\draw[Referencja,dotted,color = black]($(B22)+(\s-0.4*\p,\s-0.4*\p)$)--(B22);
\node at ($(C00)+(0,-3.2)$) [above]{\small the case $\langle w,v\rangle\in P_1\cup P_2$};
\end{tikzpicture}
\end{center}
\caption{An illustration of the construction from the proof of Lemma~\ref{modyf}, step 2.}
\label{mody_step2}
\end{figure}
\par
{\em Step 3.} {We prove here the final conclusion.} Let us consider two $\mathcal{H}^{(\ast)}$-partitions of
$\D_{|\pi,\u}$
\begin{equation}
\begin{array}{rcl}
\widehat{\mathcal{L}}_\u&\coloneq&\{\vertexOF{{\LL}_\u^{n}}:0\leq n\leq |\E|\,\wedge\,n\notin\ee{L}\}
\cup\\
&&
\strut\mkern25mu\{\vertexOF{\langle\u,l_n\rangle\rangle\!\mathbin{{}^\frown}\!\LL_\u^{\ee{\langle \u,l_n\rangle}}}:1\leq n\leq
i\}
\cup\{\{\langle r_n,\u\rangle\}:1\leq n\leq j\},\\
\widehat{\mathcal{R}}_\u&\coloneq&\{\vertexOF{\RR_\u^{n}}:0\leq n\leq |\E|\,\wedge\,n\notin\ee{R}\}\cup\\
&&
\strut\mkern25mu\{\vertexOF{\langle\langle r_n,\u\rangle\rangle\!\mathbin{{}^\frown}\!\RR_\u^{\ee{\langle
r_n,\u\rangle}}}:1\leq n\leq j\}
\cup\{\{\langle \u,l_n\rangle\}:1\leq n\leq i\}.\\
\end{array}
\end{equation}
It is clear that $|\widehat{\mathcal{L}}_\u|=|\E|\!+\!1\!+\!j$, $|\widehat{\mathcal{R}}_\u|=|\E|\!+\!1\!+\!i$. From the
assumption about $\G$ we obtain that in-degree or out-degree of $\u$ is $1$. {Let us}  assume that in-degree of $\u$ is $1$,
thus
 there exists $\r$ such that $N_\G^-(\u)=\{\r\}$ (the second case $|N_\G^+(\u)|=1$ is analogous).
The situation that $\langle\r,u\rangle \notin \VV_{|\pi,u}$
 is not excluded here, hence $j\leq 1$.
We split the proof into the following cases:
\begin{enumerate}[label=\arabic*.]

\item {\em Suppose that $j=1, i\geq 1$}. From {\em Step 1.}
we have that $|\picut{\pi}{\u}|\geq |\E|+2=|\widehat{\mathcal{L}}_\u|$, and consequently $|\pi^\prime|\leq |\pi|$ where
$\pi^\prime$ is defined by $\pi^\prime \coloneq (\pi\setminus\picut{\pi}{\u})\cup
\widehat{\mathcal{L}}_\u$. Obviously, $\pi^\prime$ is an
$\mathcal{H}^{\ast}$-partition of $\D$ and $\pi^\prime$ fulfils the condition about well-ordering of gadgets. To
complete the proof it is enough to show that $\pidigraph{\D}{\pi^\prime}$ is acyclic. Suppose, contrary to our claim, that
there exists a cycle $\mathfrak{a}$ in $\pidigraph{\D}{\pi^\prime}$. {As} $\pidigraph{\D}{\pi}$ is acyclic we have that
$\mathfrak{a}$ has to cross vertices of $\widehat{\mathcal{L}}_\u$. To obtain a contradiction, we show that it is possible
to replace every maximal subpath of $\mathfrak{a}$ that consists of vertices from $\widehat{\mathcal{L}}_\u$ by a path
which is obtained from vertices of $\picut{\pi}{\u}$ in such a way that the modified path $\mathfrak{a}^\prime$ is a cycle.
By repetitive application of the procedure we obtain a cycle in
 $\pidigraph{\D}{\pi}$.

\par
Let us take such a maximal subpath $\mathfrak{a}^\prime=\langle a^\prime_1,a^\prime_2,\ldots,a^\prime_k\rangle$ of
$\mathfrak{a}$ with elements of $\widehat{\mathcal{L}}_\u$ and denote by $a^\prime_0$, $a^\prime_{k+1}$ vertices that
belong to $\mathfrak{a}$ and $\langle a^\prime_0,a^\prime_1\rangle$, $\langle a^\prime_k,a^\prime_{k+1}\rangle$ are arcs of
$\pidigraph{\D}{\pi^\prime}$. We have that $\mathfrak{a}\neq \mathfrak{a}^\prime$ and $a^\prime_0,a^\prime_{k+1}\neq
a^\prime_1,a^\prime_2,\ldots,a^\prime_k$
 since $\widehat{\mathcal{L}}_\u$
is an $\mathcal{H} ^{(\ast)}$-partitions of $\D_{|\pi,\u}$. Since $\{\langle \r,\u\rangle\}$ is a source in
$\pidigraph{\D}{\pi^\prime}$ it cannot occur in a cycle, therefore for each $i=1,2,\ldots,k$
  the property $a^\prime_i\cap \gadgetV_\u\neq \emptyset$ holds. For similar reasons,
  the edges in $\E$ that witness
  $\langle a^\prime_0,a^\prime_1\rangle$, $\langle a^\prime_k,a^\prime_{k+1}\rangle$
  have the form
    $\langle \u,v\rangle,\langle \u,w\rangle\in \E$ where
$a^\prime_0 \cap \gadgetV_v\neq \emptyset$, $a^\prime_{k+1} \cap \gadgetV_w\neq \emptyset$. Then from {\em Step 2.} we have
that $\langle \u,v\rangle \in a^\prime_0$, $\u_{\ee{\langle \u,v\rangle},0}\in a_1^\prime$, $\langle
\u,w\rangle,\,\u_{\ee{\langle \u,w\rangle},0} \in a^\prime_k$, $w_{0,\ee{\langle \u,w\rangle}}\in a_{k+1}^\prime$ and, in
consequence,
 $v\in L$, $w\in \mathcal{N}^+_\G(\u)\setminus L$.
 Additionally,
$\ee{\langle \u,v\rangle} < \ee{\langle \u,w\rangle}$ (see Fig.~\ref{mody_step3_case1}). Let us consider a sequence
\begin{equation}
\mathfrak{b}=\langle\pipath{\pi}{\u_{\ee{\langle \u,v\rangle},0}},
\pipath{\pi}{\u_{\ee{\langle \u,v\rangle}+1,0}},
\pipath{\pi}{\u_{\ee{\langle \u,v\rangle}+2,0}},
\ldots,
\pipath{\pi}{\u_{\ee{\langle \u,w\rangle},0}}\rangle
\end{equation}
of vertices of $\picut{\pi}{\u}$. Obviously, vertices of this sequence do not have to be pairwise different. Therefore let
us consider the maximal subsequence $\mathfrak{b}^\prime=\langle b_1,b_2,\ldots, b_{k^\prime} \rangle$ of $\mathfrak{b}$
such that $b_i\neq b_{i+1}$ for $i=1,2,\ldots,k^\prime-1$. It is easy to check that for each $i$ there exists $j$ such that
$\ee{\langle \u,v\rangle}\leq j<{\ee{\langle \u,w\rangle}}$, $b_i=\pipath{\pi}{\u_{j,0}}$,
$b_{i+1}=\pipath{\pi}{\u_{j+1,0}}$, hence $\langle b_i, b_{i+1}\rangle$ is an arc of $\pidigraph{\D}{\pi}$, since
$\langle\u_{j,0},\u_{j+1,0}\rangle\in \EE$, and finally $\mathfrak{b}^\prime$ is a path of $\pidigraph{\D}{\pi}$.
Additionally, $\langle a_0^\prime,b_1\rangle$, $\langle b_{k^\prime},a^\prime_{k+1}\rangle$ are arcs of
$\pidigraph{\D}{\pi}$, hence we obtain that the subpath $\mathfrak{a}^\prime$ can be replaced by $\mathfrak{b}^\prime$.

\begin{figure}
\begin{center}
$\strut\mkern190mu$\begin{tikzpicture}
\def\s{0.43cm}
\def\p{0.125cm}
\def\q{0.135cm}
\tiny
[>=stealth,->,shorten >=2pt,looseness=.5,auto]
    \matrix (M)[matrix of math nodes,
        column sep={\s,between origins},
        row sep={\s,between origins},
        nodes={mynodes,color=black}]
{ &&&&&|(uv)| && && &&&&&&\\ &       &&&&&|(C00)|& && &&&&\\ &|(A00)|&&&&&&& && \\ &&|(A10)|&&& && &\\ &&&|(A20)|&& &&
&&|(uw)|\\ &&&&& && &&&|(B00)|\\ &&       &&&|(A30)| && && \\ &&&&&&|(A40)| && && \\ &&&&&&&|(A50)|& &&\\ };
\draw[Referencja,dotted,color = black](A00) -- (A10);

\draw[Referencja,color = black](A10) -- (A20);
\draw[Referencja,dotted,color = black](A20) -- (A30);
\draw[Referencja,color = black](A30) -- (A40);
\draw[Referencja,dotted,color = black](A40) -- (A50);
\draw[Referencja](uv) -- (A10);
\draw[Referencja](uv) -- (C00);
\draw[Referencja](uw) -- (A40);
\draw[Referencja](uw) -- (B00);
\draw[Referencja,dotted,color = black]
(A10) -- ($(A10)+(-\s+0.4*\p,-\s+0.4*\p)$);
\draw[Referencja,dotted,color = black]
(A20) -- ($(A20)+(-\s+0.4*\p,-\s+0.4*\p)$);
\draw[Referencja,dotted,color = black]
(A30) -- ($(A30)+(-\s+0.4*\p,-\s+0.4*\p)$);
\draw[Referencja,dotted,color = black]
(A40) -- ($(A40)+(-\s+0.4*\p,-\s+0.4*\p)$);
\draw[Referencja,dotted,color = black]
(A50) -- ($(A50)+(-\s+0.4*\p,-\s+0.4*\p)$);
\draw[Referencja,dotted,color = black]
(A00) -- ($(A00)+(-\s+0.4*\p,-\s+0.4*\p)$);

\draw[-,rounded corners=2mm,color=black, shorten >=0pt]
    ($(A00)+(-\q-\s,\q-\s)$)--++
    (\q+\s,\q+\s)--++
    (\q+6*\s+\q,-\q-6*\s-\q)--++
    (-\q-\s,-\q-\s);
\draw[-,rounded corners=2mm,color=black, shorten >=0pt]
    ($(B00)+(-\s*0.5,-2*\q-\s*0.5)$)--++
    (-2*\q,2*\q)--++
    (2*\q+\s,2*\q+\s)--++
    (2*\q,-2*\q);
 \draw[-,rounded corners=2mm,color=black, shorten >=0pt]
    ($(C00)+(-\s*0.5,-2*\q-\s*0.5)$)--++
    (-2*\q,2*\q)--++
    (2*\q+\s,2*\q+\s)--++
    (2*\q,-2*\q);
\draw[-,rounded corners=2mm, shorten >=0pt]
    ($(A40)+(-2*\p,0)$)--++
    (3*\s+2*\p,2*\p+3*\s)--++
    (\p+\p,-\p-\p)--++
    (-3*\s-2*\p,-2*\p-3*\s);
\draw[-,rounded corners=2mm, shorten >=0pt]
    ($(C00)+(2*\p,0)$)--++
    (-1*\s-2*\p,2*\p+1*\s)--++
    (-\p-\p,-\p-\p)--++
    (1*\s+2*\p,-2*\p-1*\s);
\draw[-,rounded corners=2mm, shorten >=0pt]
    ($(B00)+(2*\p,0)$)--++
    (-2*\p,2*\p)--++
    (-\p-\p,-\p-\p)--++
    (2*\p,-2*\p);
    \draw[-,rounded corners=2mm, shorten >=0pt]
    ($(A30)+(-2*\p,0)$)--++
    (2*\p,2*\p)--++
    (\p+\p,-\p-\p)--++
    (-2*\p,-2*\p);
    \draw[-,rounded corners=2mm, shorten >=0pt]
    ($(A20)+(-2*\p,0)$)--++
    (2*\p,2*\p)--++
    (\p+\p,-\p-\p)--++
    (-2*\p,-2*\p);
\draw[-,rounded corners=2mm, shorten >=0pt]
    ($(A10)+(-2*\p,0)$)--++
    (2*\p,2*\p)--++
    (\p+\p,-\p-\p)--++
    (-2*\p,-2*\p);
\node at (uv) [color=black,above,inner sep=2.2mm]{ $\langle u, v \rangle$};
\node at (uw) [color=black,above,inner sep=2.2mm]{ $\langle u, w \rangle$};

\node at ($(C00)+(0.05,0.1)$) [above,inner sep=3.2mm]{\small $a_0^\prime$};
\node at ($(B00)+(0.05,-0.05)$) [above right,inner sep=1.2mm]{\small $a_{k+1}^\prime$};

\node at ($(A10)+(0,-0.1)$) [color=black,left,inner sep=1.9mm]
{ $u_{\ee{\langle u, v \rangle},0}$};
\node at ($(A20)+(0,-0.1)$) [color=black,left,inner sep=2.9mm]
{$u_{\ee{\langle u, v \rangle}\!+\!1,0}$};
\node at ($(A30)+(0,-0.1)$) [left,color=black,inner sep=2.9mm]
{$u_{\ee{\langle u, w \rangle}\!-\!1,0}$};
\node at ($(A40)+(0,-0.1)$) [left,inner sep=2.9mm,color=black]
{ $u_{\ee{\langle u, w \rangle},0}$};
\node at ($(A00)+(0,0.02)$) [left,inner sep=2.9mm,color=black]
{$u_{0,0}$};
\node at ($(A50)+(0,-0.1)$) [left,inner sep=2.9mm,color=black]
{$u_{|\E|+1,0}$};
\node at ($(C00)+(0,-0.2)$) [right,inner sep=1.9mm,color=black]
{$v_{0,\ee{\langle u, v \rangle}}$};
\node at ($(B00)+(0,-0.2)$) [right,inner sep=1.9mm,color=black]
{$w_{0,\ee{\langle u, w \rangle}}$};
\node at ($(A10)+(0,0.1)$) [right,inner sep=2.5mm]
{\small $\pipath{\pi}{\u_{\ee{\langle \u,v\rangle},0}}$};
\node at ($(A20)+(0,0.1)$) [right,inner sep=1.7mm]
{\small $\pipath{\pi}{\u_{\ee{\langle \u,v\rangle}\!+\!1,0}}$}; 
\node at ($(A40)+(0,-0.1)$) [right,inner sep=3.4mm]
{\small $\pipath{\pi}{\u_{\ee{\langle \u,w\rangle}\!+\!1,0}}$};
\node at ($(A00)+(0,0.2)$) [color=black, above left,inner sep=1.4mm]
{\small $\gadgetV_u$};
\node at ($(uv)+(0.7,-0.1)$) [color=black, above right,inner sep=1.4mm]
{\small $\gadgetV_v$};
\node at ($(uw)+(0.7,-0.1)$) [color=black, above right,inner sep=1.4mm]
{\small $\gadgetV_w$};
\end{tikzpicture}\end{center}
\caption{An illustration of the construction from the proof of Lemma~\ref{modyf}, step 3, case 1.}
\label{mody_step3_case1}
\end{figure}

\item {\em Suppose that $j=i=0$}.
From {\em Step 1.} we have that $|\picut{\pi}{\u}|\geq |\E|+1=|\widehat{\mathcal{L}}_\u|$. Then the proof completes if we
take $\pi^\prime \coloneq (\pi\setminus\picut{\pi}{\u})\cup \widehat{\mathcal{L}}_\u$, since every arc of
$\pidigraph{\D}{\pi^\prime}$ that joins vertices of $\pi\setminus\picut{\pi}{\u}$ and $\widehat{\mathcal{L}}_\u$ is
directed towards $\widehat{\mathcal{L}}_\u$.

\item {\em Suppose that $j=0$, $i\geq1$.}
From {\em Step 1.} we have that $|\picut{\pi}{\u}|\geq |\E|+1=|\widehat{\mathcal{L}}_\u|$. Let $\pi^\prime \coloneq
(\pi\setminus\picut{\pi}{\u})\cup \widehat{\mathcal{L}}_\u$. Obviously, $|\pi^\prime|\leq |\pi|$ and $\pi^\prime$ is an
$\mathcal{H}^{\ast}$-partition of $\D$ that fulfils the condition about well-ordering of gadgets. It is left to show
that $\pidigraph{\D}{\pi^\prime}$ is acyclic. Suppose, contrary to our claim, that there exists a cycle $\mathfrak{a}$ of
$\pidigraph{\D}{\pi^\prime}$. Similar to the first case, we infer that $\mathfrak{a}$ has to cross vertices of
$\widehat{\mathcal{L}}_\u$, but since $j=0$, $\{\langle
\r,\u\rangle\}$ need not be a source in
$\pidigraph{\D}{\pi^\prime}$, and $\mathfrak{a}$ can cross $\widehat{\mathcal{L}}_\u$, not only by going in and going out
from the ``right hand side'' of $\gadget_u$ (as in Fig.~\ref{mody_step3_case1}), but also by going in from the ``left hand
side'' of $\gadget_u$ traversing the vertex $P=\vertexOF{\pipath{\pi^\prime}{\langle
\r,\u\rangle}}$. In the case, when there exists a cycle of
$\pidigraph{\D}{\pi^\prime}$ that goes to $\widehat{\mathcal{L}}_\u$ traversing $P$, we prove that $|\pi^\prime|<|\pi|$.
Then we conclude that $\pi^{\prime\prime}=(\pi^\prime\setminus\{P\})\cup\{P\setminus\{\langle
\r,\u\rangle\}, \{\langle \r,\u\rangle\}\}$, which is an
\mbox{$\mathcal{H}^{\ast}$-partition} of $\D$ that is
created by dividing $P$ into two parts $P\setminus\{\langle
\r,\u\rangle\}$, $\{\langle \r,\u\rangle\}$, has at most $|\pi|$
elements and preserves the orientanion of gadgets. Additionally, $\{\langle \r,\u\rangle\}$ is a source in
$\pidigraph{\D}{\pi^{\prime\prime}}$. Now, every cycle has to cross vertices of $\widehat{\mathcal{L}}_\u$, go in and go
out from the ``right hand side'' of $\gadget_u$. The application of the proof idea of the first case completes the proof.

Let us take a cycle of $\pidigraph{\D}{\pi^\prime}$ that goes into $\widehat{\mathcal{L}}_\u$ traversing $P$. Now we prove
that $|\pi^{\prime}| < |\pi|$. For this purpose it is enough to show that
$|\picut{\pi}{\u}|>|\E|+1=|\widehat{\mathcal{L}}_\u|$. Suppose, contrary to our claim, that $|\picut{\pi}{\u}|\leq|\E|+1$.
From Lemma~\ref{R2_S1_T1} we conclude that $|\E|+1\leq |\picut{\pi}{\gadget_\u}|\leq |\picut{\pi}{\u}|\leq |\E|+1$ and in
addition $\picut{\pi}{\gadget_\u}=\mathcal{L}_\u$ or $\picut{\pi}{\gadget_\u}=\mathcal{R}_\u$. Note in this case also
$|\picut{\pi}{\gadget_\u}|=|\picut{\pi}{\u}|$. Thus, when $\picut{\pi}{\gadget_\u}=\mathcal{L}_\u$ we get equalities
 $\picut{\pi}{\u}=\widehat{\mathcal{L}}_\u$,
$\pi^\prime=\pi$, but this contradicts our assumption that $\pidigraph{\D}{\pi}$ is acyclic. In consequence,
$\picut{\pi}{\gadget_\u}=\mathcal{R}_\u$. Let us focus on the vertex $\langle u,l_1\rangle\in\VV_{|\pi,\u}$. This vertex is
adjacent only with $\u_{\ee{\langle \u,l_1\rangle},0}$ from $\VV_{|\pi,\u}$, which also belongs to $\vertexOF{\RR^0_\u}$.
Hence there exists $Q\in \picut{\pi}{\u}$  that contains
 $\{\langle u,l_1\rangle\}\cup \vertexOF{\RR^0_\u}$, since
 $|\picut{\pi}{\u}|=|\mathcal{R}_\u|$.
However $\{\langle u,l_1\rangle\}\cup \vertexOF{\RR^0_\u}$ contains two sources in $\D$, namely $\langle u,l_1\rangle$ and
$u_{0,0}$. This is in contradiction with the fact that all the vertices of $Q$ are in a single Hamiltonian path.

\item {\em Suppose that $j=1, i=0$}.
From {\em Step 1.} we have that $|\picut{\pi}{\u}|\geq |\E|+1=|\widehat{\mathcal{R}}_\u|$. Let $\pi^\prime \coloneq
(\pi\setminus\picut{\pi}{\u})\cup \widehat{\mathcal{R}}_\u$. Obviously, $|\pi^\prime|\leq |\pi|$ and $\pi^\prime$ is an
$\mathcal{H}^{\ast}$-partition of $\D$ that fulfils the condition about well-ordering of gadgets. We show that
$\pidigraph{\D}{\pi^\prime}$ is acyclic. Suppose, contrary to our claim, that there exists a cycle $\mathfrak{a}$ of
$\pidigraph{\D}{\pi^\prime}$. It is easy to check that $\mathfrak{a}$ has to go into $\widehat{\mathcal{L}}_\u$ from the
``left hand side'' of $\gadget_\u$ and go out through the vertex $P=\vertexOF{\pipath{\pi^\prime}{\langle \r,\u\rangle}}$
-- on the ``right hand side''. In such a case we show, similarly to the third case that $|\picut{\pi}{\u}|>|\E|+1$ and then
we conclude that $\pi^{\prime\prime}=(\pi^\prime\setminus\{P\})\cup\{P\setminus\{\langle
\r,\u\rangle\}, \{\langle \r,\u\rangle\}\}$ is an
\mbox{$\mathcal{H}^{\ast}$-partition} of $\D$ that has at
most $|\pi|$ elements, preserve the orientation of gadgets, and $\{\langle \r,\u\rangle\}$ is a source in
$\pidigraph{\D}{\pi^{\prime\prime}}$. Since we can modify $\pi$ in such a way that every cycle cannot go out from
$\widehat{\mathcal{R}}_\u$, the proof is completed.

Let us take a cycle of $\pidigraph{\D}{\pi^\prime}$ that goes out from $\widehat{\mathcal{R}}_\u$ through $P$, and suppose,
contrary to our claim, that $|\picut{\pi}{\u}|\leq|\E|+1$. From Lemma~\ref{R2_S1_T1} we conclude that $|\E|+1\leq
|\picut{\pi}{\gadget_\u}|\leq |\picut{\pi}{\u}|\leq |\E|+1$ and in addition $\picut{\pi}{\gadget_\u}=\mathcal{L}_\u$ or
$\picut{\pi}{\gadget_\u}=\mathcal{R}_\u$. Note that since $|\picut{\pi}{\gadget_\u}|=|\picut{\pi}{\u}|$ in the case
$\picut{\pi}{\gadget_\u}=\mathcal{R}_\u$ we get equalities
 $\picut{\pi}{\u}=\widehat{\mathcal{R}}_\u$,
$\pi^\prime=\pi$, but this contradicts our assumption that $\pidigraph{\D}{\pi}$ is acyclic. In consequence,
$\picut{\pi}{\gadget_\u}=\mathcal{L}_\u$. Let us focus on the vertex $\langle\r,\u\rangle\in\VV_{|\pi,\u}$. This vertex is
adjacent only with $\u_{0,\ee{\langle\r,\u\rangle}}$ from $\VV_{|\pi,\u}$, which also belongs to $\vertexOF{\LL^0_\u}$.
Hence there exists $Q\in \picut{\pi}{\u}$  that contains
 $\{\langle\r,\u\rangle\}\cup \vertexOF{\LL^0_\u}$, since
 $|\picut{\pi}{\u}|=|\mathcal{L}_\u|$.
However $\{\langle \r, \u\rangle\}\cup \vertexOF{\LL^0_\u}$ contains two sources in $\D$, namely $\langle \r,\u\rangle$ and
$\u_{0,0}$. This is in contradiction with the fact that all the vertices of $Q$ are in a single Hamiltonian path.

\end{enumerate}
\end{proof}

\begin{lem}\label{construction}
Let $\pi$ be an $\mathcal{H}^{(\ast)}$--partition of $\DD{\G}{\e}$ for which all gadgets are well-oriented. Then
$\{\langle v,u \rangle: \{\langle v,u\rangle\} \in \pi\}$ is a feedback arc set which has size at most
$|\pi|-|\V|\cdot(|\E|+1)$.
\end{lem}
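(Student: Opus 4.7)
The plan is to prove this in two stages: a counting step for $|\F|$ and a contradiction argument for the feedback arc set property.

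For the counting step, observe that every directed path in $\DD{\G}{\e}$ contains at most one arc vertex, since arc vertices are sources with outgoing arcs only to gadget vertices, and once a path enters some $\gadgetV_w$ it cannot leave. Hence every block of $\pi$ either is a singleton arc vertex or meets exactly one $\gadgetV_v$, possibly with one additional arc vertex attached. By well-orientation and Lemma~\ref{R2_S1_T1}, $\picut{\pi}{\gadget_v}\in\{\mathcal{L}_v,\mathcal{R}_v\}$, so each $\gadget_v$ contributes exactly $|\E|+1$ distinct blocks to $\pi$. Therefore $|\pi|=|\V|\cdot(|\E|+1)+|\F|$, which yields the desired bound.

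Next, suppose for contradiction that $\F$ is not a feedback arc set, and take a simple directed cycle $C=v_1\to v_2\to\cdots\to v_k\to v_1$ with arcs in $\E\setminus\F$. For any $\langle v,u\rangle\notin\F$ the arc vertex lies in a non-singleton block, and its Hamiltonian path begins with one of the two outgoing arcs. The outgoing arc into $\gadget_v$ can be used only when its target starts a block of $\picut{\pi}{\gadget_v}$, which by the form of $\mathcal{L}_v,\mathcal{R}_v$ requires $\gadget_v$ to be $\LL$-oriented; the outgoing arc into $\gadget_u$ analogously requires $\gadget_u$ to be $\RR$-oriented. Consequently no arc of $\G\setminus\F$ runs from an $\RR$-oriented vertex to an $\LL$-oriented one, so traversing $C$ cannot switch the orientation from $\RR$ to $\LL$. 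Since $C$ is closed, all $v_i$ share a common orientation.

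The last step, which I expect to be the main technical hurdle, is to build from $C$ a directed cycle in $\pidigraph{\DD{\G}{\e}}{\pi}$, contradicting its acyclicity. In the all-$\LL$ subcase every $\langle v_i,v_{i+1}\rangle$ must attach on the tail side, so $B_i:=\{\langle v_i,v_{i+1}\rangle\}\cup\vertexOF{\LL_{v_i}^{\ee{\langle v_i,v_{i+1}\rangle}}}$ is a block of $\pi$ with a quotient arc into the block $\vertexOF{\LL_{v_{i+1}}^{0}}$ of $\gadget_{v_{i+1}}$, which is the block containing the target of the other outgoing arc of $\langle v_i,v_{i+1}\rangle$. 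Inside each gadget the $\RR$-arcs link the $\LL$-blocks linearly, giving a chain $\vertexOF{\LL_{v_{i+1}}^{0}}\to\vertexOF{\LL_{v_{i+1}}^{1}}\to\cdots\to B_{i+1}$. Concatenating these pieces around $C$ produces a directed cycle of $\pidigraph{\DD{\G}{\e}}{\pi}$; the all-$\RR$ subcase is dual, yielding a cycle that traverses the gadgets of $C$ in reverse order. Either way we contradict the acyclicity of $\pi$, completing the proof.
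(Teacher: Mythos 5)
Your proposal is correct and follows essentially the same route as the paper's proof: the block count per well-oriented gadget gives $|\pi|=|\V|\cdot(|\E|+1)+|\F|$, the observation that a non-singleton arc vertex forces $\LL$-orientation of its tail gadget or $\RR$-orientation of its head gadget yields a uniform orientation along any $\E\setminus\F$-cycle, and the resulting chain of blocks produces a cycle in $\pidigraph{\DD{\G}{\e}}{\pi}$, contradicting acyclicity. The only cosmetic difference is that you phrase the uniformity step via gadget orientations while the paper phrases it via the attachment type ($\LL^{\langle c_i,c_{i+1}\rangle}$ versus $^{\langle c_i,c_{i+1}\rangle}\RR$) of consecutive arc vertices; these are equivalent.
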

\begin{proof}
Let $\pi$ fulfil the assumptions and introduce $\F\coloneq\{\langle v,u \rangle: \{\langle v,u\rangle\} \in \pi\}$,
$\mathcal{O}\coloneq\{\{\langle v,u\rangle\} \in \pi\}$. Suppose, contrary to our claim, that $\F$ is not a feedback arc set of
$\G$. Then there exists $\E\setminus\F$--path $\mathfrak{c}\coloneq\langle c_1,c_2,\ldots,c_k\rangle$ of $\G$ such that $k>2$
and $c_1=c_k$. Additionally, we let $c_{k+1}\coloneq c_2$ and we use the following notation:
\begin{equation}
\begin{array}{lcr}
\LL^{\langle v,u\rangle} & \coloneq &
\langle\langle v,u\rangle\rangle\!\mathbin{{}^\frown}\!\LL_v^{\ee{\langle v,u\rangle}},\\
^{\langle v,u\rangle}\RR & \coloneq &
\langle\langle v,u\rangle\rangle\!\mathbin{{}^\frown}\!\RR_u^{\ee{\langle v,u\rangle}},\\
\end{array}
\end{equation}
where $\langle v,u\rangle\in \E$. It is easy to check that for every $i$ which fulfils $1\leq i<k$, exactly one of two
conditions $\pipath{\pi}{\langle c_i,c_{i+1}\rangle}=
\LL^{\langle c_i,c_{i+1}\rangle}$,
$\pipath{\pi}{\langle c_i,c_{i+1}\rangle}=^{\langle c_i,c_{i+1}\rangle}\RR$ is satisfied, since all gadgets are well-oriented.
Moreover from the fact that no gadget has both orientations in $\pi$, we conclude that $\pipath{\pi}{\langle
c_i,c_{i+1}\rangle}= ^{\langle c_i,c_{i+1}\rangle}\RR$ if and only if $\pipath{\pi}{\langle c_{i+1},c_{i+2}\rangle}\neq
\LL^{\langle c_{i+1},c_{i+2}\rangle}$, for $i=1,2,\ldots,k-1$.
In consequence, every gadget $\gadget_{c_i}$ is $\LL$-oriented or every gadget $\gadget_{c_i}$ is $\RR$-oriented, therefore one
of the following sequences
\begin{multline}
\langle\LL_{c_1}^{0},\pipath{\pi}{(c_1)_{1,0}},
\ldots,
\pipath{\pi}{(c_1)_{\ee{\langle c_1,c_2\rangle}-1,0}},
\LL^{\langle c_1,c_2\rangle},\\
\LL_{c_2}^{0},\pipath{\pi}{(c_2)_{1,0}},
\ldots,
\pipath{\pi}{(c_1)_{\ee{\langle c_2,c_3\rangle}-1,0}},
\LL^{\langle c_2,c_3\rangle},\\
\ldots,
\LL_{c_{k-1}}^{0},\pipath{\pi}{(c_{k-1})_{1,0}},
\ldots,
\pipath{\pi}{(c_{k-1})_{\ee{\langle c_{k-1},c_k\rangle}-1,0}},
\LL^{\langle c_{k-1},c_k\rangle},\,\LL_{c_1}^{0}\rangle,
\end{multline}
\begin{multline}
\langle\RR_{c_k}^{0},\pipath{\pi}{(c_k)_{0,1}},
\ldots,
\pipath{\pi}{(c_k)_{0,\ee{\langle c_{k-1},c_k\rangle}-1}},
^{\langle c_{k-1},c_k\rangle}\RR,\\
\RR_{c_{k-1}}^{0},\pipath{\pi}{(c_{k-1})_{0,1}},
\ldots,
\pipath{\pi}{(c_{k-1})_{0,\ee{\langle c_{k-2},c_{k-1}\rangle}-1}},
^{\langle c_{k-2},c_{k-1}\rangle}\RR,\\
\ldots,
\RR_{c_2}^{0},\pipath{\pi}{(c_2)_{0,1}},
\ldots,
\pipath{\pi}{(c_2)_{0,\ee{\langle c_{1},c_2\rangle}-1}},
^{\langle c_{1},c_2\rangle}\RR,\,\RR_{c_k}^{0}\rangle
\end{multline}
represents a cycle of the \DAG{} $\pidigraph{\DD{\G}{\e}}{\pi}$, but $\pidigraph{\DD{\G}{\e}}{\pi}$ is acyclic since $\pi$
is\break an $\mathcal{H}^{(\ast)}$--partition of $\DD{\G}{\e}$, a contradiction. This finishes the proof, that $\F$ is a a
feedback arc set of $\G$.

\par It remains to show that
$\F\leq|\pi|-|\V|\cdot(|\E|+1)$, but it is obvious since {$\pi\setminus\mathcal{O}=\underset{v \in
\V}{\bigcup}\picut{\pi}{v}$;} $|\picut{\pi}{v}|=|\E|+1$, for every $v\in \V$; {and $\picut{\pi}{v_1}$, $\picut{\pi}{v_2}$ are
disjoint for every different $v_1,v_2\in \V$.}
\end{proof}

\begin{thm}\label{APHgeneral}
The \FAS{} problem is  reducible to the \AHP{} problem.
\end{thm}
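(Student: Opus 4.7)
The plan is to combine the machinery already set up in the previous lemmas into a single polynomial-time many-one reduction from \FAS{} to \AHP{}. Since we have restricted \FAS{} by Lemma~\ref{KP_PROP2} to digraphs $\G=\langle \V,\E\rangle$ without self-loops in which every vertex has in-degree or out-degree equal to $1$, I may use this subcase as the source of the reduction. Given such an instance $(\G,K)$ I would fix an arbitrary one-to-one function $\e:\E\longrightarrow\{1,2,\ldots,|\E|\}$ and output the \AHP{} instance $(\DD{\G}{\e},K^\prime)$ with $K^\prime \coloneq |\V|\cdot(|\E|+1)+K$, as defined in Definition~\ref{DAG}. The construction is clearly computable in polynomial time, since $\DD{\G}{\e}$ has $|\V|\cdot(|\E|+1)^2+|\E|$ vertices and a polynomially bounded number of arcs.

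Correctness in the forward direction is immediate from Lemma~\ref{R2_S1_prosta_implikacja}: if $\F\subseteq \E$ is a feedback arc set of $\G$ with $|\F|\leq K$, the partition $\pi(\F)$ provided by that lemma is an $\mathcal{H}^{(\ast)}$-partition of $\DD{\G}{\e}$ of size $|\V|\cdot(|\E|+1)+|\F|\leq K^\prime$, witnessing ``yes'' for \AHP.

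For the backward direction, suppose $\pi$ is an $\mathcal{H}^{(\ast)}$-partition of $\DD{\G}{\e}$ with $|\pi|\leq K^\prime$. My first step is to make every gadget well-oriented: I apply Lemma~\ref{modyf} iteratively, once for each vertex $u\in\V$ whose gadget $\gadget_u$ fails to be well-oriented. Because Lemma~\ref{modyf} produces a new $\mathcal{H}^{(\ast)}$-partition of $\DD{\G}{\e}$ which is no larger and which preserves the orientation of all already well-oriented gadgets, after finitely many iterations ($|\V|$ suffice) I obtain an $\mathcal{H}^{(\ast)}$-partition $\pi^\star$ with $|\pi^\star|\leq |\pi|\leq K^\prime$ in which every gadget is well-oriented. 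Then Lemma~\ref{construction} applied to $\pi^\star$ yields the feedback arc set $\F^\star \coloneq \{\langle v,u\rangle : \{\langle v,u\rangle\}\in \pi^\star\}$ of $\G$ with
\begin{equation}
|\F^\star|\;\leq\;|\pi^\star|-|\V|\cdot(|\E|+1)\;\leq\;K^\prime-|\V|\cdot(|\E|+1)\;=\;K,
\end{equation}
as required.

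No genuinely new obstacle remains at this level: the two difficult halves (constructing a partition from a feedback arc set, and extracting a feedback arc set from a well-oriented partition) are already Lemmas~\ref{R2_S1_prosta_implikacja} and~\ref{construction}, while the delicate step of turning an arbitrary $\mathcal{H}^{(\ast)}$-partition into one whose gadgets are all well-oriented is handled by Lemma~\ref{modyf}. The only point that needs a brief explicit comment is that the iteration in Lemma~\ref{modyf} can indeed be done independently vertex-by-vertex, which is precisely the preservation clause stated in that lemma (``preserves the orientation of $\gadget_w$ for $w\neq u$''). Thus each successive application only adds a newly well-oriented gadget without undoing earlier progress, so the process terminates after at most $|\V|$ rounds. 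This completes the reduction.
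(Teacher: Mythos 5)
Your proposal is correct and follows essentially the same route as the paper: the same map $(\G,K)\mapsto(\DD{\G}{\e},\,|\V|\cdot(|\E|+1)+K)$, with the forward direction by Lemma~\ref{R2_S1_prosta_implikacja} and the backward direction by composing Lemma~\ref{modyf} with Lemma~\ref{construction}. Your only additions are making explicit the vertex-by-vertex iteration of Lemma~\ref{modyf} (which the paper compresses into the phrase ``composed with'') and a polynomial-time bound where the paper claims logspace; both are harmless elaborations.
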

\begin{proof}
For each instance $\mathcal{G}=\langle \V,\E\rangle$, $k$ of \FAS{} we can construct $\DD{\G}{\e}$, a number
$k+|\V|\cdot(|\E|+1)$ and consider them to be an instance of \AHP{}. This translation can clearly be done in logspace. By
Lemma~\ref{R2_S1_prosta_implikacja} if $\mathcal{G}$ has a feedback arc set of size not greater than $k$, then $\DD{\G}{\e}$
has a $\mathcal{H}^{(\ast)}$ partition of size not greater than $k+|\V|\cdot(|\E|+1)$. By Lemma~\ref{modyf} composed with
Lemma~\ref{construction} we obtain that when $\DD{\G}{\e}$ has a $\mathcal{H}^{(\ast)}$ partition of size not greater than
$k+|\V|\cdot(|\E|+1)$ then $\mathcal{G}$ has a feedback arc set of size not greater than $k$. Therefore translation
$\mathcal{G}$, $k$ to $\DD{\G}{\e}$, $k+|\V|\cdot(|\E|+1)$ is indeed a reduction.
\end{proof}

{According to the above theorem, \AHP{} is NP-complete. Additionally, vertices of the constructed digraph $\DD{\G}{\e}$ have
restricted in-degree and out-degree by 2. Hence we finally have the following theorem. }

\begin{thm}\label{APHlimit2}
The \AHP{} problem is NP-complete for digraphs in which no vertex has in-degree or out-degree greater than $2$.
\end{thm}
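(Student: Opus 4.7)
The plan is to appeal directly to Theorem \ref{APHgeneral}: the reduction from \FAS{} (restricted as in Lemma \ref{KP_PROP2}) to \AHP{} given there already produces a specific digraph $\DD{\G}{\e}$, and to prove the present theorem it suffices to verify that $\DD{\G}{\e}$ itself always has maximum in-degree and out-degree at most $2$. Membership of \AHP{} in NP was already noted at the beginning of Section \ref{NPproof}, so the whole argument reduces to a degree inspection of Definition \ref{DAG}.

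To carry out that inspection I would split $\VV$ into its two natural classes. For a glue vertex $\langle v, u\rangle \in \E$ the defining equation \eqref{KP_eq_14} immediately gives in-degree $0$ and out-degree exactly $2$, its only outgoing arcs being $\langle\langle v,u\rangle, v_{\ee{\langle v,u\rangle},0}\rangle$ and $\langle\langle v,u\rangle, u_{0,\ee{\langle v,u\rangle}}\rangle$. For a gadget vertex $r_{i,j}$ the incoming arcs within $\gadget_r$ come from at most one predecessor in each of $\LL_r$ and $\RR_r$, namely $r_{i,j-1}$ (when $j\geq 1$) and $r_{i-1,j}$ (when $i\geq 1$), yielding in-gadget in-degree at most $2$; symmetrically for the out-degree. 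Since $\e$ is a bijection, each candidate target $r_{k,0}$ or $r_{0,k}$ for a glue arc receives at most one such arc, and each of those vertices has only one in-gadget predecessor, so the total in-degree remains bounded by $2$. No gadget vertex is the tail of a glue arc, so the out-degree bound is unaffected.

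The potentially delicate cases are the boundary vertices $r_{|\E|,j}$, $r_{i,|\E|}$, the corners $r_{0,0}$, $r_{|\E|,|\E|}$, and the vertices $r_{k,0}$, $r_{0,k}$ that can receive glue arcs, because the second components of $\LL_r$ and $\RR_r$ might in principle contribute extra incident arcs. A short case analysis shows that these second components only prolong the outermost strips and do not stack new arcs on top of the first component: for instance $r_{|\E|,j}$ with $0<j<|\E|$ receives exactly one arc from $r_{|\E|-1,j}$ (first component of $\RR_r$) and one from $r_{|\E|,j-1}$ (second component of $\LL_r$) and no glue arc, for a total of $2$; and $r_{0,0}$ receives no glue arc at all because $\e$ takes values in $\{1,\ldots,|\E|\}$.

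The main obstacle is essentially bookkeeping on these boundary and corner cases; there is no new conceptual content beyond Theorem \ref{APHgeneral}. Once the degree check is completed, combining it with Theorem \ref{APHgeneral} gives a polynomial-time reduction from the NP-hard restricted \FAS{} of Lemma \ref{KP_PROP2} to \AHP{} whose output lies inside the class of digraphs of in- and out-degree at most $2$, which is exactly what the theorem claims.
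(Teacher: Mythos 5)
Your proposal is correct and takes essentially the same route as the paper: Theorem~\ref{APHlimit2} is obtained there exactly by noting that the digraph $\DD{\G}{\e}$ produced by the reduction of Theorem~\ref{APHgeneral} already has in-degree and out-degree bounded by $2$. Your case-by-case degree inspection merely spells out the bookkeeping that the paper compresses into a one-line remark, and it checks out (in particular, the glue targets $r_{k,0}$ and $r_{0,k}$ with $k\geq 1$ have exactly one in-gadget predecessor, so the extra glue arc keeps the in-degree at $2$).
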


\section{Constructive abstract proof graphs}\label{Constructive_abstract_proof_graphs}

In Section \ref{Graph_representation_of_proofs} a special kind of \DAG{} was proposed to be a simplified model of natural
deduction proofs. The considered method describe{s} in a simple way a method of construction of the abstract proof graph for
every reasoning that does not contain nested subreasonings. This demonstrates that the model is sound. It is natural now to
give an argument that it is also complete, which means in this case that for every abstract proof graph there exists a
reasoning whose structure is that graph. Now we show that this question has a positive answer.

For our purposes we consider a constructive subfamily of \DAG{}s that includes graphs considered in Section~\ref{NPproof}. To
study the general case of this subfamily we have to assume that
 $E\neq \RRR{G}$ where $G=\langle V,E\rangle$
is an abstract proof graph
  with a distinguished set of arcs $\RRR{G}$ (see~p.~\pageref{RRRRR}) and
$\RRR{G}$ correspond to $E_1$ occurring in the instances of MIL problems. Since $\RRR{G}\subsetneq E$,
 the constructed reasoning has to have some arcs other than reference arcs.

These non-reference arcs are represented in the reasoning with the help of a special kind of steps in the Mizar system. This
special kind of steps in the general case has the shape:
\begin{equation}\label{step_private}
\begin{tabular}{c}
\miz{consider} {\em variable} \miz{be} {\em type} \miz{such} \miz{that}\\
$\strut\mkern20mu${\em label}\miz{:} {\em statement} \miz{by} {\em justification} \miz{;}\\
\end{tabular}
\end{equation}
These ``consider'' steps introduce new constants. Each step that uses such a constant is a head of a non-reference arc, the tail of
which is in the corresponding ``consider'' step. For instance
\begin{equation}
\begin{tabular}{c}
\miz{consider} x \miz{be} \miz{set} \miz{such} \miz{that}\\
$\strut\mkern20mu$Ai\miz{:} $\phi$\miz{(}\mbox{x}\miz{)} \miz{by} Aj \miz{;}\\
\end{tabular}\label{step_consider_example}
\end{equation}
is a tail of a non-reference arc with the head that contains the statement
\begin{equation}
x=x
\end{equation}
in a step of the reasoning located after the step \eqref{step_consider_example}. Additionally, the {\it verum} in the Mizar
language is stated as
\miz{not} \miz{contradiction}.

\newcommand{\Just}[1]{\mbox{Justification}(#1)}
\newcommand{\Var}[0]{\mathcal{V}ar}
\newcommand{\Stat}[1]{\mbox{Statement}(#1)}
\newcommand{\Step}[1]{\mbox{Step}(#1)}
\begin{thm}\label{constr}
Let $G=\langle V,E\rangle$ be a \DAG{} and $E_1$ be a subset of $E$. Suppose that there exists a vertex which either as a head
or as a tail belongs to arcs of $E_1$ only. Then there exists
 a correct Mizar reasoning that
 has the structure described by $G$ and the set of reference arcs equal to $E_1$.
\end{thm}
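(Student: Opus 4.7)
\medskip
\noindent\textbf{Proof plan.} The argument is to be constructive: given $G$ and $E_1$, I would explicitly build a Mizar script, line by line, whose abstract proof graph is $G$ and whose reference-arc subset is $E_1$.

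First I would fix a topological ordering $\tau$ of $G$ and, using the hypothesis, single out a vertex $v_0$ for which every arc incident to $v_0$ in one of the two directions lies in $E_1$. This $v_0$ is placed at an extremity of the script: at the end as a closing \miz{thus} line if all its incoming arcs are in $E_1$, or at the beginning as an opening \miz{let} or \miz{assume} if all its outgoing arcs are in $E_1$. Each remaining vertex $v$ then contributes one line, emitted in $\tau$-order; the justification of the line for $v$ lists exactly the labels of the tails of those $E_1$-arcs that enter $v$, thereby realising every reference arc of $E_1$ and nothing more.

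Next I would realise the non-reference arcs of $E\setminus E_1$. A vertex $w$ that is the tail of at least one arc of $E\setminus E_1$ is emitted in the \miz{consider}-form of~\eqref{step_consider_example}: it introduces a fresh constant $x_w$ and asserts the tautology \miz{not} \miz{contradiction}, justified by the $E_1$-predecessors of $w$. Then, for every $\langle w,u\rangle\in E\setminus E_1$, the statement produced for the line of $u$ is padded with the harmless conjunct $x_w = x_w$. By the convention fixed in Section~\ref{Graph_representation_of_proofs}, the mere occurrence of the constant $x_w$ inside the expression of $u$ creates precisely the non-reference arc $\langle w,u\rangle$ in the abstract proof graph, so this injects the required arc without producing spurious ones. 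All resulting statements reduce to tautologies of the form $t = t$ or \miz{not} \miz{contradiction}, hence every line is accepted by the Mizar checker.

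The main obstacle I anticipate is careful bookkeeping rather than a deep logical problem: one must confirm that the abstract proof graph of the constructed script has \emph{exactly} the arcs of $G$ and no accidental ones, and that the special vertex $v_0$ can legally occupy the head or the tail of the script consistently with $\tau$. The role of the hypothesis is precisely to make such an extremal placement possible, since a Mizar proof cannot terminate with a \miz{consider} line nor begin with a line that uses a constant not yet introduced.
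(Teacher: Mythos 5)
Your realisation of the body of the script is essentially the paper's own construction: reference arcs become label citations in \miz{by}-justifications, and each tail $w$ of a non-reference arc becomes a \miz{consider} step introducing a fresh constant whose occurrence in a padded conjunct of the form $x_w\,\miz{=}\,x_w$ in the head's statement creates the arc of $E\setminus E_1$. The gap is in your treatment of the distinguished vertex and of the thesis the script proves. The hypothesis, as the paper instantiates it, is that \emph{all} arcs incident to the chosen vertex $\mathfrak{t}$ --- incoming and outgoing alike --- lie in $E_1$; you weaken it to a disjunction over the two roles and then try to compensate by moving $v_0$ to an extremity of the linearisation. That repair fails: a topological sorting cannot place $v_0$ last unless it is a sink (it may well have out-neighbours, even $E_1$-ones whose steps must follow it), nor first unless it is a source, and neither is guaranteed by the hypothesis. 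Moreover a \miz{let} step carries no label and no citable statement, so it cannot be the tail of a reference arc, which rules out your second branch even when $v_0$ happens to be a source.

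You also never fix the overall thesis of the constructed reasoning. The paper's script proves \miz{not} \miz{contradiction}, and some step must be marked \miz{thus} to discharge it; that step must not be a \miz{consider} step (hence the vertex has no outgoing non-reference arcs, since \miz{thus} cannot be combined with \miz{consider}) and its statement must be exactly \miz{not} \miz{contradiction} (hence no incoming non-reference arcs, since otherwise the statement is a conjunction of equalities $x_u\,\miz{=}\,x_u$ that does not match the thesis). This is precisely what the full hypothesis buys, and it lets the paper leave $\mathfrak{t}$ wherever $\tau$ puts it, merely prefixing its line with \miz{thus}; later steps may still cite that line by its label, so no extremal placement is needed --- nor, in general, possible. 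With the hypothesis read this way and the \miz{thus} step kept in its $\tau$-position, your construction coincides with the paper's.
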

\begin{proof}
Let $G$, $E_1$ fulfil assumptions and let $\mathfrak{t}\in V$ satisfy $\{\langle v,u\rangle \in E: v=\mathfrak{t}\vee
u=\mathfrak{t}\}\subseteq E_1$. Let us fix a topological sorting $\tau\in \TS(G)$. We introduce the following notation:
$\Var\coloneq\{v\in V:\underset{u\in V}\exists\langle v,u\rangle \in E\setminus E_1\}$ and
\begin{equation}
\begin{array}{@{}c}
\Just{v}
\coloneq\left\{
\begin{array}{@{}l@{\quad}p{160pt}@{}}
\miz{by}\:\mbox{A}\tau(u_1)\!\miz{,}\,\mbox{A}
\tau(u_2)\!\miz{,}\ldots\miz{,}\mbox{A}\tau(u_k) \miz{;}
& if $\:N^{-}_{\langle V,E_1\rangle}(v) =\{u_1,u_2,\ldots,u_k\}$ $\wedge$ $\strut\mkern22mu k>0$,\\
\miz{;}
& if $\: N^{-}_{\langle V,E_1\rangle}(v)=\emptyset$,
\end{array}
\right.
\\
\begin{array}{@{}c}
\Stat{v}\coloneq\left\{
\begin{array}{@{}p{150pt}@{\quad}p{160pt}@{}}

$\mbox{x}\tau(u_1)\miz{=}\mbox{x}\tau(u_1)\:
\miz{\&}\:\mbox{x}\tau(u_2)\miz{=}\mbox{x}\tau(u_2)$
$\strut\mkern85mu\miz{\&}\ldots
\miz{\&}\:\mbox{x}\tau(u_k)\miz{=}\mbox{x}\tau(u_k)$
& if $\:N^{-}_{\langle V,E_1\rangle}(v) =\{u_1,u_2,\ldots,u_k\}$ $\wedge$ $\strut\mkern22mu k>0$,\\
 \miz{not}\: \miz{contradiction} & \mbox{if}\;
 $N^{-}_{\langle V,E\setminus E_1\rangle}(v)=\emptyset,$
\end{array}
\right.
\end{array}
\end{array}
\end{equation}
where $v\in V$. The reasoning step corresponding to a vertex $v$, denoted by $\Step{v}$, is defined as follows:
\begin{equation}
\Step{v}\coloneq\left\{\begin{array}{@{}l@{\quad}p{60pt}@{}}
\miz{consider}\:\mbox{x}\tau(v)\:\miz{be}\:\miz{set}\:\miz{such}\: \miz{that}
&if $v\in\Var$,\\
\strut\mkern20mu\mbox{A}\tau(v)\,\miz{:}
\mbox{x}\tau(v)\,\miz{=}\,\tau(v)\,\miz{\&}\,\Stat{v}\:\Just{v}&\\
\mbox{A}\tau(v)\,\miz{:}\,\Stat{v}\:\Just{v}&
if $v\not\in\Var$.
\end{array}\right.
\end{equation}

\nopar If we prove that the reasoning
{\small
\begin{equation}
\begin{array}{l}
\miz{not}\:\miz{contradiction}\\
\strut\mkern20mu\miz{proof}\\
\strut\mkern40mu\Step{\tau^{-1}(1)}\\
\strut\mkern40mu\vdots\\
\strut\mkern40mu\Step{\tau^{-1}(\tau(\mathfrak{t}-1)}\\
\strut\mkern40mu\miz{thus}\:\Step{\mathfrak{t}}\\
\strut\mkern40mu\Step{\tau^{-1}(\tau(\mathfrak{t}+1)}\\
\strut\mkern40mu\vdots\\
\strut\mkern40mu\Step{\tau^{-1}(|V|)}\\
\strut\mkern20mu\miz{end;}\\
\end{array}
\end{equation}
}
is correct and has expected properties, the proof will be completed. For illustration, an example of a reasoning that follows
this pattern is presented in Fig.~\ref{Other}. Naturally, justifications in none of the steps are necessary for the Mizar
system (an empty ``semicolon'' justification suffices to have it accepted by the checker), but a proof structure based on these
justifications and statements has properties that are expected from the family of reference arcs and non-reference ones.
Additionally, $\{u\!\in \!V\!:\!\langle u,v\rangle\! \in\! E\setminus E_1\}\subseteq\Var$ for any $v\in V$. Moreover, none of
variable identifiers are used before their introduction in reasoning. Analogously none of label identifiers are used before
their introduction in reasoning. It is also easy
to check that if $\Stat{\mathfrak{t}}=\miz{not}\:\miz{contradiction}$ then statement of the reasoning step
$\miz{thus}\:\Step{\mathfrak{t}}$ is equal to the final goal, hence finally the proof is completed.
\end{proof}
\begin{figure}[h!]
\begin{center}
$\strut\mkern-40mu$%
\begin{tikzpicture}
[>=stealth,->,shorten >=2pt,looseness=.5,scale=1.2];
\tikzstyle{every node}=[myrectanglenodes];
      \path (0:0cm) node (f) {f};
\path(60:1cm) node (e) {e};
\path (120:1cm) node (d) {d};
\path (60:2cm) node (c) {c};
\path (120:2cm) node (a) {a};
\path (90:1.73cm) node (b) {b};
\path (270:0.866cm) node (g) {g};
\draw[Referencja](a)--(d);
\draw[Referencja](c)--(e);
\draw[Referencja,bend left, looseness=0.8](c)to(f);
\draw[Referencja,bend left, looseness=0.8](c)to(g);
\draw[Referencja](e)--(f);
\draw[Referencja](d)--(f);
\draw[Referencja](f)--(g);
\draw[Referencja,bend right, looseness=0.8](d)to(g);
\draw[Variable,bend right, looseness=0.8](a)to(f);
\draw[Variable](e)--(d);
\draw[Variable](b)--(d);
\draw[Variable](b)--(e);
 \draw[xshift=2.4cm,yshift=0.4cm] node [right,text width=6cm,draw=none]
    {\small
    $\begin{array}{l}
\miz{not}\:\miz{contradiction}\\
\strut\mkern20mu\miz{proof}\\
\strut\mkern40mu\miz{consider}\:\mbox{x1}\:\miz{be}\:\miz{set}\:\miz{such}\: \miz{that}\\
\strut\mkern60mu\mbox{A1}\miz{:}\,
\mbox{x1}\,\miz{=}\,1\,\miz{\&}\,\miz{not}\:\miz{contradiction}\miz{;}\\
\strut\mkern40mu\miz{consider}\:\mbox{x2}\:\miz{be}\:\miz{set}\:\miz{such}\: \miz{that}\\
\strut\mkern60mu\mbox{A2}\miz{:}\,\mbox{x2}\,\miz{=}\,2\,\miz{\&}\,\miz{not}\:\miz{contradiction}\miz{;}\\
\strut\mkern40mu\mbox{A3}\miz{:}\,\miz{not}\:\miz{contradiction}\miz{;}\\
\strut\mkern40mu\mbox{A4}\miz{:}\,\mbox{x2}\,\miz{=}\,\mbox{x2}\:\miz{by}\:\mbox{A1}\miz{;}\\
\strut\mkern40mu\miz{consider}\:\mbox{x5}\:\miz{be}\:\miz{set}\:\miz{such}\: \miz{that}\\
\strut\mkern60mu\mbox{A5}\miz{:}\,
\mbox{x5}\,\miz{=}\,5\,\miz{\&}\,\mbox{x2}\,\miz{=}\,\mbox{x2}\:\miz{by}\:\mbox{A3}\miz{;}\\
\strut\mkern40mu\mbox{A6}\miz{:}\,\mbox{x1}\,\miz{=}\,\mbox{x1}\:\miz{by}\:\mbox{A3}\miz{,}\,\mbox{A4}\miz{,}\,\mbox{A5}\miz{;}\\
\strut\mkern40mu\miz{thus}\:\mbox{A7}\miz{:}\:\miz{not}\:\miz{contradiction}\:\miz{by}\:\mbox{A3}\miz{,}\,\mbox{A4}\miz{,}\,\mbox{A6}\miz{;}\\
\strut\mkern20mu\miz{end;}\\
\end{array}$};
\end{tikzpicture}
\end{center}
\caption{The Mizar proof script that illustrates the construction from the proof of Theorem~\ref{constr}, the solid arrows represent elements of $E_1$, dashed arrows represent elements of $E\setminus E_1$, $\mathfrak{t}=g$,
$\Var=\{a,b,e\}$ and $\tau(a)=1,\, \tau(b)=2,\, \ldots\,,\,\tau(g)=7$.}
\end{figure}
\pagebreak

To be precise, the Mizar system will not accept many of the reasonings given by the construction in the proof. The reason is that
the real system has a restriction on the number of references that can be used in the justification of a single step. In the
{official Mizar distribution} this limit is 25, and this corresponds to the limit for the in-degree of every vertex in abstract
proof graphs. This restriction is only used to bound the time of correctness verification of a single step and so far there was
no need to increase this limit.

It is easy to see that the construction above can also serve to prove a parametrised version of Theorem~\ref{constr} that takes
into account this parameter.

\begin{thm}\label{constr_1}
Let $G=\langle V,E\rangle$ be a \DAG{} with in-degree bounded by $k$ and $E_1$ be a subset of $E$. Suppose that there exists a
vertex which either as a head or as a tail belongs to arcs of $E_1$ only. Then there exists
 a correct Mizar reasoning with the number of references that
can be used in the justification bounded by $k$ and that
 has the structure described by $G$ as well as the set of reference arcs equal to $E_1$.
\end{thm}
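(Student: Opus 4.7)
The plan is to observe that the very construction used in the proof of Theorem~\ref{constr} already produces a reasoning whose per-step reference count is controlled by the in-degree of the corresponding vertex in $G$, so no modification of the construction is needed\,---\,only a bookkeeping argument.

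First, I would recall the key feature of the construction: for each vertex $v$, the justification clause $\Just{v}$ lists, after the keyword \miz{by}, exactly the labels $\mbox{A}\tau(u_1),\mbox{A}\tau(u_2),\ldots,\mbox{A}\tau(u_m)$ corresponding to the in-neighbours of $v$ in the digraph $\langle V,E_1\rangle$, i.e.\ the set $N^{-}_{\langle V,E_1\rangle}(v)$. The remaining in-neighbours of $v$ in $G$, namely those in $N^{-}_{\langle V,E\setminus E_1\rangle}(v)$, are not referenced in $\Just{v}$ at all; they are witnessed instead by the trivial conjuncts $\mbox{x}\tau(u)\miz{=}\mbox{x}\tau(u)$ appearing inside $\Stat{v}$, which force the presence of the variable $\mbox{x}\tau(u)$ in the step but add nothing to the \miz{by}-list.

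Next, I would carry out the simple estimate. Since $E_1\subseteq E$, we have $N^{-}_{\langle V,E_1\rangle}(v)\subseteq N^{-}_G(v)$, and therefore the number of references used in the justification of $\Step{v}$ equals
\begin{equation}
|N^{-}_{\langle V,E_1\rangle}(v)|\;\leq\;|N^{-}_G(v)|\;\leq\;k
\end{equation}
by the in-degree hypothesis on $G$. This bound holds uniformly for every $v\in V$, including the distinguished step $\mathfrak{t}$ that carries the \miz{thus}, whose \miz{by}-list is identical to $\Just{\mathfrak{t}}$.

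Finally, I would invoke Theorem~\ref{constr} to conclude that the resulting reasoning is correct, has the shape prescribed by $G$, and has reference arcs exactly $E_1$. The only conceptual point to verify\,---\,and the only place where a reader might pause\,---\,is the claim that variable occurrences inside $\Stat{v}$ do not count as references. This is immediate from the Mizar convention that references are precisely the labels listed after \miz{by}, so the verification is not a genuine obstacle. No step of the argument requires anything beyond what is already assembled in the proof of Theorem~\ref{constr}.
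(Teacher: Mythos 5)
Your proposal is correct and matches the paper's intent exactly: the paper proves Theorem~\ref{constr_1} only by remarking that the construction from Theorem~\ref{constr} already yields the parametrised version, and your bookkeeping argument \(|N^{-}_{\langle V,E_1\rangle}(v)|\leq|N^{-}_{G}(v)|\leq k\), together with the observation that the variable conjuncts in \(\Stat{v}\) contribute nothing to the \miz{by}-list, is precisely the ``easy to see'' step the paper leaves implicit.
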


Note that both legibility improvement methods, 1st MIL and 2nd MIL, are NP-complete even with the limit 2 for the in-degree
(see Th.~\ref{APHlimit2}).

\section{Other methods of improving proof readability}\label{Other}

\par Various methods of improving readability of natural deduction proofs which use ``then'' $\tau$-steps
have been considered in~\cite{KPslgr}. Now we sketch the complexity of optimization problems corresponding to the other proof
readability criteria. It is not the main purpose of this paper, but it additionally illustrates the {difficulty} of improving
the readability {through} the modification of  the order of independent steps written in the proof script. These methods focus
on the location of information used to justify a step and ones that are most similar to the methods considered in the current
paper are:
\begin{enumerate}[label=(\roman*)]
\item minimization of
the largest distance between a premise and its use measured  as the number of steps between these two steps in a linear
representation of a proof script,
\item or minimization of the sum of all distances between every premise and its use.
\end{enumerate}
These methods of improving legibility of proofs can be formulated as the following two problems:\\

\nopar{\bf 3rd Method of Improving Legibility} ({\bf$3$rd MIL}):\\
$\strut\mkern25mu$\textsc{Instance}: A \DAG\: $G = \langle V,E\rangle$, a subset $E_1$ of $E$, and a positive integer $K\leq
|V|$.\\ $\strut\mkern25mu$\textsc{Question}: Does there exist a topological sorting $\tau$ of $G$ for which $$\underset{\langle
v,u\rangle\in E_1}{\forall}\tau(u)-\tau(v) \leq K\mbox{?}$$\\

\nopar{\bf 4th Method of Improving Legibility} ({\bf$4$th MIL}):\\
$\strut\mkern25mu$\textsc{Instance}: A \DAG\: $G = \langle V,E\rangle$, a subset $E_1$ of $E$,  a positive integer $K\leq
\left(\begin{array}{c}\!|V|+1\!\\3\end{array}\!
    \right)$.\\
$\strut\mkern25mu$\textsc{Question}: Does there exist a topological sorting $\tau$ of $G$ for which $$\underset{\langle
v,u\rangle\in E_1}{\sum}\tau(u)-\tau(v) \leq K\mbox{?}$$\\

\noindent Obviously the expression $\tau(u)-\tau(v)$ describes the distance between the pair of vertices $v$, $u$ and the arc $\langle
v,u\rangle\in E_1$ describes the situation where the statement formulated in a reasoning step corresponding to the vertex $v$
is used as a premise in the justification of a step that corresponds to the vertex $u$.

Since every abstract proof graph $G=\langle V,E\rangle$ is constructible in the case $E=E_1$ (see Th.~\ref{constr}), it is easy
to see that problems \mbox{$3$rd} and \mbox{$4$th MIL} generalize known NP-complete problems as Directed Bandwidth and Directed
Optimal Linear Arrangement (see \mbox{GT41}, \mbox{GT43} in \cite{NPLIST}). Thus, we immediately conclude that these two
problems belong to the class of NP-complete problems, too.

Another method of improving  readability of the proofs considered in~\cite{KPslgr} concerns the number of steps that have to be
marked with respective labels. As it was mentioned in Section~\ref{Graph_representation_of_proofs}, labels are used to pass on
the information contained in a step to justify another one. Obviously, if some label is used only once in the following step
and can be replaced by the ``then'' construction then this label can be removed. Therefore, it is possible to minimize the
number of labels that are introduced in the proof script. Minimization of this parameter can be formulated as follows:\\

\nopar{\bf 5th Method of Improving Legibility} ({\bf$5$th MIL}):\\
$\strut\mkern25mu$\textsc{Instance}: A \DAG\: $G = \langle V,E\rangle$, a subset $E_1$ of $E$, and a positive integer $K\leq
|V|$.\\ $\strut\mkern25mu$\textsc{Question}: Does there exist a topological sorting $\tau$ of $G$ for which $$\{v\in
V:\underset{u\in V}\exists \langle v,u\rangle \in E_1\wedge\tau(u)-\tau(v) >1 \}$$ has size at most $K$?\\

We show in Theorem~\ref{MIL_5} below that the problem \mbox{$5$th MIL} is NP-complete. However, the subfamily of abstract proof
graphs considered in the proof of the theorem cannot be realized by proof scripts in
 MML due to an additional syntax restriction of Mizar.
In the Mizar system a step of reasoning which introduces variables satisfying a statement which is used further in the script
(see \eqref{step_consider_example}, the variable \mbox{x} satisfies $\phi$\miz{(}\mbox{x}\miz{)}) must be decorated with a
label ($Ai$). This condition can be expressed in terms of graphs as follows: every vertex that is both a tail of some reference
arc and a tail of some non-reference arc corresponds to a proof step with a label. As a result the minimalization must take the
following form.\\

\nopar{\bf 5th Method of Improving Legibility limited to the Mizar system} ({\bf$5$th MIL$_{{MIZ}}$}):\\
$\strut\mkern25mu$\textsc{Instance}: A \DAG\: $G = \langle V,E\rangle$, a subset $E_1$ of $E$, and a positive integer $K\leq
|V|$.\\ $\strut\mkern25mu$\textsc{Question}: Does there exist a topological sorting $\tau$ of $G$ for which $$\{v\in
V:\underset{u\in V}\exists \langle v,u\rangle \in E_1\wedge(\tau(u)-\tau(v) >1 \vee \underset{w\in V}\exists \langle
v,w\rangle\in  E\setminus E_1)\}$$ has size at most $K$?\\

It turns out that this limitation has a significant impact on the complexity of the task to minimise the number of labels. We
show in Theorem~\ref{MIL_5_MIZ} that the problem $5$th MIL$_{{MIZ}}$ is solvable in polynomial time. In consequence we get that
it is possible to minimize the number of labels in the Mizar proof scripts effectively.

\begin{thm}\label{MIL_5}
$5$th \mbox{MIL} in NP-complete.
\end{thm}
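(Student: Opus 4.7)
Membership in NP is immediate: a topological sorting $\tau$ of $G$ is a polynomial-size certificate, and given $\tau$ we can enumerate the arcs of $E_1$ and mark their tails whose head satisfies $\tau(u)-\tau(v)>1$, all in time $O(|V|+|E_1|)$.

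For NP-hardness, the plan is to reduce from Feedback Arc Set restricted to digraphs in which every vertex has in-degree or out-degree equal to $1$ (Lemma~\ref{KP_PROP2}), mirroring the gadget strategy of Section~\ref{NPproof} but with the arc labels split between reference and non-reference roles. Given $\G=\langle\V,\E\rangle$ in that restricted class and a one-to-one function $\e$, I would construct a \DAG{} $G^*=\langle V^*,E^*\rangle$ together with a distinguished $E_1^*\subsetneq E^*$ as follows. For every $v\in\V$, include a gadget $\gadget^*_v$ whose internal arcs are exactly those of $\gadget_v$; put \emph{all} gadget arcs into $E_1^*$ so that Lemma~\ref{R2_S1_T1} still forces a gadget to be well-oriented. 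For every arc $\langle v,u\rangle\in\E$, introduce a glue vertex $g_{v,u}$ joined to $v_{\ee{\langle v,u\rangle},0}$ and $u_{0,\ee{\langle v,u\rangle}}$ exactly as in Definition~\ref{DAG}, but now the two outgoing arcs of $g_{v,u}$ go into $E^*\setminus E_1^*$ and a single fresh reference arc from $g_{v,u}$ goes into a newly added ``absorption row'' of $\gadget^*_v$. This routing is the key design point: it guarantees that \emph{every} vertex of $G^*$ has out-$E_1^*$-degree at most one, so a vertex is labeled precisely when its unique reference successor (if any) does not sit in position $\tau(v)+1$. With this setup I would aim to establish: from a feedback arc set $\F$ of $\G$ of size $k$, Lemma~\ref{R2_S1_prosta_implikacja}'s partition lifts to a topological sorting in which the labeled vertices are exactly the $|\V|$ ``row-end'' vertices of the well-oriented gadgets plus the $k$ glue vertices corresponding to $\F$, giving a bound $K^*=|\V|\cdot c+k$ for a structural constant $c$ depending only on the gadget shape. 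Conversely, from any sorting with at most $K^*$ labels, the orientation argument of Lemma~\ref{modyf} would show that each $\gadget^*_v$ must be well-oriented, and the identification of ``unabsorbed'' glue vertices would then recover a feedback arc set of size $\leq k$ in the style of Lemma~\ref{construction}.

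The main obstacle is an asymmetry between the AHP objective and the 5th~MIL objective: the former charges once per maximal then-path (equivalently, per \emph{head} of a non-then arc), whereas the latter charges once per vertex that has \emph{any} outgoing non-then reference arc (per \emph{tail}). In particular, a vertex with out-$E_1^*$-degree $\geq 2$ is always labeled regardless of $\tau$, which collapses it into a free penalty that cannot be eliminated by clever ordering. The gadget construction must therefore be carefully ``thinned'' — each node that in the AHP construction carried several outgoing reference arcs has to be split into a short chain of auxiliary vertices connected by non-reference arcs, each new copy carrying at most one reference arc outward. This ensures that the labeling cost is incurred per reference arc rather than per source vertex, so that the translation $|\F|\leftrightarrow K^*-|\V|\cdot c$ is faithful. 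Verifying that this thinning preserves the acyclicity and Hamiltonian-path properties needed to transport Lemmas~\ref{R2_S1_T1}, \ref{modyf}, and \ref{construction} is where I expect the bulk of the technical work to lie; once that is in place, the theorem follows.
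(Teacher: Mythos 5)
Your NP-membership argument is fine, and you have correctly diagnosed the central obstacle: the 5th MIL objective charges per \emph{tail} vertex having some outgoing non-``then'' reference arc, not per maximal ``then''-path, so the \AHP{} gadget machinery does not transfer as is. However, what you offer for NP-hardness is a plan rather than a proof, and the plan has a genuine gap exactly at the point you defer. The gadget $\gadget_v$ has almost all of its internal vertices with out-degree $2$ inside $\gadgetE_v$; if you place all gadget arcs in $E_1^*$, every such vertex is labeled under \emph{every} topological sorting, so Lemma~\ref{R2_S1_T1} gives you no leverage on the label count, and the ``thinning'' you propose to fix this (splitting each such vertex into a chain joined by non-reference arcs) destroys the grid structure on which Lemmas~\ref{R2_S1_WN1}--\ref{R2_S1_T1} rest. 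More fundamentally, Lemmas~\ref{modyf} and~\ref{construction} are statements about $\mathcal{H}^{(\ast)}$-partitions, and you never establish any bridge between ``a sorting with few labeled tails'' and ``a small acyclic Hamiltonian partition''; asserting that the orientation argument ``would show'' well-orientation under the label-counting objective is precisely the missing step. The undefined ``absorption row'' and the unspecified constant $c$ cannot be filled in without essentially redoing all of Section~\ref{NPproof} for a different objective function.

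The paper avoids all of this with a short direct reduction from \VS{} (not \FAS{}). Given an undirected $G=\langle V,E\rangle$, it takes $V'=V\times\{0,1\}$, puts a reference arc $\langle\langle v,0\rangle,\langle v,1\rangle\rangle$ into $E_1$ for each $v\in V$, and adds a non-reference arc $\langle\langle v,0\rangle,\langle u,1\rangle\rangle$ for each edge $\{v,u\}\in E$ (in both directions). Each $\langle v,0\rangle$ has exactly one outgoing reference arc, so it is labeled iff $\langle v,1\rangle$ is not placed immediately after it; and for an edge $\{v,u\}$ the four arcs among $\langle v,0\rangle,\langle u,0\rangle,\langle v,1\rangle,\langle u,1\rangle$ force both $0$-vertices before both $1$-vertices, so at most one of the two reference arcs can be a ``then'' arc and at least one endpoint of every edge is labeled. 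This makes label-minimal sortings correspond exactly to minimum vertex covers, with no gadget forcing needed. If you want to salvage your route, you would have to prove a new forcing lemma for the label objective from scratch; the far easier path is the edge-gadget reduction above.
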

\begin{proof}
We transform \VS{} to $5$th \mbox{MIL}. Let an undirected graph $G=\langle V,E \rangle$ and a positive integer $K\leq |V|$ be
an instance of \VS{}. We can assume that $G$ is without self-loops, i.e.\ without edges that connects a vertex to itself. We
construct a directed graph $G^\prime = \langle V^\prime, E^\prime\rangle$ and a subset of arcs $E_1 \subset E^\prime$ such that
there exists a vertex cover of $G$ with the size at most $K$ if and only if there exists a topological sorting $\tau\in
\TS(G^\prime)$ for which $|\mathfrak{L}(\tau)|\leq K$, where
\begin{equation}\mathfrak{L}(\tau)\coloneq\{v\in V^\prime:\underset{u\in V^\prime}\exists \langle v,u\rangle \in
E_1\wedge\tau(u)-\tau(v) >1 \}.
\end{equation}
Let $G^\prime$, $E_1$ be defined by
\begin{equation}
\begin{array}{rcl}
V^\prime &=& V\times \{0,1\},\\ E^\prime&=&\{\langle \langle v,0\rangle,\langle v,1\rangle\rangle:v\in V\}\cup
\{\langle \langle v,0\rangle,\langle u,1\rangle\rangle:\{v,u\}\in E\},\\
E_1&=&\{\langle \langle v,0\rangle,\langle v,1\rangle\rangle:v\in V\}.\\
\end{array}\label{def_MIL_5}
\end{equation}
This translation can clearly be done in LOGSPACE. Notice that $G^\prime$ is acyclic since for each $v\in V$ the in-degree of
vertices $\langle v,0\rangle$ is 0 and out-degree of vertices $\langle v,1\rangle$ is 0. The main idea of the proofs uses the
fact that since an edge $\{v,u\}$ belongs to $E$, at least one of $\langle v,0\rangle$ or $\langle u,0\rangle$ has to belong to
$\mathfrak{L}(\tau)$, for every $\tau\in \TS(G^\prime)$ (see Fig.~\ref{MIL_5_proof}). This is a consequence of the simple
observation that at most one of two equalities $\tau(\langle v,0\rangle)+1 =\tau(\langle v,1\rangle)$, $\tau(\langle
u,0\rangle)+1 =\tau(\langle u,1\rangle)$ can hold for every $\tau\in \TS(G^\prime)$.
\begin{figure}
\begin{center}
\begin{tikzpicture}
\def\r{1cm}
\def\s{1.2cm}
[>=stealth,->,shorten >=2pt,looseness=.5,auto]
    \matrix (M)[matrix of math nodes,
        column sep={\s,between origins},
        row sep={\r,between origins},
        nodes={mynodes,color=black}]
{|(A00)|&|(A01)|\\ |(A10)|&|(A11)| \\ };
\node at (A00) [above left, inner sep=1mm]{$\langle v, 0 \rangle$};
\node at (A10) [above left, inner sep=1mm]{$\langle v, 1 \rangle$};
\node at (A01) [above right, inner sep=1mm]{$\langle u, 0 \rangle$};
\node at (A11) [above right, inner sep=1mm]{$\langle u, 1 \rangle$};
\draw[Referencja](A00)--(A10);
\draw[Referencja](A01)--(A11);
\draw[Referencja](A00)--(A11);
\draw[Referencja](A01)--(A10);
\end{tikzpicture}
\end{center}
\caption{The subgraph of $G^\prime$ that corresponds to an edge $\{v,u\}$, which illustrates the construction from the proof of
Theorem~\ref{MIL_5}.}
\label{MIL_5_proof}
\end{figure}

Let $\mathcal{V}$ be a vertex cover of $G$. Let us consider a partition $\pi(\mathcal{V})$ of $G^\prime$ defined by
\begin{equation}
\pi(\mathcal{V})\coloneq\{\{\langle v,0\rangle\}:v \in \mathcal{V}\}
\cup
\{\{\langle v,1\rangle\}:v \in \mathcal{V}\}
\cup
\{\{\langle v,0\rangle,\langle v,1\rangle\}:v\in V\setminus \mathcal{V}\}.
\end{equation}
Notice that the only variables in $\pidigraph{G^\prime}{\pi(\mathcal{V})}$ that have both in-degree and out-degree non-zero are
the ones in $\{\{\langle v,0\rangle,\langle v,1\rangle\}:v\in V\setminus \mathcal{V}\}$. Moreover, the vertices of
$\pidigraph{G^\prime}{\pi(\mathcal{V})}$ in this set are not connected as $\mathcal{V}$ is a vertex cover. Then it is easy to
check that $\pidigraph{G^\prime}{\pi(\mathcal{V})}$ is acyclic and, in consequence, there exists $\tau \in
\TS(\pidigraph{G^\prime}{\pi(\mathcal{V})})$. Let $\sigma:V^\prime\rightarrow|\{1,2,\ldots,V^\prime\}|$ be a function given by
$\sigma (\langle v,i\rangle) = j + \underset{R\in \pi(\mathcal{V}):\tau(R)<\tau(P)}{\sum}|R|$ where $P$  is the only one
element of $\pi(\mathcal{V})$ that contains $\langle v,i\rangle$; $j=1$ if $|P|=1$ and $j=i+1$ if $|P|=2$. It is also easy to
check that $\sigma\in \TS(G^\prime)$ and $\mathfrak{L}(\sigma)\subseteq \{\langle v,0\rangle:v \in\mathcal{V}\}$, hence finally
$|\mathfrak{L}(\sigma)|\leq |\mathcal{V}|$.

Let $\sigma$ be a topological sorting of $G^\prime$ and let $\mathcal{V}=\{v\in V:\langle v,0\rangle
\in\mathfrak{L}(\sigma)\}$. It is obvious that $|\mathcal{V}|\leq |\mathfrak{L}(\sigma)|\leq K$, hence we only need to show
that $\mathcal{V}$ is a vertex cover of $G$. Suppose, contrary to our claim, that there exists an edge $\{v,u\}\in E$ such that
$\{v,u\}\cap \mathcal{V}=\emptyset$. Then $\sigma(\langle v,1\rangle) - \sigma(\langle v,0\rangle)\leq 1$, since
 $\langle v,0\rangle$ is the tail of $\langle\langle v,0\rangle, \langle v,1\rangle\rangle\in E_1$.
In the similar way, we claim that $\sigma(\langle u,1\rangle) - \sigma(\langle u,0\rangle)\leq 1$. Since $\sigma$ is a
topological sorting of $G^\prime$ we have also that both $\sigma(\langle v,0\rangle),\sigma(\langle u,0\rangle)$ are less than
both $\sigma(\langle v,1\rangle),\sigma(\langle u,1\rangle)$ (see Fig.~\ref{MIL_5_proof}). Additionally, these inequalities are
between natural numbers, hence finally $\sigma(\langle v,0\rangle)= \sigma(\langle u,0\rangle)$,
 $\sigma(\langle v,1\rangle)=\sigma(\langle u,1\rangle)$ and $v=u$, but this contradicts our assumption that $G$ is without
 self-loops.
\end{proof}

\begin{thm}\label{MIL_5_MIZ}
The \mbox{$5$th} \mbox{MIL}$_{MIZ}$ problem is solvable in polynomial time.
\end{thm}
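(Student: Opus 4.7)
The plan is to observe that the label cost splits cleanly into a $\tau$-independent part and a single $\tau$-dependent subproblem which can be solved optimally by a simple matching argument, whose feasibility is guaranteed by the DAG hypothesis.

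Partition $V$ into three sets according to the out-edge structure in $G$: let $A$ consist of the vertices that have at least one $E_1$-out-edge \emph{and} total out-degree at least $2$; let $S$ consist of the vertices of total out-degree exactly $1$ whose unique out-edge lies in $E_1$ (for each $v\in S$, let $f(v)$ denote its unique out-neighbour); and let $R:=V\setminus(A\cup S)$, the vertices with no $E_1$-out-edge. A direct unpacking of the condition defining the label set shows that, for every $\tau\in\TS(G)$, the set of labeled vertices equals $A\cup\{v\in S:\tau(f(v))\neq\tau(v)+1\}$: each $v\in A$ is labeled because either it has a non-$E_1$-out-edge, or it has two $E_1$-out-neighbours of which at most one can occupy position $\tau(v)+1$; each $v\in R$ escapes the label set, having no $E_1$-out-edge; and each $v\in S$ is labeled exactly when its unique out-neighbour is not placed immediately after it. Hence the minimisation reduces to maximising $\mu(\tau):=|\{v\in S:\tau(f(v))=\tau(v)+1\}|$ over $\tau\in\TS(G)$.

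The upper bound $\mu(\tau)\le|f(S)|$ is immediate: if two distinct $v_1,v_2\in S$ both satisfy the consecutivity condition and $f(v_1)=f(v_2)$, then $\tau(v_1)+1=\tau(f(v_1))=\tau(f(v_2))=\tau(v_2)+1$, contradicting injectivity of $\tau$. For the matching lower bound, pick any representative $v_u\in f^{-1}(u)\cap S$ for every $u\in f(S)$ and set $M:=\{\langle v_u,u\rangle:u\in f(S)\}$. The arcs of $M$ form pairwise vertex-disjoint directed paths $P_1,\ldots,P_\ell$ in $G$. The crucial observation is that every non-terminal vertex $w$ of any $P_j$ lies in $S$, hence $|N^+_G(w)|=1$, so every arc of $G$ leaving $V(P_j)$ departs from the last vertex of $P_j$. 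Therefore a hypothetical directed cycle $C_1\to C_2\to\cdots\to C_m\to C_1$ in the contracted digraph $G/M$ would lift to a directed cycle in $G$: enter each super-vertex at the head of the external arc entering it, walk forward along the internal path to its exit vertex, then follow the next external arc. Since $G$ is acyclic, no such cycle exists, so $G/M$ is itself acyclic. Any topological sorting of $G/M$, expanded by unfolding each super-vertex along the internal direction of its path, yields a $\tau\in\TS(G)$ in which every arc of $M$ occupies a pair of consecutive positions, and so $\mu(\tau)\ge|M|=|f(S)|$.

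Combining the two bounds, the minimum number of labels equals $|A|+|S|-|f(S)|$, a quantity computable from $G$ and $E_1$ in time $O(|V|+|E|)$; comparing it with $K$ decides \mbox{$5$th MIL$_{MIZ}$} in polynomial time. The main technical step is the acyclicity of the contraction $G/M$, which relies essentially on the defining condition that every vertex of $S$ has out-degree exactly $1$ in $G$; this is precisely the feature that fails in the construction behind Theorem~\ref{MIL_5} (where vertices with several out-edges are freely introduced) and that makes the Mizar-restricted problem tractable.
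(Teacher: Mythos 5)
Your proposal is correct and follows essentially the same route as the paper: your set $A$ is the paper's $L_1$, your non-representative vertices $S\setminus\{v_u:u\in f(S)\}$ are its $L_2$, the count $|A|+|S|-|f(S)|$ coincides with $|L_1\cup L_2|$, and your matching $M$ is exactly the arc set $E_1\setminus(\mathfrak{R}_1\cup\mathfrak{R}_2)$ whose maximal paths the paper contracts and topologically sorts. The only difference is one of exposition: your explicit lifting argument for the acyclicity of $G/M$ (resting on the fact that every non-terminal vertex of a path in $M$ has out-degree $1$) spells out what the paper asserts in a single clause.
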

\begin{proof}
Let us consider an abstract proof graph $G=\langle V,E\rangle$ and an arbitrary subset $E_1$ of $E$. Note that every step
corresponding to a vertex $v\in V$ for which one of the following properties holds
\begin{enumerate}[label=(\roman*)]
\item $|\mathcal{N}^+_{\langle V,E_1\rangle}(v)|>1$ or
\item $|\mathcal{N}^+_{\langle V,E_1\rangle}(v)|=1$ and
$|\mathcal{N}^+_G(v)|>1$
\end{enumerate}
has to contain a label. Denote by $L_1$ the set of vertices that satisfy one of these conditions and let
$\mathfrak{R}_1\coloneq\{\langle v,u\rangle\in E_1:v\in L_1\}$. Naturally, at most one piece of information belonging to
$\mathcal{N}^-_{\langle V, E_1 \rangle}(v)$, can be passed on to justification of $v$ by the ``then'' construction, hence the
other premises have to have labels. Let us construct a set $L_2$ that contains these premises. For each non-empty
$\mathcal{N}^-_{\langle V, E_1 \setminus\mathfrak{R}_1\rangle}(v)$ choose its subset $A_v$ of size  $|\mathcal{N}^-_{\langle V,
E_1 \setminus\mathfrak{R}_1\rangle}(v)|-1$ and for empty $\mathcal{N}^-_{\langle V, E_1
\setminus\mathfrak{R}_1\rangle}(v)$ let $A_v=\emptyset$. Let $L_2\subseteq V\setminus L_1$ be a union of $A_v$ for all $v\in
V$. As $\mathcal{N}^-_{\langle V, E_1 \setminus\mathfrak{R}_1\rangle}(v_1)\cap
\mathcal{N}^-_{\langle V, E_1 \setminus\mathfrak{R}_1\rangle}(v_2)=\emptyset$ for different
$v_1,v_2 \in V$, the choice of $A_v$ has no impact on the size of $L_2$.

With these considerations in mind we can conclude  that every proof script has to have at least $|L_1\cup L_2|$ labels.
Obviously {sets $L_1$, $L_2$ can be determined in polynomial time.} The proof is completed by showing that there exists a proof
linearisation which uses exactly $|L_1\cup L_2|$ labels. For this purpose let us note that maximal length paths in the DAG
$R\coloneq \langle V,E_1\setminus(\mathfrak{R}_1\cup\mathfrak{R}_2)\rangle$ are node-disjoint and the graph
$\pidigraph{R}{\pi}$ is acyclic since only the last vertex in every maximal length path of $R$ can have the out-degree in $G$
greater than 1, where $\pi$ is a partition of $V$ defined by $\pi\coloneq
\{\vertexOF{P}:P\;\mbox{is}\;\mbox{a}\;\mbox{maximal}\; \mbox{length}\;\mbox{path}\;\mbox{in}\;R\}$, {and $\mathfrak{R}_2
=\{\langle v,u\rangle\in E_1 :v\in L_2\}$.} Obviously, vertices belonging to every element of $\pi$ can be ordered into a
$\tau$-reasoning path and then setting these paths in order determined by an arbitrary topological sorting of
$\pidigraph{R}{\pi}$ we get a proof linearisation.
\end{proof}

\section{Conclusion}\label{Conclusion}

In this paper we consider several problems arising in the course of improving readability {of proof scripts}. We concentrated
on two methods of  improving proof readability based on Behaghel's First Law. For these two methods we have found a common,
unexplored problem \AHP{} concerning graph partitions. This problem is similar to the problem of finding a partition of \DAG{}
into at most $K$ node-disjoint paths, which is solvable in polynomial time \cite{DAGpartition}. With this in mind we can
observe that the requirement of acyclicity imposed on the members of the partition is essential. Precisely this is the
difference here.

{Other three methods of  improving proof readability are also solved but their solution was much simpler. These three methods
generalize the previously described and solved graph problems. Additionally, many approximate solutions of these problems have
been already found that can be adapted to the proof script case. }

\par The next step in the process of improving proof readability should be finding the algorithms that approximate the problem
\AHP{} as well as the algorithms that approximate problems $1$st \mbox{MIL} and $2$nd \mbox{MIL}.
It is also important to find heuristics that quickly give satisfactory approximate solutions to all five \mbox{MIL} problems.
This should enable the possibility to make a decision which methods are more appropriate in particular cases.

\end{document}